\let\epsilon\varepsilon
\let\phi\varphi
\let\rho\varrho
\newcommand{\PosSLP}{\textsc{PosSLP}}
\newclass{\PosSLPcx}{PosSLP}
\renewclass{\PSPACE}{PSpace}
\renewclass{\NEXP}{NExp}
\let\oldqed\qed
\renewcommand\qed{\mbox{}\hfill$\oldqed$}
\newcommand*\patchAmsMathEnvironmentForLineno[1]{%
  \expandafter\let\csname old#1\expandafter\endcsname\csname #1\endcsname
  \expandafter\let\csname oldend#1\expandafter\endcsname\csname end#1\endcsname
  \renewenvironment{#1}%
  {\linenomath\csname old#1\endcsname}%
  {\csname oldend#1\endcsname\endlinenomath}}%
\newcommand*\patchBothAmsMathEnvironmentsForLineno[1]{%
  \patchAmsMathEnvironmentForLineno{#1}%
  \patchAmsMathEnvironmentForLineno{#1*}}%
\newcommand{\defeq}{\overset{\mathrm{def}}{=}}
\newcommand{\st}{\mathrel{\mid}}
\newcommand{\calG}{\mathcal{G}}
\newcommand{\eve}{Eve\xspace}
\newcommand{\adam}{Adam\xspace}
\newcommand{\regret}[2]{\mathbf{Reg}\left(#2\right)}
\newcommand{\Regret}[1]{\mathbf{Reg}}
\newcommand{\Val}{\mathbf{Val}}
\newcommand{\out}[3]{\mathbf{out}^{#1}(#2,#3)}
\newcommand{\histe}[1]{\mathbf{Hist}}
\newcommand{\aVal}{\mathbf{aVal}}
\newcommand{\cVal}{\mathbf{cVal}}
\newcommand{\acVal}{\mathbf{acVal}}
\newcommand{\cOpt}{\mathbf{cOpt}}
\newcommand{\prefeq}{\mathrel{\subseteq_{\mathrm{pref}}}}
\newcommand{\sbostrat}{\sigma^{\mathrm{sbo}}}
\newcommand{\swostrat}{\sigma^{\mathrm{swo}}}
\newcommand{\sbwostrat}{\sigma^{\mathrm{sbwo}}}
\newcommand{\switch}[3]{#1\overset{#2}{\mathrm{\rightarrow}}#3}
\newcommand{\wmax}{w_{\mathrm{\max}}}
\newcommand{\resp}{resp.~\ignorespaces}
\renewenvironment{problem}{%
  \par\noindent%
  \bgroup
  \def\given{\textbf{Given:}}%
  \def\quest{\textbf{Question:}}%
    
  \begin{tabular}{@{~~}|@{~}ll}%
    \rule[1em]{0pt}{0em}\ignorespaces
    }{\rule[-1em]{0pt}{0em}%
    \\\multicolumn{2}{@{~~}l}{\rule[1.2em]{1cm}{.4pt}}\\[-1em]
  \end{tabular}
  \egroup\par
}
\begin{document}

\title{The Impatient May Use Limited Optimism\\ to Minimize Regret}

\author{Micha\"el Cadilhac\inst{1}
\and
Guillermo A. P\'erez\inst{2}
\and
Marie van den Bogaard\inst{3}}

\institute{
University of Oxford\\
\email{michael@cadilhac.name}
\and
University of Antwerp\\
\email{guillermoalberto.perez@uantwerpen.be}
\and
Universit\'e libre de Bruxelles\\
\email{marie.van.den.bogaard@ulb.ac.be}}

\maketitle

\begin{abstract}
  Discounted-sum games provide a formal model for the study of reinforcement
  learning, where the agent is enticed to get rewards early since later rewards
  are discounted.  When the agent interacts with the environment, she may regret
  her actions, realizing that a previous choice was suboptimal given the
  behavior of the environment.  The main contribution of this paper is a \PSPACE{}
  algorithm for computing the minimum possible regret of a given game. To this
  end, several results of independent interest are shown. (1) We identify a
  class of regret-minimizing and admissible strategies that first assume that
  the environment is collaborating, then assume it is adversarial---the precise
  timing of the switch is key here. (2) Disregarding the computational cost of
  numerical analysis, we provide an \NP{} algorithm that checks that the regret
  entailed by a given time-switching strategy exceeds a given value. (3) We show
  that determining whether a strategy minimizes regret is decidable in \PSPACE{}.
\end{abstract}

\keywords{Admissibility \and 
Discounted-sum games \and
Regret minimization}

\section{Introduction}\label{sec:intro}
A pervasive model used to study the strategies of an agent in an unknown
environment is \emph{two-player infinite horizon games played on finite weighted
  graphs}.  Therein, the set of vertices of a graph is split between two
players, Adam and Eve, playing the roles of the environment and the agent,
respectively.  The play starts in a specific vertex, and each player decides
where to go next when the play reaches one of their vertices.  Questions asked
about these games are usually of the form: \emph{Does there exist a strategy of
  Eve such that\ldots?}  For such a question to be well-formed, one should provide:
\begin{enumerate}
\item A valuation function: given an infinite play, what is Eve's reward?
\item Assumptions about the environment: is Adam trying to help or hinder Eve?
\end{enumerate}

The valuation function can be Boolean, in which case one says that Eve
\emph{wins} or \emph{loses} (one very classical example has Eve winning if the
maximum value appearing infinitely often along the edges is even).  In this
setting, it is often assumed that Adam is adversarial, and the question then
becomes: \emph{Can Eve always win?}  (The names of the players stem from this
view: \emph{is there} a strategy of \(\exists\)ve that \emph{always} beats
\(\forall\)dam?)  The literature on that subject spans more than 35 years, with newly
found applications to this day (see~\cite{ag11} for comprehensive lecture notes,
and~\cite{cgiv18} for an example of recent use in the analysis of attacks in
cryptocurrencies).

The valuation function can also aggregate the numerical values along the edges
into a reward value.  We focus in this paper on \emph{discounted sum}: if \(w\) is
the weight of the edge taken at the \(n\)-th step, Eve's reward grows by
\(\lambda^n\cdot w\), where \(\lambda \in (0,1)\) is a prescribed discount factor.  Discounting future
rewards is a classical notion used in economics~\cite{shapley53}, Markov
decision processes~\cite{puterman05,fv12}, systems theory~\cite{ahm03}, and is
at the heart of Q-learning, a reinforcement learning technique widely used in
machine learning~\cite{wd92}.  In this setting, we consider three attitudes
towards the environment:
\begin{enumerate}
\item The adversarial environment hypothesis translates to Adam trying to
  minimize Eve's reward, and the question becomes: \emph{Can Eve always achieve
    a reward of \(x\)?}  This problem is in \(\NP \cap \coNP\)~\cite{zp96} and showing
  a \(\P\) upper-bound would constitute a major breakthrough (namely, it would
  imply the same for so-called parity games~\cite{jurdzinski98}).  A strategy of
  Eve that maximizes her rewards against an adversarial environment is called
  \emph{worst-case optimal}.  Conversely, a strategy that maximizes her rewards
  assuming a \emph{collaborative} environment is called \emph{best-case
    optimal}.

\item Assuming that the environment is adversarial is drastic, if not
  pessimistic.  Eve could rather be interested in settling for a strategy
  \(\sigma\) which is not \emph{consistently} bad: if another strategy
  \(\sigma'\) gives a better reward in one environment, there should be another
  environment for which \(\sigma\) is better than \(\sigma'\).  Such strategies, called
  \emph{admissible}~\cite{bchprrs16}, can be seen as an \emph{a priori} rational
  choice.

\item Finally, Eve could put no assumption on the environment, but regret not
  having done so.  Formally, the \emph{regret value} of Eve's strategy is
  defined as the maximal difference, for all environments, between the best
  value Eve \emph{could} have obtained and the value she actually obtained.  Eve
  can thus be interested in following a strategy that achieves the minimal
  regret value, aptly called a \emph{regret-minimal} strategy~\cite{fgr10}.
  This constitutes an \emph{a posteriori} rational choice~\cite{hp12}.
  Regret-minimal strategies were explored in several contexts, with applications
  including competitive online algorithm synthesis and robot-motion
  planning~\cite{akl10,fjlpr17,hpr16,hpr17}.
\end{enumerate}

In this paper, we single out a class of strategies for Eve that first follow a
best-case optimal strategy, then switch to a worst-case optimal strategy after
some precise time; we call these strategies \emph{optipess}.  Our main
contributions are then:
\begin{enumerate}
\item Optipess strategies are not only regret-minimal (a fact established
  in~\cite{hpr16}) but also admissible---note that there are regret-minimal
  strategies that are not admissible and \emph{vice versa} (see Appendix).  On
  the way, we show that for any strategy of Eve there is an admissible strategy
  that performs at least as well; this is a peculiarity of discounted-sum
  games.
\item The regret value of a given time-switching strategy can be computed with
  an \(\NP\) algorithm (disregarding the cost of numerical analysis).  The main
  technical hurdle is showing that exponentially long paths can be represented
  succinctly, a result of independent interest.
\item The question \emph{Can Eve's regret be bounded by \(x\)?} is decidable in
  \(\NP^\coNP\), improving on the implicit \NEXP{} algorithm of~\cite{hpr16}.  The
  algorithm consists in guessing a time-switching strategy and computing its
  regret value; since optipess strategies are time-switching strategies that are
  regret-minimal, the algorithm will eventually find the minimal regret value of
  the input game.
\end{enumerate}

Notations and definitions are introduced in Section~\ref{sec:preliminaries}.
The study of admissible regret-minimal strategies is done in
Section~\ref{sec:admissibility}.  In Section~\ref{sec:minval}, we provide an
important lemma that allows to represent long paths succinctly.  In
Section~\ref{sec:witnesses}, we argue that the important values of a game
(regret, best-case, worst-case) have short witnesses.  Finally, in
Section~\ref{sec:algos}, we rely on these lemmas to present our new algorithms.

\section{Preliminaries}\label{sec:preliminaries}
We assume familiarity with basic graph and complexity theory.  Some more
specific definitions and known results are recalled here.


\paragraph{Game, play, history.}

A \emph{(discounted-sum) game} $\calG$ is a tuple
$(V,v_0, V_\exists,E,w,\lambda)$ where $V$ is a finite set of vertices, \(v_0\) is
the starting vertex, $V_\exists \subseteq V$ is the subset of vertices that
belong to \eve, $E \subseteq V \times V$ is a set of directed edges,
$w \colon E \to \mathbb{Z}$ is an (edge-)weight function, and $0 < \lambda < 1$
is a rational \emph{discount factor}.  The vertices in $V \setminus V_\exists$
are said to belong to \adam.  Since we consider games played for an infinite
number of turns, we will always assume that every vertex has at least one
outgoing edge.

A \emph{play} is an infinite path $v_1 v_2 \cdots \in V^\omega$ in the digraph
$(V,E)$.  A \emph{history} $h = v_1 \cdots v_n$ is a finite path.  The
\emph{length of \(h\)}, written \(|h|\), is the number of \emph{edges} it contains:
$|h| \defeq n - 1$.  The set $\histe{v_0}$ consists of all histories that start
in \(v_0\) and end in a vertex from $V_\exists$.

\paragraph{Strategies.}

A \emph{strategy of \eve} in $\calG$ is a function $\sigma$ that maps histories
ending in some vertex $v \in V_\exists$ to a neighbouring vertex \(v'\) (i.e.,
\((v,v') \in E\)).  The strategy $\sigma$ is \emph{positional} if for all histories
\(h, h'\) ending in the same vertex, $\sigma(h) = \sigma(h')$.  \emph{Strategies of \adam}
are defined similarly.

A history $h = v_1 \cdots v_n$ is said to be \emph{consistent with a strategy}
$\sigma$ of \eve if for all $i \geq 2$ such that $v_i \in V_\exists$, we have
that $\sigma(v_1 \cdots v_{i-1}) = v_i$.  Consistency with strategies of \adam is
defined similarly.  We write $\histe{v_0}(\sigma)$ for the set of histories in
$\histe{v_0}$ that are consistent with $\sigma$.  A play is consistent with a
strategy (of either player) if all its prefixes are consistent with it.

Given a vertex \(v\) and both Adam and Eve's strategies, \(\tau\) and
\(\sigma\) respectively, there is a unique play starting in \(v\) that is
consistent with both, called the \emph{outcome} of \(\tau\) and \(\sigma\) on
\(v\).  This play is denoted $\out{v}{\sigma}{\tau}$.

For a strategy $\sigma$ of \eve and a history $h \in \histe{v_0}(\sigma)$, we
let $\sigma_h$ be the strategy of \eve that assumes \(h\) has already been played.
Formally, \(\sigma_h(h') = \sigma(h \cdot h') \) for any history $h'$ (we will
use this notation only on histories \(h'\) that start with the ending vertex of
\(h\)).

\paragraph{Values.}

The \emph{value of a history} $h = v_1 \cdots v_n$ is the discounted sum of the
weights on the edges:
    \[
        \Val(h) \defeq \sum_{i=0}^{|h|-1} \lambda^i w(v_i,v_{i+1})\enspace.
    \]
The \emph{value of a play} is simply the limit of the values of its prefixes.

The \emph{antagonistic value} of a strategy $\sigma$ of \eve with history
$h = v_1 \cdots v_n$ is the value Eve achieves when Adam tries to hinder her,
after \(h\):
\[ \aVal^h(\sigma) \defeq \Val(h) + \lambda^{|h|} \cdot
\inf_\tau \Val(\out{v_n}{\sigma_h}{\tau})\enspace,\]
where $\tau$ ranges over all strategies of \adam.
The \emph{collaborative value} \(\cVal^h(\sigma)\) is defined in a similar way, by
substituting ``\(\sup\)'' for ``\(\inf\).''  We write \(\aVal^h\) (\resp
\(\cVal^h\)) for the best antagonistic (\resp collaborative) value achievable by
Eve with any strategy.

\paragraph{Types of strategies.}

A strategy $\sigma$ of \eve is \emph{strongly worst-case optimal} (SWO) if for
every history $h$ we have $\aVal^{h}(\sigma)= \aVal^{h}$; it is \emph{strongly
  best-case optimal}~(SBO) if for every history $h$ we have
$\cVal^{h}(\sigma)= \cVal^{h}$.

We single out a class of SWO strategies that perform well if Adam turns out to
be helping.  A SWO strategy $\sigma$ of \eve is \emph{strongly best worst-case
  optimal}~(SBWO) if for every history $h$ we have
$\cVal^{h}(\sigma)= \acVal^{h}$, where:
    \[
    \acVal^h \defeq \sup \lbrace \cVal^{h}(\sigma') \st \sigma' \text{ is a
      SWO strategy of \eve} \rbrace\enspace.
    \]


In the context of discounted-sum games, strategies that are positional and
strongly optimal always exist. Furthermore, the set of all such strategies can
be characterized by local conditions.
\begin{lemma}[Follows from~{\cite[Theorem 5.1]{zp96}}]\label{lem:charac-so}
  There exist positional SWO, SBO, and SBWO strategies in every game.  For any
  positional strategy $\sigma$ of \eve:
  \begin{itemize}
  \item $(\forall v \in V)\left[\aVal^{v}(\sigma) = \aVal^{v}\right]$ iff
    $\sigma$ is SWO;
  \item $(\forall v \in V)\left[\cVal^{v}(\sigma) = \cVal^{v}\right]$ iff
    $\sigma$ is SBO;
  \item
    \((\forall v \in V)\left[\aVal^v(\sigma) = \aVal^v \land \cVal^v(\sigma) =
    \acVal^v\right]\) iff $\sigma$ is SBWO.
  \end{itemize}
\end{lemma}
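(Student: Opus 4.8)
The plan is to reduce every equivalence to a single \emph{shift identity} expressing how the value after a history $h$ ending in a vertex $v$ decomposes as $\Val(h)$ plus a discounted copy of the corresponding value at $v$. First I would establish, for the best values over all strategies, the identities
\[
  \aVal^h = \Val(h) + \lambda^{|h|}\aVal^{v}, \quad
  \cVal^h = \Val(h) + \lambda^{|h|}\cVal^{v}, \quad
  \acVal^h = \Val(h) + \lambda^{|h|}\acVal^{v},
\]
together with their pointwise analogues for a fixed positional $\sigma$, namely $\aVal^h(\sigma) = \Val(h) + \lambda^{|h|}\aVal^{v}(\sigma)$ and $\cVal^h(\sigma) = \Val(h) + \lambda^{|h|}\cVal^{v}(\sigma)$. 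The pointwise identities are immediate from the definitions of $\aVal^h(\sigma)$ and $\cVal^h(\sigma)$ once one observes that positionality gives $\sigma_h = \sigma$ on plays issued from $v$, so that $\inf_\tau \Val(\out{v}{\sigma_h}{\tau})$ equals $\aVal^v(\sigma)$ (resp.\ the $\sup$ equals $\cVal^v(\sigma)$). The first two identities over all strategies follow by the same reindexing of the outer $\sup$/$\inf$ over continuations, using that every continuation strategy from $v$ is realized as some $\sigma'_h$; the $\acVal$ identity, however, requires a separate argument given below.

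Given these identities, the SWO and SBO equivalences are one line each. For the forward direction I specialize the ``for every history'' hypothesis to the length-$0$ histories $h = v$, recovering the pointwise condition at each $v \in V$. For the backward direction I take any history $h$ ending in $v$ and compute $\aVal^h(\sigma) = \Val(h) + \lambda^{|h|}\aVal^v(\sigma) = \Val(h) + \lambda^{|h|}\aVal^v = \aVal^h$, where the middle equality is the hypothesis $\aVal^v(\sigma)=\aVal^v$ and $\lambda^{|h|}>0$ ensures nothing is lost; the SBO case is identical with $\cVal$.

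The SBWO equivalence is the delicate part. Since its SWO component is already handled, it suffices to show, for a positional SWO strategy $\sigma$, that $\cVal^h(\sigma)=\acVal^h$ for all $h$ iff $\cVal^v(\sigma)=\acVal^v$ for all $v$, and by the shift identities this again collapses to cancelling $\lambda^{|h|}$---\emph{provided} $\acVal^h = \Val(h)+\lambda^{|h|}\acVal^v$. Proving this is the main obstacle, because $\acVal$ is a supremum over the \emph{constrained} class of SWO strategies rather than over all strategies. The inequality ``$\le$'' I would get by showing that if $\sigma'$ is SWO then its residual $\sigma'_h$ is SWO in the subgame rooted at $v$ (itself a consequence of the shift identity for $\aVal$, exactly as in the SWO argument), whence $\cVal^v(\sigma'_h)\le\acVal^v$ and $\cVal^h(\sigma')\le \Val(h)+\lambda^{|h|}\acVal^v$. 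For the reverse inequality I would invoke the existence of a \emph{positional} SBWO strategy $\sigma^\ast$; being positional it satisfies $\cVal^h(\sigma^\ast)=\Val(h)+\lambda^{|h|}\cVal^v(\sigma^\ast)=\Val(h)+\lambda^{|h|}\acVal^v$, and since $\sigma^\ast$ is SWO it is among the strategies over which $\acVal^h$ is the supremum, so $\acVal^h\ge \Val(h)+\lambda^{|h|}\acVal^v$.

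Finally, the existence claims are what tie the argument to Zwick--Paterson. Positional SWO strategies exist by positional determinacy of discounted-sum games, and the antagonistic values satisfy the Bellman equation $\aVal^v=\max/\min_{(v,v')\in E}[w(v,v')+\lambda\aVal^{v'}]$; positional SBO strategies exist because $\cVal$ is a one-player (both players maximizing) discounted optimization with an analogous fixed point. For positional SBWO I would restrict Eve, at each of her vertices, to the edges attaining the maximum in the antagonistic equation (the ``optimal'' edges that any SWO strategy must use) and then solve the collaborative optimization on this restricted game; $\acVal$ satisfies the induced Bellman equation, and its positional maximizer $\sigma^\ast$ is simultaneously SWO (it uses only optimal edges) and collaboratively optimal among SWO strategies. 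The one genuinely load-bearing fact I would extract carefully from Zwick--Paterson is that, for positional strategies, using only optimal edges is equivalent to $\aVal^v(\sigma)=\aVal^v$ everywhere; with that in hand the existence of $\sigma^\ast$, and hence the $\acVal$ identity above, go through.
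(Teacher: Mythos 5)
Your proposal is correct and follows essentially the same route as the paper's appendix proof: the load-bearing step in both is to restrict Eve to the edges attaining the Bellman optimum for $\aVal$, observe that SWO strategies are exactly the strategies of this restricted game, and obtain a positional SBWO strategy as a positional SBO strategy there. The only difference is one of detail: you spell out the shift identities and the $\acVal^h = \Val(h)+\lambda^{|h|}\acVal^{v}$ decomposition that reduce the ``for every history'' conditions to local ones, whereas the paper simply defers these characterizations to Zwick--Paterson's Theorem~5.1 and the definition of $\acVal$.
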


\paragraph{Regret.}

The \emph{regret} of a strategy $\sigma$ of \eve is the maximal difference
between the value obtained by using $\sigma$ and the value obtained by using an
alternative strategy:
\[
\regret{v_0}{\sigma} \defeq \sup_{\tau}\left(
  \left(\sup_{\sigma'} \Val(\out{v_0}{\sigma'}{\tau})\right) -
  \Val(\out{v_0}{\sigma}{\tau})\right)\enspace,
\]
where $\tau$ and $\sigma'$ range over all strategies of \adam and \eve,
respectively. The \emph{(minimal) regret of $\calG$} is then
$\Regret{v_0} \defeq \inf_\sigma \regret{v_0}{\sigma}$.

Regret can also be characterized by considering the point in history when Eve
should have done things differently.  Formally, for any vertices~\(u\) and
\(v\) let \( \cVal^{u}_{\lnot v}\) be the maximal \(\cVal^{u}(\sigma)\) for
strategies \(\sigma\) verifying \(\sigma(u) \neq v.\) Then:
\begin{lemma}[{\cite[Lemma 13]{hpr16}}]\label{lem:reg-by-local}
    For all strategies~$\sigma$ of \eve:
    \[
        \regret{v_0}{\sigma} = \sup
        \left\{
        \lambda^n\left(
            \cVal^{v_n}_{\lnot \sigma(h)} - \aVal^{v_n}({\sigma_h)}
        \right)
        \:\middle|\:
        h = v_0 \cdots v_n \in \histe{v_0}(\sigma)
        \right\}\enspace.
    \]
\end{lemma}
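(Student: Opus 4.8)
The plan is to establish the two inequalities separately, both organised around the \emph{last point of agreement} between $\sigma$ and a competing strategy. The central observation is that once \adam's strategy $\tau$ is fixed, the play $\out{v_0}{\sigma}{\tau}$ and the play produced by any alternative $\sigma'$ share a common prefix and then, if they differ at all, diverge at a vertex owned by \eve: at an \adam-vertex both plays follow the move dictated by $\tau$, so the first discrepancy is necessarily a choice $\sigma'(h)\neq\sigma(h)$ made after some history $h=v_0\cdots v_n\in\histe{v_0}(\sigma)$. Factoring the common prefix $\Val(h)$ out of the discounted sum turns the difference of the two play-values into $\lambda^{n}$ times a difference of two \emph{continuation} values rooted at $v_n$, which is exactly the shape of the claimed right-hand side.

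For the inequality $\regret{v_0}{\sigma}\le(\text{the supremum})$ I would fix arbitrary $\tau$ and $\sigma'$. If the two outcomes coincide the contribution is $0$; otherwise, letting $h$ be their longest common prefix, I would write
\[
\Val(\out{v_0}{\sigma'}{\tau})-\Val(\out{v_0}{\sigma}{\tau})
=\lambda^{n}\bigl(\Val(\out{v_n}{\sigma'_h}{\tau_h})-\Val(\out{v_n}{\sigma_h}{\tau_h})\bigr),
\]
where $\tau_h$ is the continuation of $\tau$ after $h$. Since $\sigma'_h$ plays a first move distinct from $\sigma(h)$, its value against the \emph{particular} strategy $\tau_h$ is at most the best collaborative value subject to that constraint, i.e.\ at most $\cVal^{v_n}_{\lnot\sigma(h)}$; dually, $\sigma_h$'s value against $\tau_h$ is at least its antagonistic value $\aVal^{v_n}(\sigma_h)$. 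Taking the supremum over $\tau$ and $\sigma'$ on the left and over $h$ on the right yields the bound.

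For the reverse inequality I would fix $h=v_0\cdots v_n\in\histe{v_0}(\sigma)$ and $\epsilon>0$ and then \emph{build a single} \adam strategy $\tau$ realising the corresponding term up to $\epsilon$. Concretely, $\tau$ first forces the prefix $h$ (on every \adam-owned vertex of $h$ it plays the successor prescribed by $h$); it then reacts to \eve's move at $v_n$: if \eve plays $\sigma(h)$ it continues with an $\epsilon$-antagonistic response witnessing $\aVal^{v_n}(\sigma_h)$, and if \eve plays the $\epsilon$-collaborative deviation $v'\neq\sigma(h)$ witnessing $\cVal^{v_n}_{\lnot\sigma(h)}$ it continues with the matching cooperative response. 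Against this $\tau$, strategy $\sigma$ yields at most $\Val(h)+\lambda^{n}(\aVal^{v_n}(\sigma_h)+\epsilon)$, while the alternative that agrees with $\sigma$ on $h$ and then deviates yields at least $\Val(h)+\lambda^{n}(\cVal^{v_n}_{\lnot\sigma(h)}-\epsilon)$; their difference lower-bounds $\regret{v_0}{\sigma}$, and letting $\epsilon\to0$ and taking the supremum over $h$ closes the gap.

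The hard part is this last construction: one needs a single $\tau$ that is simultaneously adversarial against $\sigma$'s continuation and cooperative with the deviating strategy. This is possible precisely because, after the forced prefix $h$, \adam's prescription may branch on \eve's move at $v_n$---adversarial on the branch $\sigma(h)$, cooperative on the branch $v'$---so the two regimes never conflict. The remaining care is routine: handling the suprema and infima in $\aVal$ and $\cVal$ through the $\epsilon$-approximations above (or, if one prefers exact witnesses, appealing to the positional optimal strategies of Lemma~\ref{lem:charac-so}), verifying that the divergence vertex always belongs to \eve, and disposing of the degenerate case where $v_n$ admits $\sigma(h)$ as its only successor, which contributes nothing to the supremum.
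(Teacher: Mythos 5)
The paper offers no proof of this lemma---it is imported as-is from~\cite{hpr16}---so there is nothing in the text to compare your argument against; assessed on its own terms, your reconstruction is essentially correct and is the standard argument for this characterization. For the direction $\regret{v_0}{\sigma}\le\sup\{\cdots\}$, the key observation that, once $\tau$ is fixed, $\out{v_0}{\sigma}{\tau}$ and $\out{v_0}{\sigma'}{\tau}$ can only diverge at a vertex owned by \eve{} (so their longest common prefix $h$ lies in $\histe{v_0}(\sigma)$, and the value gap factors as $\lambda^{n}$ times a difference of continuation values bounded above by $\cVal^{v_n}_{\lnot\sigma(h)}$ and below by $\aVal^{v_n}(\sigma_h)$) is exactly right. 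For the converse direction, your single \adam{} strategy that forces the prefix $h$ and then branches on \eve's move at $v_n$---adversarial after $\sigma(h)$, cooperative after the deviation---is well defined because the two regimes concern disjoint sets of histories, as you note; the $\epsilon$-witnesses can be replaced by exact ones using the positional optimal strategies of Lemma~\ref{lem:charac-so} together with the continuity argument the appendix gives for collaborative values. The one step you assert rather than prove is that the case $\out{v_0}{\sigma}{\tau}=\out{v_0}{\sigma'}{\tau}$ ``contributes $0$'': for the upper bound this silently requires the right-hand supremum to be nonnegative, which does not follow from your decomposition (consider a game where \eve's prescribed move is strictly better than every alternative at every reachable vertex); this is as much a boundary issue with the statement and its conventions as with your proof, but a careful write-up should address it explicitly.
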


\paragraph{Switching and optipess strategies.}

Given strategies $\sigma_1,\sigma_2$ of \eve and a \emph{threshold function}
$t \colon V_\exists \to \mathbb{N} \cup \{\infty\}$, we define the
\emph{switching strategy} $\switch{\sigma_1}{t}{\sigma_2}$ for any history
$h = v_1 \cdots v_n$ ending in \(V_\exists\) as:
\[
    \switch{\sigma_1}{t}{\sigma_2}(h) = \begin{cases}
        \sigma_2(h) & \text{if } (\exists i)[i \geq t(v_i)],\\
        \sigma_1(h) & \text{otherwise.}
    \end{cases}
\]
We refer to histories for which the first condition above holds as
\emph{switched histories}, to all others as \emph{unswitched histories}.  The
strategy is said to be \emph{bipositional} if both \(\sigma_1\) and \(\sigma_2\) are
positional.  Note that in that case, if \(h\) is switched then
\(\sigma_h = \sigma_2\), and otherwise \(\sigma_h\) is the same as \(\sigma\) but with
\(t(v)\) changed to \(\max\{0, t(v) - |h|\}\) for all \(v \in V_\exists\).  In particular, if
\(|h|\) is greater than \(\max\{t(v) < \infty\}\), then \(\sigma_h\) is nearly positional: it
switches to \(\sigma_2\) as soon as it sees a vertex with \(t(v) \neq \infty\).

\label{def:optipess}%
A strategy $\sigma$ is \emph{perfectly optimistic-then-pessimistic} (optipess,
for short) if there are positional SBO and SBWO strategies $\sbostrat$ and
$\sbwostrat$ such that $\sigma = \switch{\sbostrat}{t}{\sbwostrat}$ where
\( t(v) = \inf \left\{ i \in \mathbb{N} \:\middle|\: \lambda^i \left( \cVal^v -
    \aVal^v \right) \leq \Regret{v_0} \right\}.  \)


\begin{theorem}[\cite{hpr16}]\label{thm:regret-optipess}
  For all optipess strategies $\sigma$ of \eve,
  $\regret{v_0}{\sigma} = \Regret{v_0}$.
\end{theorem}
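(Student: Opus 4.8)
The plan is to prove the two inequalities separately. The inequality $\regret{v_0}{\sigma} \ge \Regret{v_0}$ is immediate, since $\Regret{v_0} = \inf_{\sigma'}\regret{v_0}{\sigma'}$ is an infimum over all strategies of Eve. Writing $R \defeq \Regret{v_0}$, everything then reduces to showing $\regret{v_0}{\sigma} \le R$. For this I would invoke the local characterisation of regret (Lemma~\ref{lem:reg-by-local}): it suffices to prove that for every history $h = v_0 \cdots v_n \in \histe{v_0}(\sigma)$ (so $v_n \in V_\exists$) the local term $\lambda^n\bigl(\cVal^{v_n}_{\lnot\sigma(h)} - \aVal^{v_n}(\sigma_h)\bigr)$ is at most $R$. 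I would split on whether $h$ is switched or unswitched, since the shape of $\sigma_h$ differs in the two regimes.

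The workhorse for the switched regime is an elementary bound I would isolate first: for any SWO strategy $\sigma'$ of Eve and any vertex $u$, one has $\regret{u}{\sigma'} \le \cVal^u - \aVal^u$. Indeed, against every $\tau$ we have $\Val(\out{u}{\sigma'}{\tau}) \ge \aVal^u$ because $\sigma'$ is worst-case optimal, while $\sup_{\sigma''}\Val(\out{u}{\sigma''}{\tau}) \le \cVal^u$ because $\cVal^u$ is the supremum of play values over all pairs of strategies; subtracting and taking the supremum over $\tau$ gives the claim. For a switched history $h$, let $v_i$ (with $i \le n$) be the first Eve vertex along $h$ witnessing the switch, i.e.\ $i \ge t(v_i)$. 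Then $\sigma_h = \sbwostrat$, and since $\sbwostrat$ is positional and SWO we get $\aVal^{v_n}(\sigma_h) = \aVal^{v_n}$ by Lemma~\ref{lem:charac-so}. The suffix $v_i \cdots v_n$ is consistent with $\sbwostrat$ started at $v_i$, so the local term factors as $\lambda^i$ times a single local regret term of $\sbwostrat$ from $v_i$; using the bound above together with $i \ge t(v_i)$ and $\lambda < 1$,
\[
  \lambda^n\bigl(\cVal^{v_n}_{\lnot\sbwostrat(v_n)} - \aVal^{v_n}\bigr)
  \le \lambda^i\,\regret{v_i}{\sbwostrat}
  \le \lambda^{t(v_i)}\bigl(\cVal^{v_i} - \aVal^{v_i}\bigr)
  \le R,
\]
where the last step is exactly the defining property of the threshold $t$. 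This settles the switched case.

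The unswitched case is where I expect the real difficulty. Here $\sigma(h) = \sbostrat(v_n)$ and $n < t(v_n)$, so $\lambda^n(\cVal^{v_n} - \aVal^{v_n}) > R$; consequently I cannot afford the crude estimate $\cVal^{v_n}_{\lnot\sbostrat(v_n)} \le \cVal^{v_n}$, as it would force the impossible-looking demand $\aVal^{v_n}(\sigma_h) \ge \aVal^{v_n}$ even though $\sigma_h$ is optimistic and not worst-case optimal before switching. The point must instead be that excluding the collaboratively optimal move $\sbostrat(v_n)$ genuinely lowers the alternative value: since $\sbostrat$ is SBO, $\cVal^{v_n} = w(v_n,\sbostrat(v_n)) + \lambda\cVal^{\sbostrat(v_n)}$, and the weight on Eve's own move cancels between $\cVal^{v_n}_{\lnot\sbostrat(v_n)}$ and $\aVal^{v_n}(\sigma_h)$.

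My plan for this case is to trace the antagonistic outcome of $\sigma_h$ from $v_n$ up to its first switch at some Eve vertex $v_p$, to lower-bound its tail value by $\aVal^{v_p}$ via the SWO guarantee above, and to compare it term-by-term against the best deviating collaborative play, so that the whole estimate collapses to the switch vertex, where $\lambda^p(\cVal^{v_p} - \aVal^{v_p}) \le R$ holds by $p \ge t(v_p)$. The delicate part will be controlling Adam's vertices along this prefix: there Adam's antagonistic successor differs from the collaboratively optimal one, and the potential gap $\cVal - \aVal(\sigma_h)$ can grow rather than telescope. Handling this cleanly seems to require an induction on histories (or on $t(v_n)-n$) rather than a one-shot cancellation, and this is the step I would budget the most effort for; the roles of the precise threshold $t$ and of $\sbwostrat$ being not merely SWO but SBWO should both enter exactly here.
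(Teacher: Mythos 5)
Your reduction to Lemma~\ref{lem:reg-by-local}, the trivial inequality $\regret{v_0}{\sigma} \geq \Regret{v_0}$, and your treatment of switched histories are all sound: the bound $\regret{u}{\sigma'} \leq \cVal^u - \aVal^u$ for SWO strategies is precisely the paper's Lemma~\ref{lem:local-upper}, and chaining it with $i \geq t(v_i)$ and the definition of $t$ disposes of the switched regime essentially as the paper does.

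The unswitched case, however, is a genuine gap, and the plan you sketch for it (tracing the antagonistic outcome of $\sigma_h$ to its first switch and comparing term-by-term with the best deviating collaborative play) cannot be made to work as a purely local argument. The obstruction is structural: the bound you need already entails the global fact that $|\cOpt(v_n)| = 1$ at every Eve vertex $v_n$ reachable by an unswitched history. Indeed, if two distinct successors of $v_n$ both achieved $\cVal^{v_n}$, then $\cVal^{v_n}_{\lnot\sigma(h)} = \cVal^{v_n}$, and since $\aVal^{v_n}(\sigma_h) \leq \aVal^{v_n}$ always, the local term would be at least $\lambda^n\bigl(\cVal^{v_n} - \aVal^{v_n}\bigr) > \Regret{v_0}$ by unswitchedness --- no analysis of the continuation $\sigma_h$ can rescue this. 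What rules the configuration out is the \emph{existence} of a regret-minimal strategy, and that is the ingredient your proposal never invokes. The paper's route is: take $\sigma_0$ with $\regret{v_0}{\sigma_0} = \Regret{v_0}$ (Lemma~\ref{lem:reg-min-strats}); show that grafting the SBWO tail at the optipess threshold preserves this value (Lemma~\ref{lem:simplify} --- the unswitched local terms are inherited from $\sigma_0$ and hence already at most $\Regret{v_0}$, while the switched ones are bounded exactly as in your argument); then run the inequality above as a \emph{contradiction} against the regret-minimality of $\switch{\sigma_0}{t}{\sbwostrat}$ to conclude that at every unswitched history the collaboratively optimal successor is unique and is chosen by every SBO strategy. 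Since every optipess strategy therefore agrees with this regret-minimal witness on all unswitched histories, the bound transfers to an arbitrary optipess $\sigma$. You should rebuild the unswitched case around Lemma~\ref{lem:reg-min-strats} rather than around a direct telescoping estimate.
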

For completeness, we give a simple proof of this result in Appendix.

\vspace{.5em}

\noindent\emph{As we have done so far, we will assume throughout the paper that
  a game \(\calG\) is fixed---with the notable exception of the results on
  complexity.}

\vspace{2.5em}

\centerline{\(\star\)}
\centerline{\(\star\quad\star\)}
\vspace{2.5em}

\begin{example}\label{exp:reg_min_not_adm}
  Consider the following game, where round vertices are owned by Eve, and square
  ones by Adam.  The double edges represent Eve's positional strategy \(\sigma\):
  \begin{center}
    \begin{tikzpicture}[->,>=stealth',shorten >=1pt,initial text={},
      el/.style={font=\scriptsize}, inner sep=1mm,auto, every
      state/.style={font=\scriptsize,minimum size=5mm,inner sep=0mm}]

      \node [initial left, state, rectangle] (v0)                     {$v_0$};
      \node [state] [above =of v0] (v1) {$v_1$};
      \node[state]          (v2) [right=of v0]         {$v_2$};
      \node[state, rectangle]          (v'1) [right=of v1]         {$v'_1$};
      \node[state, rectangle]          (v''1) [above=of v'1]         {$v''_1$};
      \node[state]          (v'2) [right=of v2]         {$v'_2$};
      \node[state]          (x) [right=of v'1]         {$x$};
      \node[state]          (y) [right=of x]         {$y$};
      
      \path (v0) edge node[above,el] {$ 0 $} (v2);
      \path (v0) edge node[above left,el] {$ 0 $} (v1);
      \path (v1) edge[double] node[above,el] {$ 0 $} (v'1);
      \path (v1) edge node[above left,el] {$ 0 $} (v''1);
      \path (v'1) edge node[above left,el] {$ 2 $} (x);
      \path (v''1) edge node[above right,el] {$ 2 $} (x);
      \path (v'1) edge[bend right] node[below,el] {$ 0 $} (y);
      
       \path (v''1) edge[bend left] node[above right,el] {$ 0 $} (y);
      
	  \path (v2) edge node[above,el] {$ 0 $} (v'2);
     
      \path (x) edge [loop right, double,el] node {$ 2 $} (x);
      \path (y) edge [loop below, double,el] node {$ 0 $} (y);
      \path (v2) edge [loop above, double,el] node {$ 0 $} (v2);
       \path (v'2) edge [loop right, double,el] node {$ 1 $} (v'2);
    \end{tikzpicture}
  \end{center}
  Eve's strategy has a regret value of $2\lambda^2/(1-\lambda)$.  This is realized when
  \adam plays from $v_0$ to $v_1$, from $v''_1$ to $x$, and from $v'_1$ to
  $y$. Against that strategy, \eve ensures a discounted-sum value of~$0$ by
  playing according to $\sigma$ while regretting not having played to $v''_1$ to
  obtain $2\lambda^2/(1-\lambda)$.\hfill\(\blacksquare\)
\end{example}


\section{Admissible strategies and regret}\label{sec:admissibility}
There is no reason for Eve to choose a strategy that is consistently worse than
another one.  This classical notion is formalized as follows:

\begin{definition}
  Let $\sigma_1, \sigma_2$ be two strategies of \eve.  We say that
  \(\sigma_1\) is \emph{weakly dominated} by $\sigma_2$ if
  $\Val(\out{v_0}{\sigma_1}{\tau}) \leq \Val(\out{v_0}{\sigma_2}{\tau})$ for
  every strategy $\tau$ of \adam.  We say that $\sigma_1$ is \emph{dominated} by
  $\sigma_2$ if $\sigma_1$ is \emph{weakly dominated} by $\sigma_2$ but not
  conversely.  A strategy $\sigma$ of \eve is \emph{admissible} if it is not
  dominated by any other strategy.
\end{definition}

\begin{example}\label{exp:adm_not_reg_min}
  Consider the following game, where the strategy \(\sigma\) of Eve is shown by the
  double edges:
  \begin{center}
    \begin{tikzpicture}[->,>=stealth',shorten >=1pt,initial text={},
el/.style={font=\scriptsize}, inner sep=1mm,auto, every
state/.style={font=\scriptsize,minimum size=5mm,inner sep=0mm}]

      \node [initial left, state] (v0)                     {$v_0$};
      \node [state,rectangle] [above right=of v0] (v1) {$v_1$};
      \node[state,rectangle]          (v2) [right=of v0]         {$v_2$};
      \node[state]          (v'1) [right=of v1]         {$v'_1$};
      \node[state]          (v''1) [above=of v'1]         {$v''_1$};
      \node[state]          (v'2) [right=of v2]         {$v'_2$};
      
      \path (v0) edge[double] node[above,el] {$ 0 $} (v2);
      \path (v0) edge node[above left,el] {$ 0 $} (v1);
      \path (v1) edge node[above,el] {$ 0 $} (v'1);
      \path (v1) edge node[above left,el] {$ 0 $} (v''1);
	  \path (v2) edge node[above,el] {$ 0 $} (v'2);
     
      \path (v'1) edge [loop right, double] node[el] {$ 10 $} (v'1);
      \path (v''1) edge [loop right, double] node[el]{$ 5 $} (v''1);
      \path (v'2) edge [loop right, double] node[el]{$ 6 $} (v'2);
           
    \end{tikzpicture}
  \end{center}
  This strategy guarantees a discounted-sum value of $6\lambda^2(1-\lambda)$
  against any strategy of \adam. Furthermore, it is worst-case optimal since
  playing to $v_1$ instead of $v_2$ would allow \adam the opportunity to ensure
  a strictly smaller value by playing to $v''_1$.  The latter also implies that
  $\sigma$ is admissible. Interestingly, playing to $v_1$ is also an admissible
  behavior of \eve since, against a strategy of \adam that plays from $v_1$ to
  $v'_1$, it obtains $10\lambda^2(1-\lambda) > 6\lambda^2(1-\lambda)$.
  \hfill\(\blacksquare\)
\end{example}

In this section, we show that (1) Any strategy is weakly dominated by an
admissible strategy; (2) Being dominated entails more regret; (3) Optipess
strategies are both regret-minimal and admissible.  
We will need the following:

\begin{lemma}[\cite{bprs16}]\label{lem:adm_char}
    A strategy $\sigma$ of \eve is admissible if and only if for every history
    $h \in \histe{v_0}(\sigma)$ the following holds: either $\cVal^{h}(\sigma) >
    \aVal^h$ or $\aVal^h(\sigma)=\cVal^h(\sigma)=\aVal^h = \acVal^h$. 
\end{lemma}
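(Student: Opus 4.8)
The plan is to localise weak domination to the first history at which two \eve-strategies disagree, and then read off the two disjuncts from the values in that subgame. Throughout I use the chain of inequalities that holds for every $h\in\histe{v_0}(\sigma)$,
\[
 \aVal^h(\sigma)\le\cVal^h(\sigma)\le\cVal^h,\qquad
 \aVal^h(\sigma)\le\aVal^h\le\acVal^h\le\cVal^h,
\]
the first line because $\inf_\tau\le\sup_\tau$ and by maximality of $\cVal^h$, and $\aVal^h\le\acVal^h$ because $\acVal^h$ is a supremum of collaborative values over the subclass of SWO strategies. If $\sigma'$ is any strategy and $h=v_0\cdots v_n$ is the first history along some play at which $\sigma(h)\ne\sigma'(h)$, then $\sigma$ and $\sigma'$ agree before $h$; hence for every $\tau$ following the \adam-moves of $h$ the two outcomes share the prefix $h$ and differ only in the subgame rooted at $v_n$. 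Consequently the comparison $\Val(\out{v_0}{\sigma}{\tau})\le\Val(\out{v_0}{\sigma'}{\tau})$ for all such $\tau$ is equivalent to $\Val(\out{v_n}{\sigma_h}{\tau})\le\Val(\out{v_n}{\sigma'_h}{\tau})$ for all $\tau$, so domination decomposes into independent subgame comparisons, one per first-divergence history.

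For the forward direction I argue the contrapositive: suppose both disjuncts fail at some $h$, so $\cVal^h(\sigma)\le\aVal^h$ while $\aVal^h(\sigma),\cVal^h(\sigma),\aVal^h,\acVal^h$ are not all equal. Let $\sbwostrat$ be a positional SBWO strategy (Lemma~\ref{lem:charac-so}); from $v_n$ it guarantees $\aVal^h$ and has collaborative value exactly $\acVal^h$. Let $\sigma'$ coincide with $\sigma$ off the cone above $h$ and play $\sbwostrat$ from $v_n$ on. For every $\tau$ reaching $h$, the value of $\sigma_h$ is at most its collaborative optimum $\cVal^h(\sigma)\le\aVal^h$, which is at most the guarantee of $\sbwostrat$; hence $\sigma'$ weakly dominates $\sigma$. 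If this domination were not strict, then $\sbwostrat$ and $\sigma_h$ would agree against every $\tau$, forcing $\Val(\out{v_n}{\sigma_h}{\cdot})$ to be the constant $\aVal^h$ and the collaborative value of $\sbwostrat$ to equal $\aVal^h$; this makes all four quantities coincide, contradicting our assumption. So $\sigma'$ strictly dominates $\sigma$ and $\sigma$ is not admissible.

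For the converse, assume the disjunction holds at every $h\in\histe{v_0}(\sigma)$, suppose $\sigma'$ weakly dominates $\sigma$, and fix $\tau$; if the outcomes coincide we are done, otherwise let $h$ be their first divergence, with $\sigma(h)=a\ne b=\sigma'(h)$. First I rule out the first disjunct at $h$: if $\cVal^h(\sigma)>\aVal^h$, let $\tau^\star$ follow the \adam-moves of $h$ and thereafter cooperate to realise $\cVal^h(\sigma)$ if \eve\ plays $a$, but play adversarially if \eve\ plays anything else. Then $\Val(\out{v_0}{\sigma}{\tau^\star})=\cVal^h(\sigma)$, whereas committing to $b\ne a$ caps the adversarial continuation by the subgame antagonistic value, giving $\Val(\out{v_0}{\sigma'}{\tau^\star})\le\aVal^h<\cVal^h(\sigma)$; this contradicts weak domination. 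Hence the second disjunct holds at $h$: $\sigma_h$ is the constant $\aVal^h$ against every \adam-strategy, and weak domination forces $\sigma'_h$ to guarantee $\aVal^h$ from $v_n$ as well.

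It remains to show that $\sigma'_h$ cannot do strictly better cooperatively, i.e.\ $\Val(\out{v_n}{\sigma'_h}{\cdot})\equiv\aVal^h$, which yields equality at $\tau$ and so rules out strict domination. Because $\sigma'_h$ guarantees $\aVal^h$ from $v_n$, routing \adam\ to any prefix $g$ of one of its plays and then playing adversarially yields at least $\aVal^h$, so $\aVal^{g}\ge\aVal^h$ along that play. If some cooperative play of $\sigma'_h$ had value $>\aVal^h$, then at the vertex where this surplus is secured the bound on $\aVal^{g}$ lets one substitute a locally worst-case-optimal move and extend by a positional SBWO strategy, producing a genuine SWO strategy whose collaborative value still exceeds $\aVal^h$, i.e.\ $\acVal^h>\aVal^h$, contradicting the second disjunct. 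I expect this last step---turning a cooperative surplus of an arbitrary continuation that merely guarantees $\aVal^h$ at $v_n$ into an SWO witness for $\acVal^h>\aVal^h$---to be the main obstacle, since such a continuation need not be worst-case optimal at the cooperatively reached vertices; the monotonicity of the antagonistic value along the play together with the local characterisation of SWO strategies in Lemma~\ref{lem:charac-so} is what makes this surgery go through.
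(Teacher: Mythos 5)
Your argument is sound in substance, but it takes a genuinely different route from the paper: the paper does not prove this characterization at all. It imports it from \cite[Theorem 11]{bprs16}, where it is established for the class of \emph{well-formed} games, and the only thing actually proved in the paper is that discounted-sum games are well-formed for \eve{} (the positional SWO and SBO strategies of Lemma~\ref{lem:charac-so} witness $\aVal^h$ and $\cVal^h$ at every history), together with a continuity/compactness argument showing that collaborative values are attained by actual plays. You instead give a direct, self-contained proof, and its main components---the first-divergence decomposition of weak domination, the switch-to-$\sbwostrat$ construction for the forward direction, and the cooperate-iff-\eve-conforms strategy $\tau^\star$ ruling out the first disjunct at a divergence point---are all correct. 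The step you flag as the main obstacle is indeed the only delicate one, and it does go through essentially as you anticipate: setting $\Phi(g)=\Val(g)+\lambda^{|g|}\aVal^{v}$ for a prefix $g$ of a play $\pi$ of $\sigma'_h$ ending in $v$, your routing argument gives $\Phi(g)\geq\aVal^h$ for every prefix, while the Bellman equations behind Lemma~\ref{lem:charac-so} make $\Phi$ non-increasing at \eve's vertices and non-decreasing at \adam's; hence $\Phi$ is constant up to its first strict increase, which must occur at an \adam{} vertex, the \eve{} moves before that point are automatically worst-case optimal, and appending a positional SBWO strategy there yields an SWO strategy of collaborative value $>\aVal^h$, i.e.\ $\acVal^h>\aVal^h$. (Your phrase ``substitute a locally worst-case-optimal move'' is slightly off---no substitution is needed, since the surplus is granted by \adam{} and the preceding \eve{} moves are forced to be optimal by the two monotonicity facts---but the surgery is exactly the one you describe.) The paper's route buys brevity and reuse of a known result; yours buys independence from \cite{bprs16} and an explicit view of where each disjunct comes from. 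Note only that your converse silently uses that the suprema defining $\cVal^h(\sigma)$ and $\cVal^h(\sigma')$ are approached by \adam{} strategies that actually reach $h$, which is precisely the continuity fact the paper isolates as ``well-formedness'' in its appendix.
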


The above characterization of admissible strategies in so-called
\emph{well-formed games} was proved in~\cite[Theorem 11]{bprs16}.
Lemma~\ref{lem:adm_char} follows from the fact that discounted-sum games are
well-formed (see Appendix, Section~\ref{sec:well-formed}). 

\subsection{Any strategy is weakly dominated by an admissible strategy}

We show that discounted-sum games have the distinctive property that every
strategy is weakly dominated by an admissible strategy.  This is in stark
contrast with most cases where admissibility has been studied
previously~\cite{bprs16}.
\begin{theorem}\label{thm:fundamental-property}
  Any strategy of \eve is weakly dominated by an admissible strategy.
\end{theorem}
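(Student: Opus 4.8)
The plan is to use the local characterization of admissibility (Lemma~\ref{lem:adm_char}) as the target condition, and to build a dominating admissible strategy by \emph{overlaying} the given $\sigma$ with a fixed positional SBWO strategy $\beta$, which exists by Lemma~\ref{lem:charac-so}. Concretely, define $\sigma^\ast$ to follow $\sigma$ verbatim along every history, except that as soon as a prefix $h$ consistent with $\sigma$ \emph{violates} the condition of Lemma~\ref{lem:adm_char}---that is, $\cVal^h(\sigma)\le\aVal^h$ while not $\aVal^h(\sigma)=\cVal^h(\sigma)=\aVal^h=\acVal^h$---the strategy $\sigma^\ast$ abandons $\sigma$ and plays $\beta$ positionally forever after. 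Call a history of $\sigma^\ast$ \emph{switched} if such a violating prefix has occurred, and \emph{unswitched} otherwise; on unswitched histories $\sigma^\ast$ and $\sigma$ coincide.

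First I would establish weak domination. The only place where $\sigma^\ast$ departs from $\sigma$ is at a switch, occurring at some violating history $h$ ending in $v_n$. There the violation gives $\cVal^{v_n}(\sigma_h)\le\aVal^{v_n}$, whereas $\beta$, being SWO, guarantees $\aVal^{v_n}(\beta)=\aVal^{v_n}$ against \emph{every} continuation of \adam. Hence for any $\tau$ the value $\sigma^\ast$ secures from $v_n$ is at least $\aVal^{v_n}\ge\cVal^{v_n}(\sigma_h)\ge\Val(\out{v_n}{\sigma_h}{\tau})$. Since the two strategies agree on the common prefix, $\Val(\out{v_0}{\sigma^\ast}{\tau})\ge\Val(\out{v_0}{\sigma}{\tau})$ for all $\tau$; applying the same argument in the subgame after an arbitrary history $h$ shows that $\sigma^\ast_h$ weakly dominates $\sigma_h$, and therefore, by monotonicity of $\inf$ and $\sup$, that $\aVal^h(\sigma^\ast)\ge\aVal^h(\sigma)$ and $\cVal^h(\sigma^\ast)\ge\cVal^h(\sigma)$.

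Next I would verify that $\sigma^\ast$ is admissible by checking Lemma~\ref{lem:adm_char} at every consistent history $h$. If $h$ is switched, then $\sigma^\ast_h$ is positional SBWO, so $\aVal^h(\sigma^\ast)=\aVal^h$ and $\cVal^h(\sigma^\ast)=\acVal^h\ge\aVal^h$; whichever of $\acVal^h>\aVal^h$ or $\acVal^h=\aVal^h$ holds makes one of the two disjuncts true, so a switched $\sigma^\ast$ satisfies the condition. If $h$ is unswitched, then $\sigma$ does not violate the condition at $h$, so either $\cVal^h(\sigma)>\aVal^h$---whence $\cVal^h(\sigma^\ast)\ge\cVal^h(\sigma)>\aVal^h$ and the first disjunct holds---or $\aVal^h(\sigma)=\cVal^h(\sigma)=\aVal^h=\acVal^h$, in which case the dominance inequalities force $\aVal^h\ge\aVal^h(\sigma^\ast)\ge\aVal^h(\sigma)=\aVal^h$ and $\cVal^h(\sigma^\ast)\ge\cVal^h(\sigma)=\aVal^h$, leaving $\sigma^\ast$ either strictly collaborative-optimal or exactly in the settled case.

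The main obstacle, and the step deserving the most care, is this unswitched settled case: one must rule out that overlaying future switches onto $\sigma$ degrades its antagonistic value, which is precisely where monotonicity of the infimum under the subgame domination $\sigma^\ast_h$ over $\sigma_h$ is needed, together with the inequality $\cVal^h(\sigma)\le\aVal^h$ characterizing a violation---the feature of discounted-sum games that makes ``switch to a worst-case optimal strategy once collaboration can no longer beat the antagonistic value'' both value-safe and admissibility-restoring. I would also take care to use an SBWO backup rather than a merely SWO one, since a plain worst-case optimal strategy can itself fail to be admissible (its collaborative value may lie strictly between $\aVal^h$ and $\acVal^h$), which would break the switched case of the argument.
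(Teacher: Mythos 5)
Your proof is correct and follows essentially the same route as the paper's: switch from $\sigma$ to a positional SBWO strategy at the first history violating the characterization of Lemma~\ref{lem:adm_char}, establish weak domination using $\cVal^{h}(\sigma)\le\aVal^{h}$ at the switch point together with the worst-case guarantee of the backup strategy, and then verify the characterization separately on switched and unswitched histories. The only cosmetic difference is that you derive $\cVal^{h}(\sigma^\ast)\ge\cVal^{h}(\sigma)$ and $\aVal^{h}(\sigma^\ast)\ge\aVal^{h}(\sigma)$ from subgame weak dominance and monotonicity of suprema/infima, whereas the paper constructs an explicit witnessing strategy of \adam via continuity of the discounted-sum value; both are sound.
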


\begin{proof}[Sketch]
    The main idea is to construct, based on $\sigma$, a strategy $\sigma'$ that
    will switch to a SBWO strategy as soon as $\sigma$ does not satisfy the
    characterization of Lemma~\ref{lem:adm_char}.  The first argument consists
    in showing that $\sigma$ is indeed weakly dominated by $\sigma'$.  This is
    easily done by comparing, against each strategy $\tau$ of \adam, the
    values of $\sigma$ and $\sigma'$. The second
    argument consists in verifying that $\sigma'$ is indeed admissible.  This is
    done by checking that each history $h$ consistent with $\sigma'$ satisfies
    the characterization of Lemma~\ref{lem:adm_char}, that is $\cVal^h(\sigma')
    > \aVal^h$ or $\aVal^h(\sigma') = \cVal^h(\sigma')= \aVal^h = \acVal^h$.  If
    $\sigma'$ is already following an SBWO strategy at $h$, then the definition
    of SBWO strategies ensures that $\aVal^h(\sigma') = \aVal^h$ and
    $\cVal^h(\sigma') = \acVal^h$, and the part of the characterization
    satisfied only depends whether $\acVal^h > \aVal^h$.  If $\sigma'$ is still
    following $\sigma$ at $h$, the reasoning relies on the facts that $\sigma'$
    weakly dominates $\sigma$ and that $\sigma$ satisfies the characterization
    of Lemma~\ref{lem:adm_char} until $h$. This is true because $\sigma'$ and
    $\sigma$ agree up to $h$.  In the case where $\cVal^h(\sigma) >
    \aVal^h$, the weak dominance of $\sigma$ by $\sigma'$ implies that
    $\cVal^h(\sigma') \geq \cVal^h(\sigma)$ and thus that $\cVal^h(\sigma') >
    \aVal^h$.  In the case where $\aVal^h(\sigma) = \cVal^h(\sigma)= \aVal^h =
    \acVal^h$, the weak dominance of $\sigma$ by $\sigma'$ implies:
    \begin{itemize}
        \item first, that $\aVal^h(\sigma') \geq \aVal^h(\sigma)$ and thus that
            $\aVal^h(\sigma') = \aVal^h $,
        \item second, that $\cVal^h(\sigma') \geq \cVal^h(\sigma) = \acVal^h$.
    \end{itemize} 
    Since the first point shows that $\sigma'$ is worst-case optimal at $h$, we
    know, by definition of $\acVal^h$, that $\cVal^h(\sigma') \leq \acVal^h$.
    Combined with the second point, we get that $\cVal^h(\sigma') = \acVal^h$
    and thus that $\aVal^h(\sigma') = \cVal^h(\sigma')= \aVal^h = \acVal^h$.\qed
\end{proof}

\subsection{Being dominated is regretful}

\begin{theorem}\label{thm:dominance-respects-regret}
  For all strategies $\sigma,\sigma'$ of \eve such that $\sigma$ is weakly
  dominated by $\sigma'$, it holds that
  $ \regret{v_0}{\sigma'} \leq \regret{v_0}{\sigma}$.
\end{theorem}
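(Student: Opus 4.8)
The plan is to exploit the fact that the ``best alternative'' term in the definition of regret is entirely independent of the reference strategy being evaluated. Fixing a strategy $\tau$ of \adam, write $B(\tau) \defeq \sup_{\sigma''} \Val(\out{v_0}{\sigma''}{\tau})$ for the best value \eve could possibly secure against $\tau$, where $\sigma''$ ranges over all strategies of \eve. With this notation the definition of regret reads $\regret{v_0}{\sigma} = \sup_\tau \bigl( B(\tau) - \Val(\out{v_0}{\sigma}{\tau}) \bigr)$, and crucially the quantity $B(\tau)$ is the same whether we are measuring the regret of $\sigma$ or of $\sigma'$, since it does not mention the reference strategy at all.

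First I would observe that, for each fixed $\tau$, the map sending an achieved value to its contribution $B(\tau) - (\cdot)$ is order-reversing. The hypothesis of weak dominance gives exactly $\Val(\out{v_0}{\sigma}{\tau}) \leq \Val(\out{v_0}{\sigma'}{\tau})$ for every $\tau$, so subtracting each side from $B(\tau)$ yields the pointwise inequality $B(\tau) - \Val(\out{v_0}{\sigma'}{\tau}) \leq B(\tau) - \Val(\out{v_0}{\sigma}{\tau})$, valid for every strategy $\tau$ of \adam.

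Finally I would take the supremum over all $\tau$ on both sides. Since the supremum is monotone and the inequality holds term by term, this immediately gives $\regret{v_0}{\sigma'} \leq \regret{v_0}{\sigma}$, as desired.

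There is essentially no technical obstacle here: the whole content of the statement is the separation of the regret expression into an Adam-dependent optimum $B(\tau)$ that is blind to \eve's reference strategy, and a term that is antitone in \eve's achieved value. The only point requiring a moment's care is renaming the bound variable inside the inner supremum (here $\sigma''$) so that it does not clash with the dominating strategy $\sigma'$; once that is noted, the argument reduces to a one-line pointwise comparison followed by a supremum.
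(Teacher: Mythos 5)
Your proof is correct and is essentially identical to the paper's own argument: both fix $\tau$, observe that the inner supremum $\sup_{\sigma''}\Val(\out{v_0}{\sigma''}{\tau})$ is independent of the reference strategy, use weak dominance to get the pointwise inequality, and then take the supremum over $\tau$. No differences worth noting.
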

\begin{proof}
    Let $\sigma$, $\sigma'$ be such that $\sigma$ is weakly dominated by
    $\sigma'$. This means that for every strategy $\tau$ of \adam, we have that
    $\Val(\pi) \leq \Val(\pi')$ where $\pi =
    \out{v_0}{\sigma}{\tau}$ and $\pi' = \out{v_0}{\sigma'}{\tau}$.
    Consequently, 
    we obtain
    \[
        \left(\sup_{\sigma''} \Val(\out{v_0}{\sigma''}{\tau}\right) -
        \Val(\pi') \leq
        \left(\sup_{\sigma''} \Val(\out{v_0}{\sigma''}{\tau}\right) -
        \Val(\pi).
    \]
    As
    this holds for any $\tau$, we can conclude that $\sup_{\tau}
    \sup_{\sigma''} (\Val(\out{v_0}{\sigma''}{\tau}) -
    \Val(\out{v_0}{\sigma'}{\tau})) \leq
    \sup_{\tau} \sup_{\sigma''} (\Val(\out{v_0}{\sigma''}{\tau}) -
    \Val(\out{v_0}{\sigma}{\tau}))$,
    that is $ \regret{v_0}{\sigma'} \leq  \regret{v_0}{\sigma}$.
\qed
\end{proof}

\noindent The converse of the lemma is however false.

\subsection{Optipess strategies are both regret-minimal and admissible}

Recall that there are admissible strategies that are not regret-minimal and
\emph{vice versa} (see Appendix, Section~\ref{sec:incomparable}).  However, as a
direct consequence of Theorem~\ref{thm:fundamental-property} and
Theorem~\ref{thm:dominance-respects-regret}, there always exist regret-minimal
admissible strategies.  It turns out that optipess strategies, which are
regret-minimal (Theorem~\ref{thm:regret-optipess}), are also admissible:

\begin{theorem}\label{thm:optipess_adm}
    All optipess strategies of \eve are admissible.
\end{theorem}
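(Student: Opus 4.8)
The plan is to verify, for every history $h = v_1\cdots v_n \in \histe{v_0}(\sigma)$ consistent with the optipess strategy $\sigma = \switch{\sbostrat}{t}{\sbwostrat}$, that $h$ meets the admissibility characterization of Lemma~\ref{lem:adm_char}. Writing $k = |h|$ for the depth of $v_n$, I would first record the boundary identities $\aVal^h = \Val(h) + \lambda^k\aVal^{v_n}$ and $\acVal^h = \Val(h) + \lambda^k\acVal^{v_n}$, together with the always-valid inequality $\acVal^{v_n} \ge \aVal^{v_n}$, and then split the argument according to whether $h$ is switched or unswitched.

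The switched case is immediate from Lemma~\ref{lem:charac-so}. If $h$ is switched, the bipositional remark gives $\sigma_h = \sbwostrat$, so, $\sbwostrat$ being positional SBWO, $\aVal^h(\sigma) = \Val(h)+\lambda^k\aVal^{v_n} = \aVal^h$ and $\cVal^h(\sigma) = \Val(h)+\lambda^k\acVal^{v_n} = \acVal^h$. If $\acVal^{v_n} > \aVal^{v_n}$ the first disjunct $\cVal^h(\sigma) > \aVal^h$ holds; otherwise $\acVal^{v_n} = \aVal^{v_n}$ forces all four quantities to coincide, giving the second disjunct.

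The unswitched case is the heart of the matter, and there I would aim for the first disjunct $\cVal^h(\sigma) > \aVal^h$, i.e.\ $C > \aVal^{v_n}$ for the collaborative value $C$ of $\sigma_h$ from $v_n$. Since $h$ is unswitched, the threshold at $v_n$ is not yet reached, so $k < t(v_n)$ and hence $\lambda^k(\cVal^{v_n}-\aVal^{v_n}) > \Regret{v_0}$; in particular $\cVal^{v_n} > \aVal^{v_n}$. I would fix a collaborative-optimal play $\pi^\ast = u_0u_1\cdots$ from $v_n$ under $\sbostrat$ (eventually periodic, as positional strongly optimal strategies exist by Lemma~\ref{lem:charac-so}), along which $\Val(u_0\cdots u_j) + \lambda^j\cVal^{u_j} = \cVal^{v_n}$ for all $j$. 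If $\sigma_h$ never switches along $\pi^\ast$ then $C \ge \cVal^{v_n} > \aVal^{v_n}$ and we are done; otherwise let $j^\ast$ be the first switching position, so $\lambda^{k+j^\ast}(\cVal^{u_{j^\ast}}-\aVal^{u_{j^\ast}}) \le \Regret{v_0}$. Letting Adam follow $\pi^\ast$ up to $u_{j^\ast}$ and then cooperate with the now-active $\sbwostrat$ yields the lower bound $C \ge \Val(u_0\cdots u_{j^\ast}) + \lambda^{j^\ast}\acVal^{u_{j^\ast}} = \cVal^{v_n} - \lambda^{j^\ast}(\cVal^{u_{j^\ast}}-\acVal^{u_{j^\ast}})$.

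It then remains to show this lower bound strictly exceeds $\aVal^{v_n}$, i.e.\ $\cVal^{v_n}-\aVal^{v_n} > \lambda^{j^\ast}(\cVal^{u_{j^\ast}}-\acVal^{u_{j^\ast}})$, and this is exactly where the \emph{timing of the switch} is used: chaining the unswitched inequality at $v_n$, the switching inequality at $u_{j^\ast}$, and $\acVal^{u_{j^\ast}} \ge \aVal^{u_{j^\ast}}$ gives
\[
\cVal^{v_n}-\aVal^{v_n} \;>\; \lambda^{-k}\Regret{v_0} \;\ge\; \lambda^{j^\ast}(\cVal^{u_{j^\ast}}-\aVal^{u_{j^\ast}}) \;\ge\; \lambda^{j^\ast}(\cVal^{u_{j^\ast}}-\acVal^{u_{j^\ast}}),
\]
which closes the argument. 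The main obstacle, and the only genuinely delicate point, is justifying the lower bound on $C$: one must argue that before switching $\sigma_h$ indeed tracks $\pi^\ast$ (since it plays the positional $\sbostrat$), that the first switch position $j^\ast$ is determined by $\pi^\ast$ alone, and that the cooperative continuation of $\sbwostrat$ after the switch realizes precisely $\acVal^{u_{j^\ast}}$ — each following from Lemma~\ref{lem:charac-so} and the bipositional description of $\sigma_h$, but needing to be assembled carefully into a single admissible Adam strategy.
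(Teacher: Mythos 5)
Your proof is correct, and your switched case is essentially the paper's; the unswitched case, however, takes a genuinely different route. The paper argues by contradiction: assuming $\cVal^h(\sigma) \leq \aVal^h$, it exhibits a strategy $\tau$ of \adam, consistent with $h$, whose outcome against $\sbostrat$ attains $\cVal^h$ (this is where the continuity/attainment lemma of the appendix is needed), deduces $\regret{v_0}{\sigma} \geq \cVal^h - \aVal^h = \lambda^n(\cVal^{v_n}-\aVal^{v_n})$, and contradicts the fact that $\regret{v_0}{\sigma} = \Regret{v_0}$, i.e.\ Theorem~\ref{thm:regret-optipess}, combined with the unswitched condition. You instead lower-bound the collaborative value of $\sigma_h$ directly, by letting \adam track a collaboratively optimal play of $\sbostrat$ until the first switch position $j^\ast$ and then cooperate with $\sbwostrat$, and you close the argument with the chain $\cVal^{v_n}-\aVal^{v_n} > \lambda^{-k}\Regret{v_0} \geq \lambda^{j^\ast}(\cVal^{u_{j^\ast}}-\aVal^{u_{j^\ast}}) \geq \lambda^{j^\ast}(\cVal^{u_{j^\ast}}-\acVal^{u_{j^\ast}})$. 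This buys two things. First, you never invoke the regret-minimality of $\sigma$: your argument uses only the \emph{form} of the threshold function, so it in fact shows admissibility of $\switch{\sbostrat}{t}{\sbwostrat}$ with $t(v) = \inf\{i \mid \lambda^i(\cVal^v - \aVal^v) \leq R\}$ for \emph{any} constant $R \geq 0$, not just $R = \Regret{v_0}$. Second, since your bound on $C$ only needs $\cVal$ and $\acVal$ as suprema, you can work with $\epsilon$-optimal cooperating strategies of \adam and dispense with the attainment argument of Section~\ref{sec:well-formed}. The price is the bookkeeping you flag yourself: checking that $j^\ast$ is well defined (only \eve's vertices can trigger the switch, and no position inside $h$ does, since $h$ is unswitched, so $j^\ast \geq 1$), that $\Val(u_0\cdots u_{j^\ast}) + \lambda^{j^\ast}\cVal^{u_{j^\ast}} \geq \cVal^{v_n}$ (the easy direction of the prefix identity, which is all you need), and that the two phases of \adam's behavior assemble into a single strategy; all of these are routine given bipositionality, so I see no gap.
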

\begin{proof}
    Let $\sbostrat$ and $\sbwostrat$ be positional SBO and SBWO strategies of
    \eve, $\sigma$ be an optipess strategy of \eve with $\sigma =
    \switch{\sbostrat}{t}{\sbwostrat}$, and let $h = v_0 \dots v_n \in
    \histe{v_0}(\sigma)$ be a history consistent with $\sigma$.

    Suppose first that $\lambda^k\left( \cVal^{v_k} -
    \aVal^{v_k} \right) \leq \Regret{v_0}$ for some $0 \leq k \leq n$.
    That is, $h$ is a switched history and therefore $\sigma(h) = \sbwostrat(h)$
    We know that $\sigma$ will follow $\sbwostrat$ forever from $h$, thus we
    have $\cVal^h(\sigma)=\cVal^h(\sbwostrat)$ and
    $\aVal^h(\sigma)=\aVal^h(\sbwostrat)$. Recall that $\cVal^h(\sbwostrat)=
    \acVal^h$. Hence, if $\acVal^h > \aVal^h$, we have that $\cVal^h(\sigma) =
    \cVal^h(\sbwostrat) = \acVal^h > \aVal^h$, and $\sigma$ satisfies the first
    case of the characterization from Lemma~\ref{lem:adm_char}. Now, if
    $\acVal^h = \aVal^h$, the second case is satisfied: we have that
    $\cVal^h(\sigma)=\acVal^h$, and as $\aVal^h(\sigma)=\aVal^h(\sbwostrat)$,
    we also have that $\aVal^h(\sigma)= \aVal^h$ since $\sbwostrat$ is SWO.

    Suppose now that $\Regret{v_0} < \lambda^n\left( \cVal^{v_k} - \aVal^{v_k}
    \right)$ for all $0 \leq k \leq n$. By definition of optipess strategies, we
    know that $\sigma$ and $\sbostrat$ agree up to $h$, and, in particular, that
    $\sigma(h)=\sbostrat(h)$.
    Furthermore, we know that $\cVal^h > \aVal^h$, 
    otherwise at $v_n$ we have $\lambda^n(\cVal^{v_n} - \aVal^{v_n}) =
    0$, which is necessarily smaller or equal to
    $\Regret{v_0}$. Let us show that $\cVal^h(\sigma) > \aVal^h$, thus
    satisfying the first case of the characterization. Assume, towards
    contradiction, that $\cVal^h(\sigma) \leq \aVal^h$.
    Let $\tau$ be a strategy of \adam such that $h$ is consistent
    with the outcome of $\tau$ and $\sigma$, and the value of the outcome of
    $\tau$ and $\sbostrat$ is $\cVal^h$. 
    (Such strategy and outcome indeed exist because $h$ is consistent with
    $\sbostrat$ and discounted-sum value functions are continuous, see
    Appendix~\ref{sec:well-formed} for more details.) By definition of the
    regret, we have that
    $\regret{v_0}{\sigma} \geq \Val(\out{v_0}{\sbostrat}{ \tau}) -
    \Val(\out{v_0}{\sigma}{\tau})$.  We already know that
    $\Val(\out{v_0}{\sbostrat}{ \tau}) = \cVal^h$ and $\Val(\out{v_0}{\sigma}{
    \tau}) \leq \cVal^h(\sigma) \leq \aVal^h$.  Thus,
    $\Val(\out{v_0}{\sbostrat}{ \tau}) - \Val(\out{v_0}{\sigma}{ \tau}) \geq
    \cVal^h - \aVal^h$, that is $\regret{v_0}{\sigma} \geq  \cVal^h - \aVal^h$.
    On the other hand, since the strategy $\sigma$ is regret-minimizing, it
    holds that $\regret{v_0}{\sigma} = \Regret{v_0}$. Hence,
    $\regret{v_0}{\sigma} < \lambda^n\left( \cVal^{v_n} - \aVal^{v_n} \right)$.
    But we also have $\cVal^h - \aVal^h = (\Val(h) +\lambda^n \cVal^{v_n}) -
    (\Val(h) + \lambda^n \aVal^{v_n}) = \lambda^n\left( \cVal^{v_n} -
    \aVal^{v_n} \right)$. We thus get a contradiction: $\regret{v_0}{\sigma} <
    \lambda^n\left( \cVal^{v_n} - \aVal^{v_n} \right)$ and $\regret{v_0}{\sigma}
    \geq \lambda^n\left( \cVal^{v_n} - \aVal^{v_n} \right)$.
\qed
\end{proof}

%
%
%


\section{Minimal values are witnessed by a single iterated cycle}\label{sec:minval}

We start our technical work towards a better algorithm to compute the regret
value of a game.  In this section, we show a crucial lemma on representing long
histories: there are histories of a simple shape that witness small values in
the game.

More specifically, we show that for any history \(h\), there is another history
\(h'\)  of
the same length that has smaller value and such that
\(h' = \alpha \cdot \beta^k \cdot \gamma\) where \(|\alpha\beta\gamma|\) is small.  This will allow us to find the
smallest possible value among exponentially large histories by guessing
\(\alpha, \beta, \gamma,\) and \(k\), which will all be small.  This property holds for a wealth
of different valuation functions, hinting at possible further applications.
Namely, the only requirement is the following:

\begin{lemma}\label{lem:replace-cycle}
  For any history $h = \alpha \cdot \beta \cdot \gamma$ with $\alpha$ and $\gamma$ same-length cycles:
  \[\min \{\Val(\alpha^2 \cdot \beta),\Val(\beta \cdot \gamma^2)\} \leq \Val(h)\enspace.\]
\end{lemma}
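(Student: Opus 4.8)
The plan is to reduce the statement to a one-line computation using the additivity of the discounted sum under concatenation. Recall that for any two histories $g_1,g_2$ where the last vertex of $g_1$ equals the first vertex of $g_2$, one has $\Val(g_1 g_2) = \Val(g_1) + \lambda^{|g_1|}\Val(g_2)$. Since $\alpha$ and $\gamma$ are cycles of a common length, I would write $\ell = |\alpha| = |\gamma|$ and $m = |\beta|$, and first note that $\alpha^2\beta$ and $\beta\gamma^2$ are indeed well-formed histories: because $\alpha$ is a cycle, its endpoint coincides with its start, which is also the start of $\beta$, so $\alpha\alpha\beta$ is a path, and symmetrically for $\beta\gamma\gamma$. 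Setting $A = \Val(\alpha)$, $B = \Val(\beta)$, $C = \Val(\gamma)$, repeated use of the concatenation identity then gives
\[ \Val(h) = A + \lambda^{\ell} B + \lambda^{\ell+m} C, \quad \Val(\alpha^2\beta) = (1+\lambda^{\ell})A + \lambda^{2\ell} B, \quad \Val(\beta\gamma^2) = B + \lambda^{m}(1+\lambda^{\ell}) C. \]

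The key step is to observe that $\Val(h)$ is an \emph{explicit convex combination} of the two alternative values. Concretely, I would set $p = \tfrac{1}{1+\lambda^{\ell}}$ and $q = \tfrac{\lambda^{\ell}}{1+\lambda^{\ell}}$, so that $p,q > 0$ and $p+q = 1$, and verify by direct substitution that
\[ p\,\Val(\alpha^2\beta) + q\,\Val(\beta\gamma^2) = \Val(h). \]
Matching the coefficients of $A$ and of $C$ in this identity forces exactly this choice of $p$ and $q$; the point is that with these weights the coefficient of $B$ on the left, namely $p\lambda^{2\ell} + q = \tfrac{\lambda^{2\ell}+\lambda^{\ell}}{1+\lambda^{\ell}}$, collapses to $\lambda^{\ell}$, which is precisely its coefficient in $\Val(h)$. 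This cancellation is where the hypothesis $|\alpha| = |\gamma|$ is genuinely used: had the two cycles differed in length, the three coefficient equations would in general be inconsistent for nonnegative weights summing to one.

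Finally, since $p,q \geq 0$ and $p+q=1$, the value $\Val(h)$ lies between $\Val(\alpha^2\beta)$ and $\Val(\beta\gamma^2)$; in particular it is at least their minimum, which is the claim. I do not expect a deep obstacle here, since the entire content is the additive concatenation identity together with the discovery of the right weights. The only mildly delicate points are the bookkeeping of the exponents ($\ell$, $2\ell$, $m$, $\ell+m$) and the observation that $\alpha^2\beta$, $\beta\gamma^2$, and $h$ all have the same length $2\ell+m$, which is what makes the comparison between equal-length histories meaningful for the intended succinct-representation application.
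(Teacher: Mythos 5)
Your proof is correct, and every computation checks out: the three value expansions are right, and with $p = \tfrac{1}{1+\lambda^{\ell}}$, $q = \tfrac{\lambda^{\ell}}{1+\lambda^{\ell}}$ the coefficients of $\Val(\alpha)$, $\Val(\beta)$, $\Val(\gamma)$ in $p\,\Val(\alpha^2\beta)+q\,\Val(\beta\gamma^2)$ do come out to $1$, $\lambda^{\ell}$, $\lambda^{\ell+m}$ respectively, so $\Val(h)$ is a genuine convex combination of the two alternatives and hence at least their minimum. The paper takes a different (though computationally equivalent) route: it argues by contradiction, assuming both $\Val(\alpha^2\beta) > \Val(h)$ and $\Val(\beta\gamma^2) > \Val(h)$, and cancels common terms to reduce the first assumption to $\Val(\alpha) > (1-\lambda^{\ell})\Val(\beta) + \lambda^{m}\Val(\gamma)$ and the second to exactly the reverse strict inequality. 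Your identity $p\bigl(\Val(\alpha^2\beta)-\Val(h)\bigr) + q\bigl(\Val(\beta\gamma^2)-\Val(h)\bigr) = 0$ is precisely the reason those two reduced inequalities are negations of one another, so the underlying algebra is the same; but your formulation buys a little more. It shows $\Val(h)$ lies \emph{between} the two alternative values (so the symmetric bound $\Val(h) \leq \max\{\Val(\alpha^2\beta),\Val(\beta\gamma^2)\}$ comes for free, which is in the spirit of the paper's remark that the lemma is the only property of the valuation function needed), and it pinpoints where the equal-length hypothesis enters, namely in the collapse of the $\Val(\beta)$ coefficient $\tfrac{\lambda^{2\ell}+\lambda^{\ell}}{1+\lambda^{\ell}} = \lambda^{\ell}$. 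Your side remarks about well-formedness of the concatenations and the equality of the three lengths are also accurate and worth keeping.
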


Within the proof of the key lemma of this section, and later on when we use it
(Lemma~\ref{lem:aval}), we will rely on the following elementary notion of
cycle decomposition:
\begin{definition}
  A \emph{simple-cycle decomposition} (SCD) is a pair consisting of paths and
  iterated simple cycles. Formally, an SCD is a pair
  \(D = \langle (\alpha_i)_{i=0}^n, (\beta_j,k_j)_{j=1}^n\rangle\), where each
  \(\alpha_i\) is a path, each \(\beta_j\) is a simple cycle, and each \(k_j\) is a positive
  integer.  We write \(D(j) = \beta_j^{k_j}\cdot \alpha_j\) and \(D(\star) = \alpha_0\cdot D(1)D(2) \cdots D(n)\).
\end{definition}

\noindent By carefully iterating Lemma~\ref{lem:replace-cycle}, we have:
\begin{lemma}\label{lem:small-witness}
  For any history \(h\) there exists an history \(h' = \alpha \cdot \beta^k \cdot \gamma\) with:
  \begin{itemize}
  \item \(h\) and \(h'\) have the same starting and ending vertices, and the
    same length;
  \item $\Val(h') \leq \Val(h)$,
  \item $|\alpha\beta\gamma| \leq 4|V|^3$ and $\beta$ is a simple cycle.
  \end{itemize}
\end{lemma}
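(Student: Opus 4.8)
The plan is to start from an arbitrary simple-cycle decomposition of \(h\) and then repeatedly invoke Lemma~\ref{lem:replace-cycle} to concentrate all of the iterated cycles into a single one, driving \(h\) into the shape \(\alpha\beta^k\gamma\). First I would fix an SCD \(D\) with \(D(\star)=h\): scanning \(h\) from left to right and extracting the leftmost simple cycle each time a vertex repeats produces a decomposition \(\alpha_0\beta_1^{k_1}\alpha_1\cdots\beta_n^{k_n}\alpha_n\) in which every \(\beta_j\) is a simple cycle (so \(|\beta_j|\le|V|\)) and every connecting segment \(\alpha_i\) is a simple path (so \(|\alpha_i|<|V|\)); at this stage the number of blocks \(n\) is not yet controlled. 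If no cycle occurs at all then \(|h|<|V|\le 4|V|^3\) and we are done with \(\alpha=h\) and \(k=0\); otherwise we keep one simple cycle \(\beta\) among the \(\beta_j\) as the block we intend to inflate.

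The heart of the argument is a single merging move. The key observation is that Lemma~\ref{lem:replace-cycle} is stated for arbitrary same-length cycles, not only simple ones, so it applies to powers \(\beta_i^{a}\) and \(\beta_j^{b}\) whenever \(a|\beta_i|=b|\beta_j|\). Given two cycle blocks of the current decomposition, I would take the contiguous sub-history \(g\) running from the start of the first block to the end of the second and write \(g=A\cdot B\cdot C\) with \(A=\beta_i^{a}\) and \(C=\beta_j^{b}\) of common length \(\ell=\operatorname{lcm}(|\beta_i|,|\beta_j|)\). Lemma~\ref{lem:replace-cycle} then guarantees that at least one of \(A^2B\) and \(BC^2\) has value no larger than \(\Val(g)\), and both preserve the endpoints and the length of \(g\)—hence of \(h\)—because \(A\) and \(C\) are cycles rooted at the two endpoints of \(g\). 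Whichever rewrite is value-safe, it makes the two end cycles coincide and sit adjacently, transferring one length-\(\ell\) chunk from one block into the other. Two regimes arise: when \(|\beta_i|=|\beta_j|\) the transfer is one cycle for one cycle and a block can be emptied into another of equal length with no remainder; when the lengths differ we can only move whole \(\ell\)-chunks, so pouring one block into another terminates with a remainder of strictly fewer than \(\operatorname{lcm}(|\beta_i|,|\beta_j|)/|\beta_j|\le|V|\) copies.

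Iterating these moves, same-length transfers collapse, for each length \(1\le\ell\le|V|\), all cycles of that length into a single representative, and \(\operatorname{lcm}\)-chunk transfers then funnel the remaining blocks into the distinguished block. I would run this until no move decreases a chosen well-founded measure. At such a normal form one block \(\beta^k\) carries essentially all of the length, while the leftover consists of at most one non-\(\beta\) cycle per length—hence at most \(|V|\) blocks, each holding fewer than \(|V|\) copies of a cycle of length at most \(|V|\)—together with at most \(|V|+1\) connecting simple paths of length below \(|V|\). Setting \(\alpha\) to be everything left of the dominant block and \(\gamma\) everything to its right yields \(h'=\alpha\beta^k\gamma\), with the same endpoints and length as \(h\) and value never increased along the way, and the count gives \(|\alpha\beta\gamma|<|V|^3+|V|^2+|V|\le 4|V|^3\).

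The main obstacle I anticipate is not the merging move but the control of its dynamics. Lemma~\ref{lem:replace-cycle} only returns a \emph{min}, so we do not get to choose which of the two rewrites we apply; the delicate point is to design a measure—lexicographically, for instance, the cycle mass lying outside the dominant block together with the number of distinct cycle blocks—that strictly decreases under \emph{either} value-safe rewrite, so that the procedure terminates and genuinely reaches a single inflated cycle. A secondary difficulty is that emptying a block makes its two neighbouring connecting paths adjacent, and their concatenation need no longer be simple: it may hide fresh simple cycles that must be re-extracted and themselves merged, so the initial SCD cannot be treated as static. Phrasing the measure to survive these re-extractions, and checking that the irreducible form meets the stated bound (where the loose constant \(4|V|^3\) earns its slack), is the crux; verifying that \(\beta\) stays a simple cycle and dispatching the short or acyclic cases is then routine.
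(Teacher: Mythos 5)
Your proposal follows essentially the same route as the paper's proof: decompose \(h\) into simple paths and iterated simple cycles, use Lemma~\ref{lem:replace-cycle} both to merge same-length simple cycles and---via the observation that \(\beta_i^{|\beta_j|}\) and \(\beta_j^{|\beta_i|}\) are same-length cycles---to pour one block into another leaving a remainder of fewer than \(|V|\) iterations, and finish with the same count. The one point you leave open, the termination measure, is exactly where the paper's details matter: your candidate (cycle mass outside a fixed dominant block) can \emph{increase}, since the min in Lemma~\ref{lem:replace-cycle} may push mass out of the block you chose to inflate, whereas the paper instead orders decompositions by the number of cycle blocks, then the total length of the connecting simple paths (which is what absorbs your re-extraction worry, since extracting a fresh cycle from a merged path strictly shortens the paths), then the number of blocks iterated at least \(|V|\) times---a quantity that decreases whichever way the min falls.
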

\begin{proof}
  \def\SSCDs{\ss{}CDs\xspace}%
  \def\SSCD{\ss{}CD\xspace}%
  In this proof, we focus on SCDs for which each path \(\alpha_i\) is simple; we call
  them \SSCDs.  We define a wellfounded partial order on \SSCDs.  Let
  $D = \langle (\alpha_i)_{i=0}^n, (\beta_j,k_j)_{j=1}^n\rangle$ and
  $D' = \langle (\alpha'_i)_{i=0}^{n'}, (\beta'_j,k'_j)_{j=1}^{n'}\rangle$ be two \SSCDs; we write
  \(D' < D\) iff all the following holds:
  \begin{itemize}
  \item \(D(\star)\) and \(D'(\star)\) have the same starting and ending vertices, the same
    length, and satisfy \(\Val(D'(\star)) \leq \Val(D(\star))\) and \(n' \leq n\);
  \item Either \(n' < n\), or \(|\alpha_0'\cdots\alpha_{n'}'| < |\alpha_0\cdots\alpha_n|\),
    or \(|\{k_i' \geq |V|\}| < |\{k_i \geq |V|\}|\).
  \end{itemize}

  \noindent%
  That this order has no infinite descending chain is clear.  We show two claims:
  \begin{enumerate}
  \item Any \SSCD with \(n\) greater than \(|V|\) has a smaller \SSCD;
  \item Any \SSCD with two \(k_{j}, k_{j'} > |V|\) has a smaller \SSCD.
  \end{enumerate}

  Together they imply that for a smallest \SSCD \(D\), \(D(\star)\) is of the required
  form.  Indeed let \(j\) be the unique value for which \(k_j > |V|\), then the
  statement of the Lemma is satisfied by letting
  \(\alpha = \alpha_0\cdot D(1)\cdots D(j-1)\), \(\beta = \beta_j\), \(k = k_j\), and \(\gamma = \alpha_j\cdot D(j+1)\cdots D(n)\).

  \emph{Claim 1.}\quad Suppose \(D\) has \(n > |V|\). Since all cycles are simple, there
  are two cycles \(\beta_j, \beta_{j'}\), \(j < j'\), of same length.  We can apply
  Lemma~\ref{lem:replace-cycle} on the path
  \(\beta_j\cdot(\alpha_jD(j+1)\cdots D(j'-1))\cdot\beta_{j'}\), and remove one of the two cycles while
  duplicating the other; we thus obtain a similar path of smaller value.  This
  can be done repeatedly until we obtain a path with only one of the two cycles,
  say \(\beta_{j'}\), the other case being similar.  Substituting this path in
  \(D(\star)\) results in:
  \[
      \alpha_0\cdot D(1)\cdots D(j) \cdot \left( \alpha_{j} \cdot D(j+1) \cdots
      D(j'-1)\cdot \beta_{j'}^{k_j+k_{j'}} \right) \cdot \alpha_{j'} \cdot
      D(j'+1)\cdots D(n)\enspace.
  \]%
  This gives rise to a smaller \SSCD as follows.  If \(\alpha_{j-1}\alpha_j\) is still a
  simple path, then the above history is expressible as an \SSCD with a smaller
  number of cycles.  Otherwise, we rewrite
  \(\alpha_{j-1}\alpha_j = \alpha'_{j-1}\beta_j'\alpha'_j\) where
  \(\alpha'_{j-1}\) and \(\alpha_j'\) are simple paths and \(\beta_j'\) is a simple cycle; since
  \(|\alpha'_{j-1}\alpha'_j| < |\alpha_{j-1}\alpha_j|\), the resulting \SSCD is smaller.

  \emph{Claim 2.}\quad Suppose \(D\) has two \(k_{j}, k_{j'} > |V|\),
  \(j < j'\).  Since each cycle in the \SSCD is simple, \(k_{j} \) and \(k_{j'}\) are
  greater than both \(|\beta_j|\) and \(|\beta_{j'}|\); let us write \(k_j = b|\beta_{j'}| + r\)
  with \(0 \leq r < |\beta_{j'}|\), and similarly, \(k_{j'} = b'|\beta_{j}| + r'\).  We have:
  \[D(j)\cdots D(j') = \beta_j^r\cdot \left((\beta_j^{|\beta_{j'}|})^b \cdot \alpha_j \cdot D(j+1)\cdots
    D(j'-1)\cdot(\beta_{j'}^{|\beta_{j}|})^{b'}\right) \cdot
  \beta_{j'}^{r'}\cdot\alpha_{j'}\enspace.\]%
  Noting that \(\beta_{j'}^{|\beta_{j}|}\) and \(\beta_{j}^{|\beta_{j'}|}\) are cycles of the same
  length, we can transfer all the occurrences of one to the other, as in
  Claim~1.  Similarly, if two simple paths get merged and give rise to a cycle,
  a smaller \SSCD can be constructed; if not, then there are now at most
  \(r <|V|\) occurrences of \(\beta_{j'}\) (or conversely, \(r'\) of
  \(\beta_j\)), again resulting in a smaller \SSCD.\qed
\end{proof}

\section{Short witnesses for regret, antagonistic, and collaborative
  values}\label{sec:witnesses}

We continue our technical work towards our algorithm for computing the regret
value.  In this section, the overarching theme is that of \emph{short
  witnesses}.  We show that (1)~The regret value of a strategy is witnessed by
histories of bounded length; (2)~The collaborative value of a game is witnessed
by a simple path and an iterated cycle; (3)~The antagonistic value of a strategy
is witnessed by an SCD and an iterated cycle.

\subsection{Regret is witnessed by histories of bounded length}

\begin{lemma}\label{lem:sup2max}
  Let \(C = 2|V| + \max\{t(v) < \infty\}\).  For any bipositional switching
  strategy $\sigma$ of \eve, we have:
  \begin{align*}
    \regret{v_0}{\sigma} = \max \Big\{
    \lambda^n\left(
    \cVal^{v_n}_{\lnot \sigma(h)} - \aVal^{v_n}(\sigma_h)
    \right) \Big| \qquad\qquad\qquad\qquad \\
    h = v_0 \dots v_n \in \histe{v_0}(\sigma), n \leq C\Big\}    \enspace.
  \end{align*}
\end{lemma}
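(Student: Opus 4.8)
The plan is to start from the local characterisation of regret, Lemma~\ref{lem:reg-by-local}, which already expresses $\regret{v_0}{\sigma}$ as the supremum of the quantities $r(h) \defeq \lambda^{n}\bigl(\cVal^{v_n}_{\lnot\sigma(h)} - \aVal^{v_n}(\sigma_h)\bigr)$ over consistent histories $h = v_0\cdots v_n$. It then suffices to show that this supremum is realised by a history of length at most $C$. Writing $T \defeq \max\{t(v) < \infty\}$, so that $C = 2|V| + T$, I would first note that $\regret{v_0}{\sigma}\ge 0$ (in the definition of regret $\sigma$ is among the competitors $\sigma'$, so each inner term is non-negative); hence only histories with $r(h)\ge 0$ can matter, and it is enough to dominate each such term by a term of length at most $C$. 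Combined with the finiteness of the set of histories of bounded length, this upgrades ``$\sup$'' to ``$\max$''.

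The structural core is that, by bipositionality, the bracket in $r(h)$ stabilises. If $h$ is switched then $\sigma_h=\sigma_2$ and $\sigma(h)=\sigma_2(v_n)$. If $h$ is unswitched and $|h|\ge T$, then every position $i\ge T$ must satisfy $i < t(v_i)$, forcing $t(v_i)=\infty$ there (a finite threshold would be $\le T\le i$ and would switch $h$); in particular $v_n$ has infinite threshold, $\sigma(h)=\sigma_1(v_n)$, and $\sigma_h$ is the nearly-positional strategy described after the definition of switching strategies. In both cases $\cVal^{v_n}_{\lnot\sigma(h)} - \aVal^{v_n}(\sigma_h)$ equals a constant $B(v_n,s)$ depending only on $v_n$ and the status $s\in\{\textrm{switched},\textrm{unswitched}\}$, of which there are at most $2|V|$. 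Thus for $|h|\ge T$ we have $r(h)=\lambda^{|h|}B(v_n,s)$, and since $\lambda\in(0,1)$ a relevant term (one with $B(v_n,s)>0$) is maximised by the shortest history attaining that end vertex and status.

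It then remains to bound the length needed to reach a prescribed vertex with a prescribed status. For the unswitched status this is easy: past position $T$ every Eve move is the positional $\sigma_1$ and every position stays below its infinite threshold, so any cycle occurring after position $T$ can be excised without breaking consistency or the unswitched status, bringing the length into $[T,\,T+|V|]\subseteq[0,C]$. For the switched status I would bound the first switch and the remainder separately: reaching a finite-threshold vertex $u$ by a simple $\sigma_1$-path costs at most $|V|-1$ steps and leaves $u$ at a position at most $|V|-1$; if this already meets $t(u)$ the switch fires, and otherwise one pads by repeating a simple cycle (of length $\le|V|$) until the position first reaches $t(u)\le T$, keeping the prefix length below $|V|+T$. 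After the switch Eve follows the positional $\sigma_2$, so a simple path of length $\le|V|-1$ reaches $v_n$. Excising every remaining redundant cycle---while keeping $u$ at a position $\ge t(u)$ so the switch survives---yields a consistent witness of length at most $|V|+T+|V| = C$, with the same end vertex, the same status, and hence the same value $B(v_n,s)$.

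I expect the main obstacle to be exactly this status-preservation in the switched case, and it is what dictates the additive $T$ in $C$: the switching condition ``$\exists i,\ i\ge t(v_i)$'' is position-sensitive, so deleting a cycle shifts later vertices to earlier positions and may drop the triggering vertex below its threshold, thereby un-switching the history. One must therefore excise cycles only where the status is robust to the shift---everywhere in the unswitched case, but away from the triggering vertex in the switched case---and allow up to $T$ padding steps to force the switch in the first place, while the two graph traversals contribute the two summands $|V|$. Putting the cases together, every term $r(h)\ge 0$ is matched by a term of length at most $C$, so the supremum of Lemma~\ref{lem:reg-by-local} is attained among histories of length at most $C$, which is the claim.
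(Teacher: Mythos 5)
Your overall plan coincides with the paper's: start from Lemma~\ref{lem:reg-by-local}, observe that for histories of length at least $T=\max\{t(v)<\infty\}$ the bracketed quantity depends only on the end vertex and the switch status, and then shorten long histories by cycle surgery. Your unswitched case (excise simple cycles lying entirely after position $T$, where every vertex necessarily has an infinite threshold, so deletion moves vertices to positions that are still $\ge T$ and cannot change the status or break consistency with the positional $\sigma_1$) is essentially the paper's argument, and your explicit remark that only terms with a non-negative bracket matter is a point of rigor the paper leaves implicit. The gap is in the switched case. There you abandon surgery on the given history and instead \emph{build} a witness: a simple $\sigma_1$-path to a finite-threshold vertex $u$, followed by ``padding by repeating a simple cycle until the position first reaches $t(u)$.'' Insertion is not symmetric to deletion: repeating a cycle pushes every vertex of the cycle, and every vertex between the cycle and $u$, to \emph{later} positions, and any such vertex $w$ with a finite threshold will eventually sit at a position $\ge t(w)$, firing the switch before $u$ is reached. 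From that point the constructed walk is no longer consistent with $\sigma$ (it follows $\sigma_1$ where $\sigma$ prescribes $\sigma_2$), and rerouting with $\sigma_2$ from the premature trigger $w$ need not reach $v_n$ at all; so the object you build may fail to lie in $\histe{v_0}(\sigma)$ or to end in $v_n$. You correctly flag that \emph{deleting} a cycle can un-switch a history, but you miss the dual hazard of \emph{inserting} one, and since $t(u)$ can be exponential the number of repetitions---and hence the exposure to this failure---is unbounded.

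The repair, which is what the paper does, is to never insert: write the given long history as $h=h_1\cdot h_2$ with $|h_1|=T$, keep $h_1$ verbatim, and only delete simple cycles from $h_2$ (from its unswitched part if that part is longer than $|V|$, otherwise from its switched part). Every vertex of $h_2$ sits at a position $\ge T$ both before and after a deletion; hence every unswitched vertex there (whose threshold must be infinite) stays unswitched, and the triggering vertex (whose threshold is finite, hence $\le T$) stays at or past its threshold. This preserves membership in $\histe{v_0}(\sigma)$, the end vertex, the status, and therefore $\sigma_h$, $\sigma(h)$, and the bracket, while strictly decreasing the length; iterating yields a witness of length at most $T+2|V|=C$ whose term dominates the original one. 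Your own unswitched case already contains the needed idea---you only have to apply the same deletion-only discipline to the prefix before the trigger (stopping one deletion short of dropping the trigger below its threshold, which leaves its position below $t(u)+|V|\le T+|V|$) rather than rebuilding and padding.
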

\begin{proof}
  Consider a history \(h\) of length greater than \(C\), and write
  \(h = h_1\cdot h_2\) with \(|h_1| = \max\{t(v) < \infty\}\).  Let \(h_2 = p\cdot p'\) where \(p\) is the maximal prefix
  of \(h_2\) such that \(h_1\cdot p\) is unswitched---we set \(p = \epsilon\) if
  \(h\) is switched.  Note that one of \(p\) or
  \(p'\) is longer than \(|V|\)---say \(p\), the other case being similar.  This implies
  that there is a cycle in \(p\), i.e., \(p = \alpha\cdot\beta\cdot\gamma\) with
  \(\beta\) a cycle.  Let \(h' = h_1\cdot\alpha\cdot\gamma\cdot p'\); this history has the same starting and
  ending vertex as \(h\).  Moreover, since \(|h_1|\) is larger than any value of
  the threshold function, \(\sigma_h = \sigma_{h'}\).  Lastly, \(h'\) is still in
  \(\histe{v_0}(\sigma)\), since the removed cycle did not play a role in switching
  strategy.  This shows:
  \[\cVal^{v_n}_{\lnot \sigma(h)} - \aVal^{v_n}(\sigma_h) = \cVal^{v_n}_{\lnot \sigma(h')} -
  \aVal^{v_n}(\sigma_{h'})\enspace.\]

  Since the length of \(h\) is greater than the length of \(h'\), the discounted
  value for \(h'\) will be greater than that of \(h\), resulting in a bigger regret
  value.  There is thus no need to consider histories of size greater than \(C\).  \qed
\end{proof}

It may seem from this lemma and the fact that \(t(v)\) may be very large that we
will need to guess histories of important length.  However, since we will be
considering bipositional switching strategies, we will only be interested in
\emph{some} properties of the histories that are not hard to verify:
\begin{lemma}\label{lem:shorth}
  The following problem is decidable in \NP:
  \begin{problem}
    \given & A game, a bipositional switching strategy \(\sigma\), \\
           & a number \(n\) in binary, a Boolean \(b\), and two vertices \(v, v'\)\\
           \quest & Is there a \(h \in \histe{v_0}(\sigma)\) of length \(n\), switched if \(b\),\\
           & ending in \(v\), with \(\sigma(h) = v'\)?
                       \end{problem}
\end{lemma}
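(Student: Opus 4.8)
The plan is to exploit the bipositionality of $\sigma = \switch{\sigma_1}{t}{\sigma_2}$ to reduce consistency to reachability in two positional subgraphs, and then to cope with the (possibly exponential) length $n$ by guessing a polynomial-size succinct encoding of the witnessing history and verifying it in polynomial time. Write $G_1$ for the graph in which each of \eve's vertices keeps only its $\sigma_1$-edge (Adam's vertices keep all edges), and $G_2$ for the analogous graph for $\sigma_2$. A history is consistent with $\sigma$ exactly when it is a walk that follows $G_1$ while it is unswitched and $G_2$ once it is switched; being switched is governed by the first position $s$ at which $s \geq t(v_s)$. The whole difficulty is that $s$ and $n$ may be exponential, so the history cannot be guessed outright.

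When $b$ asks for a switched history, I would first guess the switching position $s$ (in binary) together with the vertex $w$ at which the switch is triggered, subject to the numeric condition $t(w)\le s$. The history then splits into an unswitched prefix of length $s$ in $G_1$ ending in $w$, and a switched suffix of length $n-s$ in $G_2$ from $w$ to $v$. The suffix carries no position constraint, so by Lemma~\ref{lem:small-witness}, keeping only its length-and-endpoints conclusion, it may be taken of the form $\alpha\cdot\beta^k\cdot\gamma$ with $|\alpha\beta\gamma|\le 4|V|^3$ and $k$ written in binary. The last move on $v$ is then the positional check $\sigma_2(v)=v'$ (and $\sigma_1(v)=v'$ in the unswitched case $b$ false, where there is no suffix and the whole history lives in $G_1$).

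The main obstacle is the unswitched prefix. Staying unswitched forces every vertex at position $i$ to satisfy $i<t(v_i)$, so a finite-threshold vertex may appear only strictly before its threshold; since thresholds range up to $\max\{t(v)<\infty\}$, which is exponential, one cannot compress the prefix by the naive cycle-pumping of Lemma~\ref{lem:small-witness}, because moving a cycle shifts absolute positions and could spuriously trigger (or suppress) a switch. The key idea is to cut the position axis at the at most $|V|$ distinct finite values of $t$, obtaining $O(|V|)$ \emph{epochs}; on a fixed epoch the set of vertices that keep the history unswitched is constant, so within an epoch positions are irrelevant to the switching condition and one may reorganise the segment freely. Applying Lemma~\ref{lem:small-witness} inside each epoch replaces that segment by an $\alpha\cdot\beta^k\cdot\gamma$ of length at most $4|V|^3$ (the reorganisation keeps each vertex inside its epoch's position window, hence below its threshold, so the prefix stays unswitched), while the switch vertex $w$ at position $s$ is exempt from the constraint. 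I would therefore guess the $O(|V|)$ epoch-boundary vertices and one succinct segment per epoch; crucially the segment lengths are \emph{computed}, not guessed, since they are forced by $s$ (or by $n$) together with the threshold values.

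Verification is then routine and polynomial time: check that every guessed $\alpha\cdot\beta^k\cdot\gamma$ is a genuine walk in the appropriate subgraph with the prescribed endpoints and with $|\alpha|+k|\beta|+|\gamma|$ equal to the forced length (arithmetic on binary numbers), that $\beta$ is a cycle correctly attached, that consecutive segments glue along edges present in $G_1$, that each prefix-segment uses only vertices admissible in its epoch, that the lengths sum to $s$ and $n-s$ (resp. to $n$), and that the numeric and positional side-conditions $t(w)\le s$, $\sigma_2(v)=v'$ (resp. $\sigma_1(v)=v'$) hold. Soundness holds because the epoch cut captures exactly the unswitched condition and the trigger at $w$ makes $s$ the first switch; completeness holds because whenever some consistent history of length $n$ with the required switch status and endpoints exists, applying Lemma~\ref{lem:small-witness} epoch-by-epoch yields a certificate of the above form. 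I expect the delicate point to be precisely the confinement of the cycle-compression to position windows on which the unswitched condition is position-independent; once the epoch decomposition is in place, the rest is bookkeeping. \qed
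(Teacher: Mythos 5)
Your proof is correct, and the central insight---cutting the unswitched prefix at the at most $|V|$ distinct finite threshold values so that within each ``epoch'' the unswitched condition becomes a position-independent vertex-avoidance constraint---is exactly the decomposition the paper uses. Where you diverge is in the choice of certificate for each epoch segment: you invoke Lemma~\ref{lem:small-witness} to compress each segment into a path of the form $\alpha\cdot\beta^k\cdot\gamma$ with $k$ in binary, whereas the paper guesses a \emph{flow} per epoch, i.e., a vector of binary edge multiplicities, and verifies by Euler-type conditions (connectivity and degree balance) that it encodes a walk with the prescribed endpoints, length, and forbidden-vertex constraints. The flow certificate is lighter for this lemma, since only \emph{existence} of a history matters here and none of the value-preserving machinery of Lemma~\ref{lem:small-witness} is needed; your route carries that machinery along for free. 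The one point you should make explicit is that your completeness argument needs the compression of Lemma~\ref{lem:small-witness} to introduce no new vertices (otherwise a compressed epoch segment could hit a forbidden vertex and spuriously trigger the switch); this holds because the SCD manipulations in its proof only rearrange edges of the original history, but it is not part of the lemma's statement. Interestingly, the paper itself switches to your style of certificate in Lemma~\ref{lem:aval}, where the value of the history does matter and flows no longer suffice---so your argument is in effect a unified treatment of both lemmas.
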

\begin{proof}
  This is done by guessing multiple flows within the graph \((V, E)\).  Here, we
  call \emph{flow} a valuation of the edges \(E\) by integers, that describes the
  number of times a path crosses each edge.  Given a vector in \(\mathbb{N}^E\),
  it is not hard to check that there is a path that it represents, and to
  extract the initial and final vertices of that path~\cite{reutenauer90}.

  We first order the different thresholds from the strategy
  \(\sigma = \switch{\sigma_1}{t}{\sigma_2}\): let
  \(V_\exists = \{v_1, v_2, \ldots, v_k\}\) with \(t(v_i) \leq t(v_{i+1})\) for all
  \(i\).  We analyze the structure of histories consistent with \(\sigma\).  Let
  \(h \in \histe{}(\sigma)\), and write \(h = h'\cdot h''\) where \(h'\) is the maximal
  unswitched prefix of \(h\).  Naturally, \(h'\) is consistent with \(\sigma_1\) and
  \(h''\) is consistent with \(\sigma_2\).  Then \(h' = h_0h_1\cdots h_i\), for some
  \(i < |V_\exists|\), with:
  \begin{itemize}
  \item \(|h_0| = t(v_1)\) and for all \(1 \leq j < i\), \(|h_j| = t(v_{j+1}) - t(v_j)\);
  \item For all \(0 \leq j \leq i\), \(h_j\) does not contain a vertex \(v_k\) with \(k \leq j\).
  \end{itemize}

  To check the existence of a history with the given parameters, it is thus
  sufficient to guess the value \(i \leq |V_\exists|\), and to guess \(i\) connected flows
  (rather than paths) with the above properties that are consistent with
  \(\sigma_1\).  Finally, we guess a flow for \(h''\) consistent with
  \(\sigma_2\) if we need a switched history, and verify that it is starting at a
  switching vertex. The flows must sum to \(n+1\), with the last vertex being
  \(v'\), and the previous~\(v\).\qed
\end{proof}

\subsection{Short witnesses for the collaborative and antagonistic values}

\begin{lemma}\label{lem:cval}
  There is a set \(P\) of pairs \((\alpha, \beta)\) with \(\alpha\) a simple path and \(\beta\) a simple
  cycle such that:
  \[\cVal^{v_0} = \max\{\Val(\alpha\cdot\beta^\omega) \mid (\alpha, \beta) \in P\}\enspace.\]
  Additionally, membership in \(P\) is decidable in polynomial time w.r.t.\ the game.
\end{lemma}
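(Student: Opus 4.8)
The plan is to reinterpret $\cVal^{v_0}$ as a purely one-player quantity and then exploit positional optimality. Since taking $h = v_0$ gives $\Val(v_0) = 0$ and $\lambda^{0} = 1$, unfolding the definitions of the collaborative value yields
\[
\cVal^{v_0} = \sup_{\sigma}\cVal^{v_0}(\sigma) = \sup_{\sigma}\sup_{\tau}\Val(\out{v_0}{\sigma}{\tau}) = \sup_{\pi}\Val(\pi),
\]
where $\pi$ ranges over all plays starting in $v_0$; indeed every pair of strategies induces a play, and conversely every play is the outcome of some pair of strategies following it. Thus $\cVal^{v_0}$ is exactly the optimal value of the one-player discounted-sum maximization obtained by letting a single maximizer control all of $V$.

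First I would argue that this supremum is attained by a lasso. By positional optimality of discounted-sum games (Zwick--Paterson~\cite{zp96}, which underlies Lemma~\ref{lem:charac-so}), this one-player maximization admits an optimal positional strategy. Fixing such a strategy determines a unique play from $v_0$, and since $V$ is finite and every vertex then has a determined successor, this play is ultimately periodic: it has the form $\alpha \cdot \beta^\omega$, where $\alpha$ is the simple prefix up to the first repeated vertex and $\beta$ is the simple cycle it closes. Hence there is a lasso $(\alpha^\star,\beta^\star)$ with $\alpha^\star$ a simple path from $v_0$, $\beta^\star$ a simple cycle based at the endpoint of $\alpha^\star$, and $\Val(\alpha^\star \cdot (\beta^\star)^\omega) = \cVal^{v_0}$.

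I then define $P$ to be the set of all pairs $(\alpha,\beta)$ such that $\alpha$ is a simple path starting in $v_0$, $\beta$ is a simple cycle whose base vertex is the last vertex of $\alpha$, and all edges of $\alpha$ and $\beta$ lie in $E$, so that $\alpha \cdot \beta^\omega$ is a genuine play. As $\alpha$ and $\beta$ are simple, each has length at most $|V|$, so $P$ is finite and the maximum is well defined. For the inequality $\max\{\Val(\alpha\cdot\beta^\omega) \mid (\alpha,\beta) \in P\} \leq \cVal^{v_0}$, note that every $(\alpha,\beta) \in P$ yields a play $\alpha \cdot \beta^\omega$ from $v_0$, whose value is at most $\sup_{\pi}\Val(\pi) = \cVal^{v_0}$. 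The reverse inequality is witnessed by the pair $(\alpha^\star,\beta^\star) \in P$ from the previous step, giving the desired equality.

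Finally, membership in $P$ is checked in polynomial time in the size of the game: given $(\alpha,\beta)$, one verifies that $\alpha$ and $\beta$ use only edges of $E$, that neither repeats a vertex, that $\alpha$ starts in $v_0$, and that the endpoint of $\alpha$ coincides with the base vertex of $\beta$; each of these is a linear-time test. The only non-routine step is the lasso-attainment of the supremum, which is precisely where positional optimality of discounted-sum games is essential; everything else is bookkeeping.
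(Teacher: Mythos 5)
Your proof is correct and follows essentially the same route as the paper's: both rest on positional optimality from Zwick--Paterson (via Lemma~\ref{lem:charac-so}) to obtain a simple-lasso witness $\alpha\cdot\beta^\omega$ achieving $\cVal^{v_0}$, and both observe that every valid lasso pair yields a value at most $\cVal^{v_0}$ with a linear-time membership test. The only difference is presentational: the paper keeps the two-player framing and takes positional SBO strategies for both Adam and Eve, whereas you merge them into a single maximizer controlling all vertices, which amounts to the same argument.
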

\begin{proof}
  This is a consequence of Lemma~\ref{lem:charac-so}: Consider positional SBO
  strategies \(\tau\) and \(\sigma\) of Adam and Eve, respectively.  Since they are
  positional, the path \(\out{v_0}{\sigma}{\tau}\) is of the form
  \(\alpha \cdot \beta^\omega\), as required, and its value is \(\cVal^{v_0}\).

  Moreover, it can be easily checked that, given a pair \((\alpha, \beta)\), there exists a
  pair of strategies with outcome \(\alpha\cdot\beta^\omega\).  If that holds, the value
  \(\Val(\alpha\cdot\beta^\omega)\) will be at most \(\cVal^{v_0}\).\qed
\end{proof}

\begin{lemma}\label{lem:aval}
  Let \(\sigma\) be a bipositional switching strategy of Eve.  There is a set
  \(K\) of pairs~\((D, \beta)\) with \(D\) an SCD and \(\beta\) a simple cycle such that:
  \[\aVal^{v_0}(\sigma) = \min \{\Val(D(\star)\cdot \beta^\omega) \mid (D, \beta) \in K\}\enspace.\]
  Additionally, the size of each pair is polynomially bounded, and membership in
  \(K\) is decidable in polynomial time w.r.t.\ \(\sigma\) and the game.
\end{lemma}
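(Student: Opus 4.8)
The plan is to mirror the analysis of Lemma~\ref{lem:shorth}, but now tracking \emph{values} rather than the mere existence of histories. Recall that \(\aVal^{v_0}(\sigma) = \inf_\tau \Val(\out{v_0}{\sigma}{\tau})\), so I must describe the shape of an Adam strategy driving the discounted sum down as far as possible against the bipositional switching strategy \(\sigma = \switch{\sigma_1}{t}{\sigma_2}\). The key observation is that any play consistent with \(\sigma\) splits into an \emph{unswitched prefix}, along which Eve plays the positional strategy \(\sigma_1\), followed by a \emph{switched suffix}, along which Eve plays the positional strategy \(\sigma_2\) forever. I treat these two parts separately.

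For the suffix, once the play switches at some vertex \(u\), Eve's moves are fixed by the positional \(\sigma_2\), so Adam faces a one-player discounted-sum optimization on the finite graph obtained by resolving Eve's choices. As in the proof of Lemma~\ref{lem:cval} (via Lemma~\ref{lem:charac-so}), the minimum is attained by a positional Adam strategy, whose outcome is a lollipop \(\delta\cdot\beta^\omega\) with \(\delta\) a simple path and \(\beta\) a simple cycle. This yields the trailing \(\beta^\omega\) of the statement; the finite part \(\delta\) will be absorbed into the SCD.

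For the prefix, I reuse the segmentation from Lemma~\ref{lem:shorth}: ordering \(V_\exists = \{v_1,\dots,v_k\}\) by \(t(v_1)\leq\cdots\leq t(v_k)\), the unswitched prefix decomposes into at most \(k\) segments, where the segment active during steps \([t(v_j), t(v_{j+1}))\) lives in the subgraph avoiding \(v_1,\dots,v_j\) (those vertices would trigger an earlier switch) and where Eve plays \(\sigma_1\). Each segment has a prescribed length that may be exponential in the encoding of \(t\). Within one segment the graph and Eve's moves are fixed, so Adam again minimizes a discounted sum over paths of fixed length; applying Lemma~\ref{lem:small-witness} to the optimal such segment replaces it by \(\alpha_j\cdot\beta_j^{k_j}\cdot\gamma_j\) with \(|\alpha_j\beta_j\gamma_j|\) polynomially bounded and \(\beta_j\) a simple cycle. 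Collecting the segments—and folding in the suffix path \(\delta\) as a trailing path—produces an SCD \(D\) of polynomial description size with \(\aVal^{v_0}(\sigma) = \Val(D(\star)\cdot\beta^\omega)\); letting \(K\) be the set of all such consistent pairs gives the claimed \(\min\). Membership in \(K\) is checked exactly as in Lemma~\ref{lem:shorth}: verify that each segment is consistent with \(\sigma_1\) and avoids the forbidden low-threshold vertices, that the final cycle is consistent with \(\sigma_2\), and that the lengths line up so the switch occurs precisely at the segment boundary, all decidable in polynomial time.

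The delicate point, and the main obstacle, is ensuring that the cycle-shortening of Lemma~\ref{lem:small-witness} does not disturb the \emph{timing} of the switch. Duplicating or deleting a cycle inside a segment changes the step indices, so I must argue, as in Lemma~\ref{lem:sup2max}, that the manipulated cycles lie entirely within a single segment and never contain a vertex whose threshold would thereby be crossed, so the rewritten prefix is still unswitched, of the same total length, and switches at the same vertex. Keeping the segment boundaries aligned with the thresholds throughout the shortening—so that no low-threshold vertex is introduced too early—is the crux of the argument.
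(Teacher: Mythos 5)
Your proposal follows essentially the same route as the paper: split the play into the unswitched part and the part governed by the positional strategy \(\sigma_2\), reuse the threshold segmentation of Lemma~\ref{lem:shorth}, compress each segment via Lemma~\ref{lem:small-witness} in the subgraph where Eve's positional choices are hardcoded, and obtain the infinite tail as a lollipop \(\delta\cdot\beta^\omega\) from positional optimality of Adam. Two remarks. First, the ``crux'' you flag at the end is in fact already discharged by the tools you set up: Lemma~\ref{lem:small-witness} preserves the \emph{length} and the endpoints of each segment, and if you apply it inside the subgraph that both hardcodes \(\sigma_1\) and excludes the forbidden low-threshold vertices \(v_1,\dots,v_j\) (where the original segment already lives), the rewritten segment cannot introduce such a vertex nor shift any time index, so the switch timing is automatically preserved. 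There is no further argument to make there.

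Second, there is one genuine omission: you assert that every play consistent with \(\sigma\) decomposes into an unswitched prefix followed by a switched suffix, but a play may never switch (e.g., if it forever avoids all vertices with finite threshold). In that case your construction produces no pair \((D,\beta)\), and the minimum over \(K\) may fail to reach \(\aVal^{v_0}(\sigma)\). The paper avoids this by cutting the optimal play at length \(C=\max\{t(v)<\infty\}\) rather than at the switch point: beyond that length \(\sigma_h\) is positional in the game augmented with a switch bit, so the infinite tail is a lollipop in \emph{both} the switched and the never-switching case. Your version is repaired the same way, by treating the last (possibly infinite) unswitched segment as a lollipop in the \(\sigma_1\)-hardcoded subgraph avoiding all finite-threshold vertices; but as written the case is missing.
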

\begin{proof}
  Let \(C = \max \{t(v) < \infty\}\), and consider a play \(\pi\) consistent with
  \(\sigma\) that achieves the value \(\aVal^{v_0}(\sigma)\).  Write
  \(\pi = h\cdot\pi'\) with \(|h| = C\), and let \(v\) be the final vertex of
  \(h\).  Naturally:
  \[\aVal^{v_0}(\sigma) = \Val(\pi) = \Val(h) + \lambda^{|h|} \Val(\pi')\enspace.\]

  We first show how to replace \(\pi'\) by some \(\alpha\cdot\beta^\omega\), with
  \(\alpha\) a simple path and \(\beta\) a simple cycle.  First, since \(\pi\) witnesses
  \(\aVal^{v_0}(\sigma)\), we have that \(\Val(\pi') = \aVal^v(\sigma_h)\).  Now
  \(\sigma_h\) is positional, because \(|h| \geq C\).\footnote{Technically,
    \(\sigma_h\) is positional in the game where we record whether the switch was
    made.}  It is known that there are optimal positional antagonistic
  strategies \(\tau\) for Adam, that is, that satisfy
  \(\aVal^v(\sigma_h) = \out{v}{\sigma_h}{\tau}\).  As in the proof of Lemma~\ref{lem:cval},
  this implies that
  \(\aVal^v(\sigma_h) = \Val(\alpha\cdot\beta^\omega) = \Val(\pi')\) for some
  \(\alpha\) and \(\beta\); additionally, any \((\alpha, \beta)\) that are consistent with
  \(\sigma_h\) and a potential strategy for Adam will give rise to a bigger value.

  We now argue that \(\Val(h)\) is witnessed by an SCD of polynomial size.  This
  bears similarity to the proof of Lemma~\ref{lem:shorth}.  Specifically, we
  will reuse the fact that histories consistent with \(\sigma\) can be split into
  histories played ``between thresholds.''

  Let us write \(\sigma = \switch{\sigma_1}{t}{\sigma_2}\).  Again, we let
  \(V_\exists = \{v_1, v_2, \ldots, v_k\}\) with \(t(v_i) \leq t(v_{i+1})\) for all
  \(i\) and write \(h = h'\cdot h''\) where \(h'\) is the maximal unswitched prefix of
  \(h\).  We note that \(h'\) is consistent with \(\sigma_1\) and \(h''\) is consistent with
  \(\sigma_2\).  Then \(h' = h_0h_1\cdots h_i\), for some \(i < |V_\exists|\), with:
  \begin{itemize}
  \item \(|h_0| = t(v_1)\) and for all \(1 \leq j < i\), \(|h_j| = t(v_{j+1}) - t(v_j)\);
  \item For all \(0 \leq j \leq i\), \(h_j\) does not contain a vertex \(v_k\) with \(k \leq j\).
  \end{itemize}

  We now diverge from the proof of Lemma~\ref{lem:shorth}.  We apply
  Lemma~\ref{lem:small-witness} on each \(h_j\) in the game where the strategy
  \(\sigma_1\) is hardcoded (that is, we first remove every edge
  \((u, v) \in V_\exists \times V\) that does not satisfy \(\sigma_1(u) = v\)).  We obtain a history
  \(h_0'h_1'\cdots h_i'\) that is still in \(\histe{}(\sigma)\), thanks to the previous
  splitting of \(h\).  We also apply Lemma~\ref{lem:small-witness} to \(h'\), this
  time in the game where \(\sigma_2\) is hardcoded, obtaining \(h''\).  Since each
  \(h_j'\) and \(h''\) are expressed as \(\alpha\cdot\beta^k\cdot\gamma\), there
  is an SCD
  \(D\) with no more than \(|V_\exists|\) elements that satisfies
  \(\Val(D(\star)) \leq \Val(h)\)---naturally, since \(\Val(h)\) is minimal and
  \(D(\star) \in \histe{}(\sigma)\), this means that the two values are equal.  Note that it
  is not hard, given an SCD \(D\), to check whether
  \(D(\star) \in \histe{}(\sigma)\), and that SCDs that are not valued
  \(\Val(h)\) have a bigger value.\qed
\end{proof}

\section{The complexity of regret}\label{sec:algos}

We are finally equipped to present our algorithms.  To account for the cost of
numerical analysis, we rely on the problem \PosSLP~\cite{abkm09}.  This problem
consists in determining whether an arithmetic circuit with addition,
subtraction, and multiplication gates, together with input values, evaluate to a
positive integer. \PosSLP~is known to be decidable in the so-called counting
hierarchy, itself contained in the set of problems decidable using polynomial
space.

\begin{theorem}\label{thm:compute}
  The following problem is decidable in \(\NP^\PosSLPcx\):
  \begin{problem}
    \given & A game, a bipositional switching strategy \(\sigma\), \\
           & a value \(r \in \mathbb{Q}\) in binary\\
    \quest & Is \(\regret{v_0}{\sigma} > r\)?
  \end{problem}  
\end{theorem}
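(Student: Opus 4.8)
The plan is to turn the characterization of Lemma~\ref{lem:sup2max} into a guess-and-check procedure. By that lemma, $\regret{v_0}{\sigma} > r$ holds if and only if there is a history $h = v_0 \cdots v_n \in \histe{v_0}(\sigma)$ with $n \leq C$ and $\lambda^n(\cVal^{v_n}_{\lnot \sigma(h)} - \aVal^{v_n}(\sigma_h)) > r$. Since $\sigma$ is bipositional, the quantity inside the maximum depends on $h$ only through its length $n$, its last vertex $v_n$, and whether $h$ is switched: indeed $\sigma(h)$ is then $\sigma_1(v_n)$ or $\sigma_2(v_n)$, and the residual $\sigma_h$ is again a bipositional switching strategy (either $\sigma_2$, or $\switch{\sigma_1}{t'}{\sigma_2}$ with $t'(v) = \max\{0, t(v) - n\}$). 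I would therefore guess these three data: $n$ in binary ($C$, and hence $n$, has polynomially many bits, as the thresholds are given in binary as part of the input), a Boolean $b$ recording whether $h$ is switched, and the vertex $v_n$; this determines $v' = \sigma(h)$ and the strategy $\sigma_h$. I would then invoke the \NP{} procedure of Lemma~\ref{lem:shorth}, inlining its flow guesses, to certify that a history $h \in \histe{v_0}(\sigma)$ with exactly these parameters exists.

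The next step is to certify $\cVal^{v_n}_{\lnot v'} - \aVal^{v_n}(\sigma_h) > r/\lambda^n$ without ever computing the two values exactly. The crucial observation is that, writing $\cVal^{v_n}_{\lnot v'}$ as the maximum of $\Val(\alpha\cdot\beta^\omega)$ over short witnesses (a routine adaptation of Lemma~\ref{lem:cval}, started at $v_n$, in which the first edge of $\alpha$ is forced to avoid $v'$), and $\aVal^{v_n}(\sigma_h)$ as the minimum of $\Val(D(\star)\cdot\beta'^\omega)$ over short witnesses (Lemma~\ref{lem:aval}, applicable precisely because $\sigma_h$ is a bipositional switching strategy), one has
\[ \cVal^{v_n}_{\lnot v'} - \aVal^{v_n}(\sigma_h) = \max_{(\alpha,\beta)} \max_{(D,\beta')} \big( \Val(\alpha\cdot\beta^\omega) - \Val(D(\star)\cdot\beta'^\omega) \big)\enspace. \]
Hence the difference exceeds $r/\lambda^n$ if and only if there exist a $\cVal$-witness $(\alpha,\beta)$ and an $\aVal$-witness $(D,\beta')$ whose individual values already do. So I would guess both witnesses — each of polynomial size by Lemmas~\ref{lem:cval} and~\ref{lem:aval} — and verify their membership conditions in polynomial time. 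The point worth stressing is that a single guessed pair certifies a gap between a maximum and a minimum: the $\cVal$-witness lower-bounds $\cVal^{v_n}_{\lnot v'}$ and the $\aVal$-witness upper-bounds $\aVal^{v_n}(\sigma_h)$, so their difference bounds the true difference from below, giving soundness, while the optimal witnesses give completeness.

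It remains to check the single arithmetic inequality $\lambda^n(\Val(\alpha\cdot\beta^\omega) - \Val(D(\star)\cdot\beta'^\omega)) > r$, and this is where the numerical cost enters: $n$ and the cycle multiplicities inside $D$ are in binary, so the exponents of $\lambda = p/q$ occurring here may be exponentially large. However, the value of an ultimately periodic play has the closed form $\Val(\alpha) + \lambda^{|\alpha|}\Val(\beta)/(1-\lambda^{|\beta|})$, and $\Val(D(\star))$ is a finite sum of such terms with geometric factors $(1-\lambda^{k_j|\beta_j|})/(1-\lambda^{|\beta_j|})$; every power $\lambda^m$ with $m$ in binary is produced by an arithmetic circuit of size $O(\log m)$ via repeated squaring. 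Clearing all denominators — each of the form $q^{m}$ or $1-\lambda^{m}$, which are strictly positive since $0<\lambda<1$ — preserves the direction of the inequality and yields a single integer arithmetic circuit whose positivity is equivalent to the desired inequality. I would query the \PosSLP{} oracle on this circuit and accept accordingly, giving a polynomial-time nondeterministic computation with a \PosSLP{} oracle call and thus membership in $\NP^\PosSLPcx$. The only genuine obstacle is precisely this handling of exponentially large exponents, which is exactly what delegating the final comparison to \PosSLP{} resolves.
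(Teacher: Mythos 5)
Your proposal is correct and follows essentially the same route as the paper's proof: guessing the parameters $(n, b, v_n)$ of the history via Lemma~\ref{lem:shorth}, guessing a $\cVal$-witness from Lemma~\ref{lem:cval} and an $\aVal$-witness from Lemma~\ref{lem:aval} for the residual strategy $\sigma_h$, and delegating the final comparison against $r$ to a \PosSLP{} oracle. Your additional remarks on soundness/completeness of the witness pair and on building the arithmetic circuit by repeated squaring merely spell out what the paper defers to its appendix on representing long-history values.
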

\begin{proof}
  \def\tc{{\text{c}}}\def\ta{{\text{a}}}
  Let us write \(\sigma = \switch{\sigma_1}{t}{\sigma_2}\).  Lemma~\ref{lem:sup2max} indicates
  that \(\regret{v_0}{\sigma} > r\) holds if there is a history \(h\) of some length
  \(n \leq C = 2|V| + \max\{t(v) < \infty\}\), ending in some \(v_n\) such that:
  \begin{align}\label{eq:gr}
    \lambda^n\left(
    \cVal^{v_n}_{\lnot \sigma(h)} - \aVal^{v_n}(\sigma_h)
    \right)  > r\enspace.  
  \end{align}
  Note that since \(\sigma\) is bipositional, we do not need to know everything
  about~\(h\). Indeed, the following suffice: its length \(n\), final vertex \(v_n\),
  \(v' = \sigma(h)\), and whether it is switched.  Rather than guessing \(h\), we can
  thus rely on Lemma~\ref{lem:shorth} to get the required information.  We start
  by simulating the NP machine that this lemma provides, and verify that
  \(n, v_n,\) and \(v\) are consistent with a potential history.

  Let us now concentrate on the collaborative value that we need to evaluate in
  Equation~\ref{eq:gr}.  To compute \(\cVal\), we rely on Lemma~\ref{lem:cval},
  which we apply in the game where \(v_n\) is set initial, and its successor
  forced not to be \(v\).  We guess a pair \((\alpha_\tc, \beta_\tc) \in P\); we thus have
  \(\Val(\alpha_\tc\cdot\beta_\tc^\omega) \leq \cVal^{v_n}_{\lnot \sigma(h)}\), with at least one guessed
  pair \((\alpha_\tc, \beta_\tc)\) reaching that latter value.

  Let us now focus on computing \(\aVal^{v_n}(\sigma_h)\).  Since \(\sigma\) is a bipositional
  switching strategy, \(\sigma_h\) is simply \(\sigma\) where \(t(v)\) is changed to
  \(\max\{0, t(v) - n\}\).  Lemma~\ref{lem:aval} can thus be used to compute our
  value.  To do so, we guess a pair \((D, \beta_\ta) \in K\); we thus have
  \(\Val(D(\star)\cdot\beta_\ta^\omega) \geq \aVal^{v_n}(\sigma_h)\), and at least one pair
  \((D, \beta_\ta)\) reaches that latter value.

  Our guesses satisfy:
  \[\cVal^{v_n}_{\lnot \sigma(h)} -
  \aVal^{v_n}(\sigma_h) \geq \Val(\alpha_\tc\cdot\beta_\tc^\omega) - \Val(D(\star)\cdot\beta_\ta^\omega) \enspace,\]%
  and there is a choice of our guessed paths and SCD that gives exactly the
  left-hand side.  Comparing the left-hand side with \(r\) can be done using an
  oracle to \PosSLP\ (see Appendix, Section~\ref{sec:posslp}), concluding the
  proof.  \qed
\end{proof}

\begin{theorem}\label{thm:reg}
  The following problem is decidable in \(\coNP^{\NP^\PosSLPcx}\):
  \begin{problem}
    \given & A game, a value \(r \in \mathbb{Q}\) in binary\\
    \quest & Is \(\Regret{v_0} > r\)?
  \end{problem}
\end{theorem}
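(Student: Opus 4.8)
The plan is to lift the single-strategy test of Theorem~\ref{thm:compute} to the whole game, exploiting the fact that the minimal regret is always \emph{attained} by a bipositional switching strategy. Optipess strategies are regret-minimal (Theorem~\ref{thm:regret-optipess}) and, by definition, are switching strategies $\switch{\sbostrat}{t}{\sbwostrat}$ built from two positional strategies, hence bipositional. Since $\Regret{v_0}$ is the infimum of $\regret{v_0}{\sigma}$ over \emph{all} strategies, every bipositional switching strategy satisfies $\regret{v_0}{\sigma} \geq \Regret{v_0}$, while an optipess strategy realises equality; thus the infimum is in fact a minimum over bipositional switching strategies. First I would record the resulting characterisation
\[
\Regret{v_0} > r \iff \regret{v_0}{\sigma} > r \text{ for every bipositional switching strategy } \sigma\enspace.
\]

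The algorithm is then a $\coNP$ procedure with an oracle for the $\NP^\PosSLPcx$ problem $A$ of Theorem~\ref{thm:compute} (namely, ``given $\calG$, a bipositional switching strategy $\sigma$, and $r$, is $\regret{v_0}{\sigma} > r$?''): it universally guesses a bipositional switching strategy $\sigma$, queries $A$ on $(\calG, \sigma, r)$, and accepts the branch exactly when $A$ answers positively. By the characterisation above every branch accepts iff $\Regret{v_0} > r$, which places the problem in $\coNP^{\NP^\PosSLPcx}$.

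The step I expect to require the most care is the justification that this universal guess may be restricted to bipositional switching strategies of \emph{polynomial size}, so that a single one can be guessed and fed to Theorem~\ref{thm:compute}, and so that the quantifier equivalence above is sound in both directions. A bipositional switching strategy is described by the two positional strategies (each a map $V_\exists \to V$) together with the threshold function $t$. The delicate part is that the optipess thresholds $t(v) = \inf\{ i \mid \lambda^i(\cVal^v - \aVal^v) \leq \Regret{v_0}\}$ may be exponentially large in magnitude; I would verify that, written in binary, they nonetheless have polynomial bit-length, so that the regret-minimal optipess witness is polynomially encodable. This reduces to bounding the least index $i$ in the definition of $t(v)$ by $2^{\mathrm{poly}}$, which follows from the succinct arithmetic-circuit representability of the game's discounted-sum values already underlying Theorem~\ref{thm:compute}. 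With this bound in place, both directions of the equivalence hold for polynomial-size strategies and the membership claim follows.
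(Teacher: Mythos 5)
Your proposal matches the paper's proof essentially step for step: reduce to universally checking all polynomial-size bipositional switching strategies (sound because optipess strategies are regret-minimal members of that class), invoke the $\NP^{\PosSLPcx}$ test of Theorem~\ref{thm:compute} as the oracle, and bound the finite thresholds $t(v)$ by an exponential so the optipess witness is polynomially encodable. The only difference is in the sketched justification of that last bound: the paper derives it (in its appendix) from a $2^{-\mathrm{poly}}$ lower bound on non-zero regret, obtained from the fact that $\cVal^v$ and $\aVal^v$ are realized by simple lassos and are thus rationals of polynomial bit-length, rather than from arithmetic-circuit representability per se.
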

\begin{proof}
  To decide the problem at hand, we ought to check that \emph{every} strategy
  has a regret value greater than \(r\).  However, optipess strategies being
  regret-minimal, we need only check this for a class of strategies that
  contains optipess strategies: bipositional switching strategies form one such
  class.

  What is left to show is that optipess strategies can be encoded in
  \emph{polynomial space}.  Naturally, the two positional strategies contained
  in an optipess strategy can be encoded succinctly.  We thus only need to show
  that, with \(t\) as in the definition of optipess strategies
  (page~\pageref{def:optipess}), \(t(v)\) is at most exponential for every
  \(v \in V_\exists\) with $t(v) \in \mathbb{N}$.  This is shown in Appendix,
  Section~\ref{sec:boundopti}.\qed
\end{proof}

\begin{theorem}
  The following problem is decidable in \(\coNP^{\NP^\PosSLPcx}\):
  \begin{problem}
    \given & A game, a bipositional switching strategy \(\sigma\)\\
    \quest & Is \(\sigma\) regret optimal?
  \end{problem}
\end{theorem}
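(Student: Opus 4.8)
The plan is to reduce the question to a single application of Theorem~\ref{thm:compute}, exploiting the fact (already used for Theorem~\ref{thm:reg}) that optipess strategies are bipositional switching strategies, are built from a positional SBO and a positional SBWO strategy and hence can be encoded in polynomial space with exponentially bounded thresholds, and are regret-minimal by Theorem~\ref{thm:regret-optipess}. Since such a strategy witnesses $\Regret{v_0}$ and lies in the class of poly-encodable bipositional switching strategies, the input strategy $\sigma$ is regret optimal, i.e.\ $\regret{v_0}{\sigma} = \Regret{v_0}$, \emph{if and only if} no poly-encodable bipositional switching strategy $\sigma'$ satisfies $\regret{v_0}{\sigma'} < \regret{v_0}{\sigma}$. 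I would therefore show that the \emph{complement} problem---does such a $\sigma'$ exist?---lies in $\NP^{\NP^\PosSLPcx}$; the stated bound then follows, since $\coNP^{\NP^\PosSLPcx}$ is the complement class and, for oracle machines, $\NP^{\coNP^\PosSLPcx} = \NP^{\NP^\PosSLPcx}$.

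The crux is a rearrangement of quantifiers. Writing both regrets as maxima over bounded-length histories via Lemma~\ref{lem:sup2max}, one has $\regret{v_0}{\sigma'} < \regret{v_0}{\sigma}$ iff there is a \emph{single} history $h = v_0 \cdots v_n \in \histe{v_0}(\sigma)$ with $\lambda^n\big(\cVal^{v_n}_{\lnot\sigma(h)} - \aVal^{v_n}(\sigma_h)\big) > \regret{v_0}{\sigma'}$. Using $\cVal^{v_n}_{\lnot\sigma(h)} - \aVal^{v_n}(\sigma_h) = \max\big(\Val(\alpha\cdot\beta^\omega) - \Val(D(\star)\cdot\gamma^\omega)\big)$ over $(\alpha,\beta)\in P$ (Lemma~\ref{lem:cval}, in the game with $v_n$ initial and the first move forced away from $\sigma(h)$) and $(D,\gamma)\in K$ (Lemma~\ref{lem:aval}, for $\sigma_h$), exactly as in Theorem~\ref{thm:compute}, this holds iff there are a collaborative witness $(\alpha_{\mathrm c},\beta_{\mathrm c})\in P$ and an antagonistic witness $(D,\beta_{\mathrm a})\in K$ for $\sigma$ such that the value $L := \lambda^n\big(\Val(\alpha_{\mathrm c}\cdot\beta_{\mathrm c}^\omega) - \Val(D(\star)\cdot\beta_{\mathrm a}^\omega)\big)$ exceeds $\regret{v_0}{\sigma'}$. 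Thus the complement problem reads: guess $\sigma'$, the relevant data of $h$ (length $n$, final vertex $v_n$, the value $\sigma(h)$, and its switched bit), and the two witnesses above; check that they are realizable; and verify $\regret{v_0}{\sigma'} < L$.

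The outer guessing-and-verification is in $\NP$: the data of $h$ are certified by the flow guesses of Lemma~\ref{lem:shorth}, membership in $P$ and $K$ is decidable in polynomial time, and $L$ is produced as a polynomially-sized arithmetic circuit (repeated squaring for the exponentials $\lambda^n,\lambda^{|\alpha_{\mathrm c}|},\dots$), just as in Theorem~\ref{thm:compute}. It then remains to decide, for the guessed $\sigma'$ and circuit $L$, whether $\regret{v_0}{\sigma'} < L$. This is the complement of the predicate $\regret{v_0}{\sigma'} \ge L$, which is in $\NP^\PosSLPcx$ by replaying the proof of Theorem~\ref{thm:compute} verbatim, comparing against the guessed algebraic value $L$ rather than a fixed rational $r$: one guesses a bounded history together with its collaborative and antagonistic witnesses for $\sigma'$ and issues a single \PosSLP{} query comparing the resulting discounted sum with $L$. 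Hence ``$\regret{v_0}{\sigma'} < L$'' is in $\coNP^\PosSLPcx$, and supplying it as an oracle to the outer $\NP$ machine puts the complement in $\NP^{\coNP^\PosSLPcx} = \NP^{\NP^\PosSLPcx}$, so the problem itself is in $\coNP^{\NP^\PosSLPcx}$.

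I expect the main obstacle to be the bookkeeping around the guessed threshold $L$: Theorem~\ref{thm:compute} compares a regret value to a \emph{fixed} rational, whereas here $L$ is an algebraic quantity depending on guessed paths and an SCD, so one must ensure it is emitted as one polynomial-size \PosSLP{} instance and that the comparison $\regret{v_0}{\sigma'} < L$ is resolved within a single oracle query rather than by nesting oracles. The only other point needing care is the justification that restricting $\sigma'$ to poly-encodable bipositional switching strategies loses nothing, which follows because optipess strategies belong to this class and are regret-minimal by Theorem~\ref{thm:regret-optipess}, with thresholds bounded exponentially as established for Theorem~\ref{thm:reg}.
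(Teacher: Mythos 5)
Your proposal is correct and follows essentially the same route as the paper: both arguments reduce regret-optimality to comparing $\regret{v_0}{\sigma}$ against the regrets of all polynomially-encodable bipositional switching strategies $\sigma'$ (a class containing the regret-minimal optipess strategies, with thresholds bounded as in Theorem~\ref{thm:reg}), using Theorem~\ref{thm:compute} with an arithmetic-circuit threshold in place of a binary rational. The only cosmetic difference is that you existentially guess the witnessing circuit value $L$ from the history and witness data of Lemma~\ref{lem:sup2max}, Lemma~\ref{lem:cval}, and Lemma~\ref{lem:aval}, where the paper universally quantifies over all polynomial-size circuits $r$ and checks the implication $\regret{v_0}{\sigma} > r \Rightarrow \regret{v_0}{\sigma'} > r$; the two formulations are equivalent and land in the same class.
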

\begin{proof}
  A consequence of the proof of Theorem~\ref{thm:compute} and the existence of
  optipess strategies is that the value \(\Regret{v_0}\) of a game can be
  computed by a polynomial size arithmetic circuit. Moreover, our reliance on
  \(\PosSLP\) allows the input \(r\) Theorem~\ref{thm:compute} to be represented
  as an arithmetic circuit without impacting the complexity.  We can thus verify
  that for all bipositional switching strategies $\sigma'$ (with sufficiently
  large threshold functions) and all possible polynomial size arithmetic
  circuits, \(\Regret{v_0}(\sigma) > r\) implies that \(\Regret{v_0}(\sigma') >
  r\). The latter holds if and only if $\sigma$ is regret optimal since, as
  we have argued in the proof of Theorem~\ref{thm:reg}, such strategies
  $\sigma'$ include optipess strategies and thus regret-minimal
  strategies.\qed
\end{proof}


\section{Conclusion}
We studied \emph{regret}, a notion of interest for an agent that does not want
to assume that the environment she plays in is simply adversarial.  We showed
that there are strategies that both minimize regret, and are not consistently
worse than any other strategies.  The problem of computing the minimum regret
value of a game was then explored, and a better algorithm was provided for it.

The exact complexity of this problem remains however open. The only known lower
bound, a straightforward adaptation of~\cite[Lemma 3]{hpr17} for discounted-sum
games, shows that it is at least as hard as solving parity
games~\cite{jurdzinski98}.  Our upper bound could be significantly improved if
we could efficiently solve the following problem:
\begin{problem}
  \given & \((a_i)_{i=1}^n \in \mathbb{Z}^n,\; (b_i)_{i=1}^n \in \mathbb{N}^n\), and \(r
  \in \mathbb{Q}\) all in binary,\\[.7em]
  \quest & Is \(\sum_{i=1}^n a_i\cdot r^{b_i} > 0\)?
\end{problem}
\noindent
The exact complexity of that problem seems to be open even for \(n = 3\).

\subsubsection{Acknowledgements.}
We thank Rapha\"el Berton and Isma\"el Jecker for helpful conversations on the
length of maximal (and minimal) histories in discounted-sum games; and
James Worrell and Jo\"el Ouaknine for pointers on the complexity of comparing
succinctly represented integers.

\clearpage
\bibliographystyle{splncs04}
\bibliography{refs}

\clearpage
\appendix

\section{Incomparability of Admissible and Regret-Minimal
Strategies}\label{sec:incomparable}


Let $0 < \lambda <1$. 

Consider the discounted-sum game depicted in Example~\ref{exp:reg_min_not_adm}.
Let $\sigma$ be the strategy of \eve corresponding to the double edges.  This
strategy is \emph{not} admissible: it is dominated by the alternative strategy
$\sigma'$ of \eve that behaves like $\sigma$ from $v_1$ but that chooses to go
to $v'_2$ from $v_2$.  Indeed, if $\tau$ is a strategy of \adam that goes to
$v_1$, then the outcome plays of $\sigma$ and $\sigma'$ are the same, thus have
the same value.  Now, if $\tau$ is a strategy of \adam that goes to $v_2$, then
the value of the outcome play of $\sigma$ and $\tau$ is  $0$, while the value of
the outcome play of $\sigma$ and $\tau$ is $\sum_{i=2}^\infty \lambda^i$ which
is strictly greater than $0$.  However, the strategy $\sigma$ \emph{is}
regret-minimizing: Recall that $\regret{v_0}{\sigma} = \sup_{\tau}
\sup_{\sigma'} \Val(\out{v_0}{\sigma'}{\tau}) - \Val(\out{v_0}{\sigma}{\tau}) $.
If $\tau$ is the strategy of \adam that goes to $v_2$, then the maximal
difference of values between plays following $\sigma$ and plays following
alternative strategies is actually attained with $\sigma'$, and is thus
$\sum_{i=2}^\infty \lambda^i$.  Now, if \adam goes to $v_1$, the maximal
difference of values between plays following $\sigma$ and plays following
alternative strategies is  $\sum_{i=1}^\infty 2\lambda^i$: if the strategy of
\adam is such that it chooses to go to $y$ from $v'_1$, and to $x$ from $v''_1$,
playing $\sigma$ yields a play value of $0$, while going to $v''_1$ yields a
play value of $\sum_{i=2}^\infty 2\lambda^i$, which is strictly greater than
$\sum_{i=2}^\infty \lambda^i$, since $\lambda > 0$.  Thus, we have that
$\regret{v_0}{\sigma} = \sum_{i=2}^\infty 2\lambda^i$.  Symmetrically, any
strategy that chooses to go to $v''_1$ from $v_1$ also has a regret of
$\sum_{i=2}^\infty 2\lambda^i$.  Thus, the strategy $\sigma$ is
regret-minimizing.

Consider now the discounted-sum game depicted in
Example~\ref{exp:adm_not_reg_min}.  Let $\sigma$ be the strategy of \eve
corresponding to the double edges.  This strategy \emph{is} admissible: In this
game, \eve has only two available strategies: $\sigma$ and the strategy
$\sigma'$ that goes to $v_1$ from $v_0$.  It is easy to see that $\sigma$ is not
weakly dominated by $\sigma'$.
Indeed, let us fix the strategy $\tau$ of \adam that goes to
$v''_1$ from $v_1$.  Against $\sigma$, it yields a play value of
$\sum^\infty_{i=2} 6\lambda^i$, while against $\sigma'$, it yields a strictly
smaller play value of $\sum^\infty_{i=2} 5\lambda^i$.  Hence, $\sigma$ is not
dominated by $\sigma'$, and is thus admissible.  The strategy $\sigma$ is
however \emph{not} regret-minimizing.  In fact, the strategy $\sigma'$ has a
smaller regret.  Indeed, the regret of $\sigma$ is the difference between the
best possible outcome value of $\sigma'$, which is $\sum^\infty_{i=2} 10
\lambda^i$ and its own only outcome value $\sum^\infty_{i=2} 6\lambda^i$, that
is, a regret of $\sum^\infty_{i=2} 4\lambda^i$.  On the other hand, the regret
of $\sigma'$ is the difference between the best and only possible outcome value
of $\sigma$, which is $\sum^\infty_{i=2} 6 \lambda^i$ and its own worst possible
outcome value $\sum^\infty_{i=2} 5\lambda^i$, that is, a regret of
$\sum^\infty_{i=2} \lambda^i$, which is strictly less than$\sum^\infty_{i=2}
4\lambda^i$.  Hence, the strategy $\sigma$ is not regret-minimizing. Notice that
the strategy $\sigma'$ is in fact an optipess strategy (even though rather
trivially).

\section{A proof of the existence of SBWO strategies}
We prove the existence of positional SBWO strategies and their
characterization stated in Lemma~\ref{lem:charac-so}.
\begin{proof}
    The characterization of SBWO strategies actually follows directly from the
    characterization of SWO and SBO strategies also given by
    Lemma~\ref{lem:charac-so}, and from the definition of $\acVal$. Below, we
    focus on the positionality claim.

    From~\cite[Theorem 5.1]{zp96} we know that for all $u \in V_\exists$ it
    holds that
    \[
        \aVal^u = \max_{(u,v) \in E} w(u,v) + \lambda \cdot \aVal^v.
    \]
    Denote by $\mathcal{A}$ the game obtained by restricting $\calG$ to the
    subset of edges $E' = \{(u,v) \in E \st u \in V_\exists \implies \aVal^u =
    w(u,v) + \lambda \cdot \aVal^v\}$. It should be clear that $\mathcal{A}$
    characterizes the set of all SWO strategies, i.e. a strategy of \eve is SWO
    in $\calG$ if and only if it is a valid strategy in $\mathcal{A}$. Moreover,
    by definition of $\acVal$, we have that a strategy of \eve in $\calG$ is
    SBWO if and only if it is SBO in $\mathcal{A}$. To conclude, we recall that
    positional SBO strategies for \eve exist in $\mathcal{A}$ (see
    Lemma~\ref{lem:charac-so}, which follows from the corresponding
    \emph{Bellman optimality equations} for $\cVal$ given by~\cite[Theorem
    5.1]{zp96}). \qed
\end{proof}

\section{Proof of Theorem~\ref{thm:regret-optipess}}
It is known that minimal-regret strategies always exist.
\begin{lemma}[Follows from~{\cite[Proposition 18]{hpr16}}]
    \label{lem:reg-min-strats}
    For all games and all initial vertices $v_0$, there exists a strategy
    $\sigma$ of \eve such that $\regret{v_0}{\sigma} = \Regret{v_0}$.
\end{lemma}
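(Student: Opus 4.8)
The plan is to prove that the infimum $\Regret{v_0} = \inf_\sigma \regret{v_0}{\sigma}$ is \emph{attained}, i.e.\ that some strategy realizes the minimal regret, by a compactness and lower-semicontinuity argument. First I would equip the set $\Sigma$ of \eve's strategies with the product topology: a strategy assigns to each history in $\histe{v_0}$ one of finitely many successors, so $\Sigma$ is a (countable) product of finite discrete sets and is compact by Tychonoff's theorem. It is nonempty since every vertex has an outgoing edge.

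The crux is to show that $\sigma \mapsto \regret{v_0}{\sigma}$ is lower semicontinuous on $\Sigma$. Rewriting the definition, for a fixed \adam strategy $\tau$ put $B(\tau) \defeq \sup_{\sigma'} \Val(\out{v_0}{\sigma'}{\tau})$, which does not depend on $\sigma$ and is finite (a discounted sum of bounded integer weights); then $\regret{v_0}{\sigma} = \sup_\tau \bigl(B(\tau) - \Val(\out{v_0}{\sigma}{\tau})\bigr)$. For each fixed $\tau$ the map $\sigma \mapsto \Val(\out{v_0}{\sigma}{\tau})$ is continuous: if two strategies agree on all histories of length at most $N$, then their outcomes against $\tau$ share their first $N$ edges, so by discounting their values differ by at most $2\wmax\,\lambda^N/(1-\lambda)$, which tends to $0$ as $N \to \infty$ (here $\wmax$ bounds the absolute edge weights). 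Thus each summand $B(\tau) - \Val(\out{v_0}{\sigma}{\tau})$ is continuous in $\sigma$, and $\regret{v_0}{\cdot}$, being a supremum of continuous functions, is lower semicontinuous.

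It then remains to invoke the standard fact that a lower-semicontinuous function on a nonempty compact space attains its infimum: the nested nonempty closed sublevel sets $\{\sigma \mid \regret{v_0}{\sigma} \le \Regret{v_0} + 1/n\}$ have the finite-intersection property, so their intersection contains a minimizer. The step I expect to be most delicate is the continuity claim, which is precisely where discounting is indispensable---the analogous statement fails for undiscounted objectives such as mean-payoff, since there a change deep in the play can still shift the value. A shorter alternative would simply cite the explicit construction of~\cite[Proposition 18]{hpr16}; but since this lemma is logically prior to, and used in, the proof of Theorem~\ref{thm:regret-optipess}, I would deliberately avoid appealing to optipess strategies here so as to keep the argument free of circularity.
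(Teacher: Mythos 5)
Your argument is correct, but it is genuinely different from what the paper does: the paper offers no proof at all for this lemma and simply imports it from~\cite[Proposition 18]{hpr16}, whereas you give a self-contained existence proof. Your topological route checks out: $\Sigma$ is a countable product of finite discrete sets, hence compact and nonempty; for fixed $\tau$ the map $\sigma \mapsto \Val(\out{v_0}{\sigma}{\tau})$ is continuous because agreement on the (finitely many) histories of length at most $N$ forces the outcomes to share a prefix of $N$ edges, giving a value gap of at most $2\wmax\lambda^{N}/(1-\lambda)$; the term $B(\tau)$ is constant in $\sigma$ and bounded by $\wmax/(1-\lambda)$, so $\regret{v_0}{\sigma}$ is a supremum of continuous functions, hence lower semicontinuous, and a lower-semicontinuous function on a nonempty compact space attains its (finite) infimum via the finite-intersection property of the sublevel sets. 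The trade-off between the two approaches is worth noting. Your proof is elementary, avoids any circularity with Theorem~\ref{thm:regret-optipess}, and generalizes to any bounded valuation that is continuous in the product topology---exactly the point where, as you observe, discounting is essential and mean-payoff fails. On the other hand it is non-constructive: compactness tells you a minimizer exists but says nothing about its structure, whereas the cited Proposition 18 (and the machinery around optipess strategies) produces a minimizer of a specific finite-memory shape, which is what the complexity results in Section~\ref{sec:algos} actually need. One cosmetic remark: you quantify strategies over histories in $\histe{v_0}$, while the paper defines strategies on all histories ending in $V_\exists$; this is harmless since only the histories reachable from $v_0$ affect outcomes and regret, and any partial assignment extends arbitrarily to a full strategy, but it deserves a sentence.
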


The following upper bound on the ``local regret'' 
of strategies that are SWO will be useful.
\begin{lemma}\label{lem:local-upper}
    For all $v_0 \in V_\exists$ and for all SWO strategies $\sigma$ from $v_0$
    we have that
    \[
        \lambda^n\left(
            \cVal^{v_n}_{\lnot \sigma(h)} - \aVal^{v_n}(\sigma_{h})
        \right)
        \leq
        \cVal^{v_0} - \aVal^{v_0}
    \]
    for all histories $h = v_0 \dots v_n$.
\end{lemma}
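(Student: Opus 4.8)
The plan is to peel off the two pieces of the left-hand side separately and reduce the statement to a purely ``local'' monotonicity property of the gap $\cVal^u - \aVal^u$ along the history. First I would use that $\sigma$ is SWO: unfolding $\aVal^h(\sigma) = \Val(h) + \lambda^n \aVal^{v_n}(\sigma_h)$ and $\aVal^h = \Val(h) + \lambda^n \aVal^{v_n}$, the defining equality $\aVal^h(\sigma) = \aVal^h$ immediately gives $\aVal^{v_n}(\sigma_h) = \aVal^{v_n}$. Since $\cVal^{v_n}_{\lnot \sigma(h)} \leq \cVal^{v_n}$ (the former maximises over a smaller set of strategies), it then suffices to prove the graph-theoretic inequality $\lambda^n(\cVal^{v_n} - \aVal^{v_n}) \leq \cVal^{v_0} - \aVal^{v_0}$, where $h = v_0 \cdots v_n$ ranges over histories consistent with $\sigma$ (recall that $\sigma_h$ is only defined for $h \in \histe{v_0}(\sigma)$, so consistency is built into the statement).

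Writing $g(u) = \cVal^u - \aVal^u \geq 0$, the heart of the argument is a one-step bound $g(u) \geq \lambda\, g(v)$ for a suitable edge $u \to v$. I would read this off the Bellman optimality equations for $\cVal$ and $\aVal$ (following~\cite{zp96}, cf.\ Lemma~\ref{lem:charac-so}): for every edge one has $\cVal^u \geq w(u,v) + \lambda \cVal^v$, since $\cVal$ is a maximum at every vertex. For $\aVal$ I would use the reverse inequality $\aVal^u \leq w(u,v) + \lambda \aVal^v$, which holds at every Adam vertex (where $\aVal$ is a minimum) for \emph{any} outgoing edge, and at an Eve vertex precisely for the worst-case-optimal edges, i.e.\ those with $\aVal^u = w(u,v) + \lambda \aVal^v$. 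Subtracting the two bounds yields $g(u) \geq \lambda(\cVal^v - \aVal^v) = \lambda\, g(v)$.

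Finally I would telescope this bound along $h$. The only delicate point, and the main obstacle, is guaranteeing that each Eve move taken along $h$ is worst-case optimal, so that the one-step bound applies at Eve vertices; this is exactly where consistency with the SWO strategy $\sigma$ is needed. For each prefix $h_i = v_0 \cdots v_i$ with $v_i \in V_\exists$, the same unfolding as in the first paragraph gives $\aVal^{v_i}(\sigma_{h_i}) = \aVal^{v_i}$, and since $\sigma_{h_i}$ plays $\sigma(h_i) = v_{i+1}$ first, this expands to $\aVal^{v_i} = w(v_i, v_{i+1}) + \lambda \aVal^{v_{i+1}}$; that is, the edge $v_i \to v_{i+1}$ is worst-case optimal. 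Hence $g(v_i) \geq \lambda\, g(v_{i+1})$ holds at every $i$ (for free at Adam vertices, and via the worst-case-optimality just established at Eve vertices), and chaining these inequalities gives $\cVal^{v_0} - \aVal^{v_0} = g(v_0) \geq \lambda^n g(v_n) = \lambda^n(\cVal^{v_n} - \aVal^{v_n})$. Combined with the two reductions of the first paragraph, this yields the claimed bound.
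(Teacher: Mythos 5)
Your proof is correct, and it opens exactly as the paper's does: SWO-ness gives $\aVal^{v_n}(\sigma_h)=\aVal^{v_n}$, and $\cVal^{v_n}_{\lnot\sigma(h)}\leq\cVal^{v_n}$ because the former maximises over fewer strategies, so both arguments reduce the claim to $\lambda^n\left(\cVal^{v_n}-\aVal^{v_n}\right)\leq\cVal^{v_0}-\aVal^{v_0}$. Where you genuinely diverge is in how that gap inequality is established. The paper proves it in one global step, straight from the definitions of the history-values: $\cVal^{v_0}\geq\cVal^h=\Val(h)+\lambda^n\cVal^{v_n}$ holds for any history, and $\aVal^{v_0}=\aVal^{v_0}(\sigma)\leq\aVal^h(\sigma)=\Val(h)+\lambda^n\aVal^{v_n}$ holds because $h$ is consistent with the SWO strategy $\sigma$; subtracting the two gives the bound with no induction. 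You instead localise the argument: using the Bellman optimality equations behind Lemma~\ref{lem:charac-so} you derive the one-step bound $\cVal^{v_i}-\aVal^{v_i}\geq\lambda\left(\cVal^{v_{i+1}}-\aVal^{v_{i+1}}\right)$ --- for free at Adam vertices, and at Eve vertices because consistency with $\sigma$ forces $\aVal^{v_i}=w(v_i,v_{i+1})+\lambda\aVal^{v_{i+1}}$ --- and then telescope along $h$. Both routes are sound and rest on the same underlying fact; yours needs the local characterisation of optimal values as an extra ingredient plus an induction, but in exchange it isolates the slightly stronger and reusable observation that the discounted gap $\lambda^i\left(\cVal^{v_i}-\aVal^{v_i}\right)$ is non-increasing along every play consistent with a worst-case optimal strategy, whereas the paper's proof only extracts the endpoint inequality.
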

\begin{proof}
    We first observe that for all strategies $\sigma'$ of \eve and all histories
    $h' = v'_0 \dots v'_m$ we have that $\aVal^{h'}(\sigma') = \aVal^{h'}$ if and
    only if $\aVal^{v'_m}(\sigma'_{h'}) = \aVal^{v'_m}$. Hence, for SWO
    strategies, the latter equality always holds.
    
    The following inequalities yield the result.
    \begin{align*}
        & \cVal^{v_0} - \aVal^{v_0} & \\
        { }\geq{ } & \cVal^{v_0} - \left( \Val(h) + \lambda^n \aVal^{v_n} \right) &
        \text{def. of } \aVal^h\\
        { }\geq{ } & \left(\Val(h) + \lambda^n \cVal^{v_n} \right) -
        \left(\Val(h) + \lambda^n \aVal^{v_n} \right) & 
        \text{def. of } \cVal^h\\
        { }={ } & \lambda^n \left( \cVal^{v_n} - \aVal^{v_n} \right) & \\
        { }={ } & \lambda^n \left( \cVal^{v_n} - \aVal^{v_n}(\sigma_h) \right) &
        \text{by the argument above} \\
        { }\geq{ } & \lambda^n \left( \cVal^{v_n}_{\lnot \sigma(h)} -
        \aVal^{v_n}(\sigma_h) \right) &
        \text{defs. of } \cVal^{v_n}, \cVal^{v_n}_{\lnot \sigma(h)} \text{\qed}
    \end{align*}
\end{proof}

Using the above lemma, it is straightforward to argue
that \eve can switch to follow an SWO strategy without increasing
her regret.
\begin{lemma}\label{lem:simplify}
    Let $\swostrat$ be a SWO strategy of \eve. For all strategies $\sigma$ of
    \eve and all $v_0 \in V$, if we let
    \[
        t(v) = \left\{i \in \mathbb{N} \st \lambda^i \left( \cVal^v - \aVal^v
        \right) \leq \regret{v_0}{\sigma} \right\}
    \]
    for all $v \in V_\exists$
    then $\regret{v_0}{\sigma'} \leq
    \regret{v_0}{\sigma}$ where $\sigma' = \switch{\sigma}{t}{\swostrat}$.
\end{lemma}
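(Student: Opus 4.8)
The plan is to reduce everything to the local characterization of regret. By Lemma~\ref{lem:reg-by-local}, $\regret{v_0}{\sigma'}$ equals the supremum, over histories $h = v_0 \cdots v_n \in \histe{v_0}(\sigma')$, of the local terms $\lambda^n(\cVal^{v_n}_{\lnot \sigma'(h)} - \aVal^{v_n}(\sigma'_h))$. It therefore suffices to bound each such term by $\regret{v_0}{\sigma}$, and I would do this by splitting on whether $h$ is switched. Throughout I use that $t(v)$ is the least $i$ with $\lambda^i(\cVal^v - \aVal^v) \leq \regret{v_0}{\sigma}$ (well defined since $\cVal^v \geq \aVal^v$ and the quantity is non-increasing in $i$), so that membership of an index in this set yields the displayed inequality.

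For a \emph{switched} $h$, let $\ell$ be the first index triggering the switch, so that $v_\ell \in V_\exists$, $\ell \geq t(v_\ell)$, and hence $\lambda^\ell(\cVal^{v_\ell} - \aVal^{v_\ell}) \leq \regret{v_0}{\sigma}$. Writing $h = h_1 \cdot h_2$ with $h_1 = v_0 \cdots v_\ell$, I observe that every continuation of $h$ stays switched, so $\sigma'_h = \swostrat_h = (\swostrat_{h_1})_{h_2}$, and that tails of SWO strategies are again SWO (the observation opening the proof of Lemma~\ref{lem:local-upper}); thus $\swostrat_{h_1}$ is a SWO strategy from $v_\ell$. Applying Lemma~\ref{lem:local-upper} to $\swostrat_{h_1}$ and the history $h_2$ of length $n-\ell$, and then multiplying through by $\lambda^\ell$, gives $\lambda^n(\cVal^{v_n}_{\lnot \sigma'(h)} - \aVal^{v_n}(\sigma'_h)) \leq \lambda^\ell(\cVal^{v_\ell} - \aVal^{v_\ell}) \leq \regret{v_0}{\sigma}$, where I use $\swostrat_{h_1}(h_2) = \sigma'(h)$ and $(\swostrat_{h_1})_{h_2} = \sigma'_h$.

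For an \emph{unswitched} $h$, every prefix is unswitched too, so $h$ is consistent with $\sigma$ and $\sigma'(h) = \sigma(h)$. Lemma~\ref{lem:reg-by-local} applied to $\sigma$ already yields $\lambda^n(\cVal^{v_n}_{\lnot \sigma(h)} - \aVal^{v_n}(\sigma_h)) \leq \regret{v_0}{\sigma}$, so it is enough to prove the monotonicity $\aVal^{v_n}(\sigma'_h) \geq \aVal^{v_n}(\sigma_h)$: replacing the residual $\sigma_h$ by $\sigma'_h$, which mimics $\sigma_h$ until switching to the SWO strategy $\swostrat_h$, must not decrease the antagonistic value. This monotonicity is the main obstacle. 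I would prove it directly from the definition of $\aVal$: fix an adversary $\tau$ and examine $\out{v_n}{\sigma'_h}{\tau}$. If it never switches it coincides with $\out{v_n}{\sigma_h}{\tau}$, whose value is at least $\aVal^{v_n}(\sigma_h)$. Otherwise it switches after some prefix $g$ (consistent with $\sigma_h$) ending at a vertex $v$, and from $v$ the SWO strategy $\swostrat$ guarantees value at least $\aVal^v$, so $\Val(\out{v_n}{\sigma'_h}{\tau}) \geq \Val(g) + \lambda^{|g|}\aVal^v$. Since the adversary agreeing with $\tau$ on $g$ and then playing antagonistically against $\sigma_{h\cdot g}$ witnesses $\aVal^{v_n}(\sigma_h) \leq \Val(g) + \lambda^{|g|}\aVal^v(\sigma_{h\cdot g}) \leq \Val(g) + \lambda^{|g|}\aVal^v$, the switched outcome is also at least $\aVal^{v_n}(\sigma_h)$. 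Taking the infimum over $\tau$ gives the monotonicity, and combining the two cases completes the proof.
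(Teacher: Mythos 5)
Your proof is correct and follows essentially the same route as the paper's: decompose the supremum of Lemma~\ref{lem:reg-by-local} into switched and unswitched histories, bound the switched terms by combining Lemma~\ref{lem:local-upper} (applied from the first switching vertex) with the defining inequality of $t$, and bound the unswitched terms by the corresponding local-regret terms of $\sigma$. The one place you go beyond the paper is the unswitched case, where you explicitly prove the monotonicity $\aVal^{v_n}(\sigma'_h) \geq \aVal^{v_n}(\sigma_h)$; the paper silently absorbs this step into its definition of the quantity $U$, so your case analysis on whether the outcome against a fixed adversary ever switches is a welcome, and correct, elaboration rather than a deviation.
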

\begin{proof}
    Observe that a history consistent with $\sigma'$ is a
    switched history if and only if it has a prefix $v_0 \dots v_n \in
    \histe{v_0}(\sigma')$ such that
    \begin{equation}\label{eqn:cond-simplify}
        \lambda^n\left(
            \cVal^{v_n} - \aVal^{v_n}
        \right)
        \leq
        \regret{v_0}{\sigma}.
    \end{equation}

    Let $S_{\sigma'}$ denote the maximal \emph{local regret} incurred by
    switched histories consistent with $\sigma'$
    and $U$ the maximal local regret incurred by all
    unswitched histories (therefore consistent with both $\sigma$ and $\sigma'$.
    More formally,
    \[
        S_{\sigma'} = \sup_{h'}
        \lambda^n\left(
            \cVal^{v_n}_{\lnot \swostrat(h')} - \aVal^{v_n}({\swostrat_{h'})}
        \right)
    \]
    with the supremum ranging over all switched histories
    $h' = v_0 \dots v_n \in \histe{v_0}(\sigma')$.
    Additionally,
    \[
        U = \sup_{h'}
        \lambda^m\left(
            \cVal^{v'_m}_{\lnot \sigma(h')} - \aVal^{v'_m}({\sigma_{h'})}
        \right)
    \]
    with the supremum ranging over all unswitched histories $h' = \dots v'_m \in
    \histe{v_0}(\sigma) \cap \histe{v_0}(\sigma')$.
    From Lemma~\ref{lem:reg-by-local} and the definition
    of $\sigma'$ it follows that $\regret{v_0}{\sigma'} = \max(S_{\sigma'},U)$.

    Now, consider the value
    \[
        S_0 = \sup_{h'} \lambda^n\left( \cVal^{v_n} - \aVal^{v_n} \right)
    \]
    with the supremum ranging over all switched histories $h' = v_0 \dots v_n
    \in \histe{v_0}(\sigma')$ and such that no proper prefix of $h'$ is a
    switched history. (This indeed implies $h'$ is consistent with $\sigma$
    too.) From Lemma~\ref{lem:local-upper} we have that $S_{\sigma'} \leq S_0$
    and therefore $\regret{v_0}{\sigma'} \leq \max(S_0,U)$. Observe that, using
    Equation~\eqref{eqn:cond-simplify}, we obtain that $S_0 \leq
    \regret{v_0}{\sigma}$. To conclude the proof it thus suffices to show that $U
    \leq \regret{v_0}{\sigma}$.  However, it follows from
    Lemma~\ref{lem:reg-by-local} that $\regret{v_0}{\sigma} = \max(S_\sigma, U)$
    where $S_\sigma$ denotes the maximal local regret incurred by histories
    consistent with $\sigma$ and such that they have a prefix that is a switched
    history consistent with $\sigma'$. Hence, the claim holds.
\qed
\end{proof}

The above result provides us with a way of simplifying regret-minimizing
strategies: For any $v_0 \in V$ and any strategy $\sigma$ of
\eve, there is a second strategy $\sigma'$ of hers that follows $\sigma$ as
long as Equation~\ref{eqn:cond-simplify} does not
hold for the current history.
Otherwise, $\sigma'$ conclusively switches to a worst-case optimal strategy.

The following definition will be useful. We denote by $\cOpt(u)$
the set of all best-case-optimal successors of $u \in V_\exists$, i.e.
\[
    \cOpt(u) \defeq \{v \in V \st (u,v) \in E \text{ and } \cVal^u = \cVal^{uv}\}.
\]
\begin{proof}[of Theorem~\ref{thm:regret-optipess}]
    Lemma~\ref{lem:reg-min-strats} tells us that for all $v_0 \in V$ there
    exists a strategy $\sigma_0$ of \eve such that $\regret{v_0}{\sigma_0} =
    \Regret{v_0}$. Let $\sbwostrat$ be a SBWO strategy of \eve. From
    Lemma~\ref{lem:simplify} we get that the strategy $\sigma =
    \switch{\sigma_0}{t}{\sbwostrat}$, where for all $v \in V_\exists$ we have
    \[
        t(v) = 
        \left\{ i \in \mathbb{N} \st \lambda^i \left( \cVal^v - \aVal^v \right)
        \leq \Regret{v_0} \right\},
    \]
    is also such that $\regret{v_0}{\sigma} = \Regret{v_0}$. We will now
    argue that $\sigma$ is an optipess strategy. In fact, we will prove
    something slightly stronger: for all SBO strategies $\sbostrat$ of \eve, for
    all $h = v_0 \dots v_n \in \histe{v_0}(\sigma)$ such that $h$ is an
    unswitched history, we have that
    \begin{enumerate}
        \item $|\cOpt(v_n)| = 1$ and
        \item $\sigma(h) = \sbostrat(h)$.
    \end{enumerate}
    Towards a contradiction, assume that this is not the case. That is, there
    exists such an $h$ for which $|\cOpt(v_n)| > 1$ or 
    $|\cOpt(v_n)| = 1$ but $\sigma(h) \neq
    \sbostrat(h)$ for all SBO strategies $\sbostrat$ of \eve. In the latter case,
    by Lemma~\ref{lem:charac-so}, we must have that $\sigma(h) \not\in
    \cOpt(v_n)$. It should be clear that in either case we have that
    $\cVal_{\lnot \sigma(h)}^{v_n} = \cVal^{v_n}$. We thus have that
    \begin{align*}
        \lambda^{-n} \regret{v_0}{\sigma}
        & \geq \cVal^{v_n}_{\lnot \sigma(h)} - \aVal^{v_n}(\sigma_h) &
        \text{by Lemma~\ref{lem:reg-by-local}}\\
        & = \cVal^{v_n} - \aVal^{v_n}(\sigma_h) & \text{see arguments above}\\
        & \geq \cVal^{v_n} - \aVal^{v_n} & \text{by definition of } \aVal.
    \end{align*}
    By assumption, we have that $h$ is an unswitched history and therefore
    \[
        \lambda^n\left(\cVal^{v_n} - \aVal^{v_n}\right)
        > \Regret{v_0}.
    \]
    The above inequalities thus imply that the regret of $\sigma$ is
    strictly larger than $\Regret{v_0}$, which is a contradiction.\qed
\end{proof}

\section{Proof of Theorem~\ref{thm:fundamental-property}}
\begin{proof}[of Theorem~\ref{thm:fundamental-property}]
    Let $\sigma$ be a strategy of \eve and $\mathcal{D}$ be the set of
    histories $h$ such that the sequence of inequalities $\aVal^{h}(\sigma) \leq
    \cVal^{h}(\sigma) \leq \aVal^{h} \leq  \acVal^{h}$ holds with at least one
    inequality being strict. Denote by $sp(\mathcal{D})$ be the (possibly
    infinite) subset of $\mathcal{D}$ that contains all the shortest prefixes of
    the histories in $\mathcal{D}$, that is
    \[
        sp(\mathcal{D}) \defeq \left\lbrace h \in \mathcal{D} 
        \:\middle|\: \forall h' \in \mathcal{D} \setminus \{h\}:
        h' \not\prefeq h \right\rbrace
    \] 

    We now define a strategy $\sigma'$ for all histories $h = v_0 \dots v_n$ such
    that $v_n \in V_\exists$ as follows
    \[
        \sigma'(h) = \begin{cases}
            \sbwostrat_{h'}(h) & \text{if there exists } h' \in sp(\mathcal{D})
            \text{ such that } h' \prefeq h\\
            \sigma(h) & \text{otherwise.}
        \end{cases}
    \]
    (Note that $\sigma'$ is well-defined as all the elements in
    $sp(\mathcal{D})$ are incomparable.)
    
\medskip    
    
    Intuitively, the strategy $\sigma'$ follows $\sigma$ until the above
    sequence of inequalities holds --- which, essentially,
    means that one can do better than $\sigma$ from that point onward. Then,
    $\sigma'$ switches to follow an SBWO strategy forever.

    We first show that $\sigma$ is weakly dominated by $\sigma'$.  To do so, we
    will compare $\sigma$ and $\sigma'$ with regard to the strategies of \adam.
    Let $\tau$ be a strategy of \adam.  Let $\pi$ be the outcome play of
    $\sigma$ and $\tau$, and $\pi'$ the one of $\sigma'$ and $\tau$.  If $\pi=
    \pi'$, then clearly $\Val(\pi) = \Val(\pi')$.  Otherwise, if $\pi \neq
    \pi'$, they share a longest common prefix $h= v_0\dots v_n$.  As $\tau$ is
    fixed, we know that $v_n \in V_\exists$ and that $\sigma(h) \neq
    \sigma'(h)$.  By definition of $\sigma'$, it means that there exists a
    prefix $h'$ of $h$ such that $h' \in sp(\mathcal{D})$.  Thus, we have that
    $\cVal^{h'}(\sigma) \leq \aVal^{h'}$ and consequently $\Val(\pi) \leq
    \aVal^{h'}$.  On the other hand, from $h'$ we know that $\sigma'$ behaves
    like $\sbwostrat_{h'}$, thus, $\Val(\pi') \geq \aVal^{h'}$.  Hence,
    $\Val(\pi) \leq \Val(\pi')$.  This is true for any strategy of \adam, thus
    $\sigma$ is indeed weakly dominated by $\sigma'$.

\medskip

    We now show that $\sigma'$ is admissible.  Towards this, we use the
    characterization from Lemma~\ref{lem:adm_char}: \emph{A strategy $\sigma$
    of \eve is admissible if and only if for every history $h \in
    \histe{v_0}(\sigma)$ the following holds: either $\cVal^{h}(\sigma) >
    \aVal^h$ or $\aVal^h(\sigma)=\cVal^h(\sigma)=\aVal^h = \acVal^h$.} Let
    $h=v_0\dots v_n$ be a history consistent with $\sigma'$ such that $v_n \in
    V_{\exists}$.
    \begin{itemize}
        \item Assume first that there exists a prefix $h'$ of $h$ that belongs
            to $sp(\mathcal{D})$.  In that case, we know that $\sigma'$ behaves
            like $\sbwostrat_{h'}$ from $h$, thus we have $\aVal^h(\sigma')=\aVal^h$
            and $\cVal^h(\sigma')=\acVal^h$, by definition of SBWO strategies.
            If \[\acVal^h > \aVal^h,\] then
            $\sigma'$ satisfies the first part of the characterization.
            Otherwise, $\acVal^h = \aVal^h$, the second part of the
            characterization is satisfied, as we obtain immediately
            $\aVal^h(\sigma')=\cVal^h(\sigma')=\aVal^h=\acVal^h$.
            \medskip
            
        \item Assume now that $h$ has no prefix that belongs to
            $sp(\mathcal{D})$.  By definition of $\sigma'$, this means that
            $\sigma'(h') = \sigma(h)$ for all prefixes $h' \prefeq h$.  In other
            terms, $\sigma$ and $\sigma'$ agree (at least) up to $h$.  Let
            $h\pi$ be an outcome consistent with $\sigma$ such that $\Val(h\pi)=
            \cVal^h(\sigma)$ (which exists because discounted-sum games are
            well-formed).  Let $\tau$ be a strategy of \adam such that
            $\pi^{v_0}_{\sigma\tau} = h\pi$ (which exists because $h\pi$ is
            consistent with $\sigma$).  Since $\sigma$ and $\sigma'$ agree up to
            $h$, there exists $\pi'$ be such that $h\pi' =
            \pi^{v_0}_{\sigma'\tau}$.  Recall that $\sigma$ is weakly dominated
            by $\sigma'$.  As $\tau$ is fixed, we know that $\Val(h\pi) \leq
            \Val(h\pi')$.  Thus, we have that $\cVal^h(\sigma') \geq \Val(h\pi')
            \geq \Val(h\pi) = \cVal^h(\sigma)$.
                        
            Recall now that by definition of $sp(\mathcal{D})$, we know that in
            particular, it either holds that \[(A)~ \cVal^h(\sigma) > \aVal^h\] or
            \[(B)~ \aVal^h(\sigma) = \cVal^h(\sigma)= \aVal^h = \acVal^h.\]
            Suppose $(A)$ holds.  
            We thus have that $\cVal^h(\sigma') \geq \cVal^h(\sigma) > \aVal^h$,
            that is, $\cVal^h(\sigma') > \aVal^h$.  This means that $\sigma'$
            satisfies the first part of the characterization.  
            
            Finally, suppose that $(B)$ holds.
            We have 
            \begin{align*}
                & \cVal^h(\sigma') \geq \Val(h\pi') \geq \Val(h\pi)\\
                {} = {} & \cVal^h(\sigma) = \aVal^h(\sigma) = \aVal^h = \acVal^h,
            \end{align*}
            and thus
            $\cVal^h(\sigma') \geq \acVal^h$ .  Furthermore, we also know that
            $\aVal^h(\sigma') \geq \aVal^h(\sigma)$.  As by definition of the
            antagonistic value, we have $\aVal^h(\sigma') \leq \aVal^h$ and
            $\aVal^h(\sigma) = \aVal^h$, we obtain $\aVal^h(\sigma') = \aVal^h$.
            We now know that $\sigma'$ is worst-case optimal at $h$.  By
            definition of $\acVal^h$, we can conclude that $\cVal^h(\sigma')
            \leq \acVal^h$.  Since it is also true that $\cVal^h(\sigma') \geq
            \acVal^h$, we obtain $\aVal^h(\sigma') = \cVal^h(\sigma')= \aVal^h =
            \acVal^h$, that is, $\sigma'$ satisfies the second part of the
            characterization.
    \end{itemize}
    Thus, the strategy $\sigma'$ is admissible.\qed
\end{proof}

\section{On the well-formedness of discounted-sum games}\label{sec:well-formed}
In~\cite{bprs16}, the authors introduce the notion of \emph{well-formed} games,
that is, games where, for each player, and each history, there exist strategies
witnessing the antagonistic and collaborative values at this history.  They then
show that, in such games, admissible strategies can be characterized in terms of
values at any history consistent with the strategy (see
Lemma~\ref{lem:adm_char}).  It is worth noticing that for any player, it is
in fact sufficient that this player has witnessing strategies for the
antagonistic and collaborative values at any history.  We call this property
\emph{well-formedness for a player}.  In our context, we focus on the strategies
and payoffs of \eve, thus we phrase the statement as follows:

\medskip
A game is well-formed \emph{for \eve} if, for all $h \in \histe{v_0}$:
\begin{enumerate}
\item there exists a strategy $\sigma$ of \eve such that $\aVal^h(\sigma)= \aVal^h$. 
\item there exists a strategy $\sigma$ of \eve such that $\cVal^h(\sigma)= \cVal^h$. 
\end{enumerate}

\begin{lemma}\label{lem:disc-sum_well_formed}
Discounted-sum games are well-formed for \eve.
\end{lemma}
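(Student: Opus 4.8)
The plan is to fix an arbitrary history $h = v_0 \cdots v_n \in \histe{v_0}$ and to exhibit, separately, a strategy of \eve witnessing $\aVal^h$ and one witnessing $\cVal^h$. The first step is a reduction to values at a single vertex. Unfolding the definitions, $\aVal^h = \Val(h) + \lambda^{|h|}\aVal^{v_n}$ and $\cVal^h = \Val(h) + \lambda^{|h|}\cVal^{v_n}$, where $\Val(h)$ and $\lambda^{|h|}$ are fixed constants and the residual $\sigma_h$ ranges over \emph{all} strategies from $v_n$ as $\sigma$ ranges over strategies of \eve. Hence it suffices to produce strategies of \eve whose residual after $h$ attains $\aVal^{v_n}$ (resp.\ $\cVal^{v_n}$) from $v_n$; prepending the moves of $h$ and defining the strategy arbitrarily off $h$ then yields the required global strategy.

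For the antagonistic value I would invoke the positional determinacy of discounted-sum games~\cite{zp96}: there is a positional strategy $\sigma$ of \eve with $\inf_\tau \Val(\out{v_n}{\sigma}{\tau}) = \aVal^{v_n}$, the infimum being attained by a positional strategy of \adam. This is exactly the SWO condition, so a positional SWO strategy, which exists by Lemma~\ref{lem:charac-so}, satisfies $\aVal^h(\sigma) = \aVal^h$ and settles part~(1).

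The collaborative value is the substantive case, and here continuity is the key ingredient. The set of plays issued from $v_n$ is a closed, hence compact, subset of $V^\omega$ in the product topology. Since $0 < \lambda < 1$, the contribution of the suffix beyond step $N$ is bounded in absolute value by $\lambda^{N}\wmax/(1-\lambda)$, so $\Val$ is continuous on this space and therefore attains its supremum: there is a play $\pi^\star$ from $v_n$ with $\Val(\pi^\star) = \cVal^{v_n}$. Letting $\sigma$ follow $\pi^\star$ at \eve's vertices, \adam can cooperate to realise $\pi^\star$, so $\sup_\tau \Val(\out{v_n}{\sigma}{\tau}) \geq \Val(\pi^\star) = \cVal^{v_n}$, while the reverse inequality is immediate from the definition of $\cVal^{v_n}$. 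Thus $\cVal^{v_n}(\sigma) = \cVal^{v_n}$ and, after prepending $h$, $\cVal^h(\sigma) = \cVal^h$, proving part~(2); equivalently, this witness is the SBO strategy of Lemma~\ref{lem:charac-so}.

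I expect the main obstacle to be the collaborative case: one must argue that the supremum over plays is genuinely \emph{attained} rather than merely approached, which is precisely what the compactness of $V^\omega$ together with the continuity of the discounted sum provides, and then check that a single strategy of \eve suffices to force the cooperative outcome once \adam plays along $\pi^\star$. The antagonistic case is comparatively routine, resting on the already-established positional determinacy recalled in Lemma~\ref{lem:charac-so}.
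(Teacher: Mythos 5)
Your proof is correct, and for the antagonistic value it coincides with the paper's: both simply invoke the positional SWO strategies of Lemma~\ref{lem:charac-so}. Where you diverge is the collaborative case. The paper's entire proof is a one-line appeal to Lemma~\ref{lem:charac-so}: a positional SBO strategy $\sigma$ satisfies $\cVal^h(\sigma)=\cVal^h$ for every history \emph{by definition}, and that is literally condition~(2) --- no attainment of the supremum by an actual play is required, because $\cVal^h(\sigma)$ is itself defined as a supremum over strategies of \adam. (The paper's text names ``SBWO'' as the witness for condition~(2), which is evidently a slip for SBO, since an SBWO strategy only guarantees $\cVal^h(\sigma)=\acVal^h\leq\cVal^h$; your version gets this right.) Your compactness-and-continuity argument, producing a play $\pi^\star$ from $v_n$ that actually achieves $\cVal^{v_n}$, is correct but proves a strictly stronger statement than the lemma needs; it is essentially the content of the separate lemma in Section~\ref{sec:well-formed} that the paper proves later in order to justify, in Theorem~\ref{thm:optipess_adm}, the existence of a strategy $\tau$ of \adam whose outcome against $\sbostrat$ has value exactly $\cVal^h$. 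So your route buys a self-contained witness play and, as a by-product, that stronger realizability fact, at the cost of machinery the lemma itself does not require; the paper's route buys brevity, since Lemma~\ref{lem:charac-so} already packages the needed existence statements. You note at the end that your witness is the SBO strategy, so the two proofs meet there.
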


\begin{proof}
This can be seen as a direct implication of Lemma~\ref{lem:charac-so}: Indeed,
the SWO and SBWO strategies are good witnesses for conditions $1$ and $2$,
respectively. 
\qed \end{proof}

Lemma~\ref{lem:adm_char} then directly follows from~\cite[Theorem 11]{bprs16}.

\medskip

Note that well-formedness for \eve, in general, does not guarantee the existence
of a play that witnesses the collaborative value at any history.  However, in
discounted-sum games, this is indeed the case.  In the proof of
Theorem~\ref{thm:optipess_adm}, we use the fact that there exists a play
consistent with $\sbostrat$ that has such value, thus also a strategy $\tau$ of
\adam such that the outcome of $\sbostrat$ and $\tau$ is exactly this play.  The
argument relies on the fact that discounted-sum value functions are
\emph{continuous}.  We recall a few useful notions before proving the property.

\medskip
Considering a discounted-sum game $\calG= (V,v_0, V_\exists,E,w,\lambda)$.  The
set $V$ is endowed with the discrete topology, and thus the set $V^\omega$ with
the product topology.  Then, a sequence of plays $(\pi_n)_{n \in \mathbb{N}}$
is said to converge to a play $\pi = \lim_{n \to \infty} \pi_n$, if every
prefix of $\pi$ is a prefix of all but finitely many of the $\pi_n$.
It is well known that the discounted-sum value function is
continuous, that is, whenever a sequence of plays $(\pi_n)_{n \in \mathbb{N}}$
converges to a play $\pi$, we have $\lim_{n \to \infty} \Val(\pi_n) =
\Val(\pi)$.

\begin{lemma}
    For any history $h=v_0 \dots v_n$ consistent with $\sbostrat$, there exists
    a strategy $\tau$ of \adam such that $h \prefeq \out{v_0}{\sbostrat}{\tau}$
    and $\Val(\out{v_0}{\sbostrat}{\tau})= \cVal^h$.
\end{lemma}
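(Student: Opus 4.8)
The plan is to show that the supremum defining $\cVal^h$ is actually attained by a genuine infinite play extending $h$, and then to read off Adam's strategy $\tau$ from that play.

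First I would reduce the statement to exhibiting a single play. Since $\sbostrat$ is SBO we have $\cVal^h(\sbostrat) = \cVal^h$, so by the definition of the collaborative value, $\cVal^h$ equals the supremum of $\Val(\pi)$ over all plays $\pi$ that start in $v_0$, extend $h$, and are consistent with $\sbostrat$; here I use that $\sbostrat$ is positional, so $\sbostrat_h = \sbostrat$, and that each continuation from $v_n$ consistent with $\sbostrat$ glues onto $h$ (which is itself consistent with $\sbostrat$) to form such a play, realizable against a suitable $\tau$. If I can produce one play $\pi^\star$ of this form with $\Val(\pi^\star) = \cVal^h$, then any strategy $\tau$ of Adam that, at every Adam-owned vertex occurring along $\pi^\star$, selects the successor prescribed by $\pi^\star$ (and plays arbitrarily elsewhere) satisfies $\out{v_0}{\sbostrat}{\tau} = \pi^\star$. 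This immediately gives $h \prefeq \out{v_0}{\sbostrat}{\tau}$ and $\Val(\out{v_0}{\sbostrat}{\tau}) = \cVal^h$, as required.

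The core step is therefore to show the supremum is a maximum, and here I would combine compactness with the stated continuity of $\Val$. I pick a sequence of plays $(\pi_k)_k$, each extending $h$ and consistent with $\sbostrat$, with $\Val(\pi_k) \to \cVal^h$. Because $V$ is finite, $V^\omega$ is compact in the product topology, so $(\pi_k)_k$ admits a convergent subsequence with limit $\pi^\star$. The set of plays extending $h$ and consistent with $\sbostrat$ is closed: membership is determined prefix-by-prefix, and each such condition (the $i$-th vertex agrees with $h$ for $i \leq n$, and $\sbostrat$ is respected at every Eve-owned vertex) is clopen, so their intersection is closed and hence contains $\pi^\star$. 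Continuity of $\Val$ then yields $\Val(\pi^\star) = \lim_k \Val(\pi_k) = \cVal^h$, furnishing the desired play.

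The main obstacle I expect is precisely this attainment of the supremum: for infinite plays a supremum of values need not in general be realized, which is exactly why the continuity/compactness argument is essential. An equivalent route would exploit positionality of $\sbostrat$ directly, noting that fixing $\sbostrat$ turns the game into a one-player discounted-sum maximization over a finite graph whose optimum is attained by a positional lasso strategy of \adam, as in the proof of Lemma~\ref{lem:cval}; I would keep the compactness version since the ambient text already introduces continuity. A minor point to handle carefully is the bookkeeping $\cVal^h = \Val(h) + \lambda^{|h|}\cVal^{v_n}$ and the gluing of continuations onto $h$, but both are immediate from the definitions of $\Val$ and of consistency.
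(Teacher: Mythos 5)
Your proposal is correct and follows essentially the same route as the paper: both establish that the supremum defining $\cVal^h$ is attained by taking a sequence of consistent plays whose values converge to it, extracting a convergent subsequence, and invoking continuity of $\Val$, then reading off $\tau$ from the limiting play. The only cosmetic difference is that you cite sequential compactness of $V^\omega$ abstractly, whereas the paper constructs the convergent subsequence explicitly by a K\H{o}nig-style prefix-partitioning argument (and separately glues the continuation from $v_n$ back onto $h$, which you fold into working with plays extending $h$ directly).
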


\begin{proof}

From Lemma~\ref{lem:charac-so}, we know that $\cVal^h(\sbostrat)=\cVal^h$.
Thus, we have, by definition of the collaborative value, that
\[
    \cVal^h = \Val(h) + \lambda^{|h|} \cdot \sup_\tau
    \Val(\out{v_n}{\sbostrat_h}{\tau}).
\]
As $\sbostrat$ is positional, we have $\sbostrat_h = \sbostrat$, thus we can
write 
\[
    \cVal^h = \Val(h) + \lambda^{|h|} \cdot \sup_\tau
    \Val(\out{v_n}{\sbostrat}{\tau}).
\]
Let $d \defeq \sup_\tau \Val(\out{v_n}{\sbostrat}{\tau})$.  We first show that
there exists $\tau'$ such that $\Val(\out{v_n}{\sbostrat}{\tau'})=d$.

\medskip

Since $d = \sup_\tau \Val(\out{v_n}{\sbostrat}{\tau})$, we know that there
exists a sequence $(\pi_n)_{n\in \mathbb{N}}$ of outcomes consistent with
$\sbostrat$ such that $\limsup_{n \to \infty} \Val(\pi_n) = d$.  Since the
discounted-sum value function is continuous, we also have that $\lim_{n \to
\infty} \Val(\pi_n) = d$.

\medskip

Suppose the sequence $(\pi_n)_{n\in \mathbb{N}}$ eventually stabilizes, that
is, there exists $N$ such that $\pi_n = \pi_N$ for every $n > N$.  We have
that $\lim_{n \to \infty} \Val(\pi_n) = \Val(\pi_N)$, thus $\Val(\pi_N) = d$.
Let $\tau'$ be a strategy of \adam such that $\out{v_n}{\sbostrat}{\tau'} =
\pi_N$ (which exists since $\pi_N$ is a valid outcome in $\calG$).  We indeed
obtain that $\Val(\out{v_n}{\sbostrat}{\tau'}) = d$.

\medskip

Suppose now the sequence $(\pi_n)_{n\in \mathbb{N}}$ does not eventually
stabilize, that is, for all $n$, there exists $N>n$ such that $\pi_N \neq
\pi_n$.  We construct, iteratively, a subsequence $(\pi'_k)_{k\in \mathbb{N}}$
from $(\pi_n)_{n\in \mathbb{N}}$ as follows: We start by fixing $\pi'_0 =
\pi_0$.  Recall that $V$ is a finite set.  Let $m$ be its size, we can label
the vertices $v^0, \dots, v^{m-1}$, with $v^0=v_0$.  Let $P_0 \defeq \lbrace \pi_n
\st n\in\mathbb{N} \rbrace$.  We partition the set of all $\pi_n$ according to
their prefixes of length $2$: For every $0 \leq i < m$, we define $P^i_1 \defeq
\lbrace \pi_n \st n \in \mathbb{N}, ~ v_0v^i \prefeq \pi_n \rbrace$.  As $V$
is finite and the set of outcomes in the sequence is infinite, there exists $i$
such that the set $P^i_1$ is infinite as well.  We fix $P_1$ to be such an infinite
$P^i_1$.  Let $\pi \in P_1$.  We fix $\pi'_1 = \pi$.

Suppose now that $\pi'_0$ to $\pi'_k$ are already determined, as well as the
infinite sets $P_0$ to $P_k$, and that  $v_0 \dots v'_k$ is the prefix of length
$k+1$ of $\pi'_k$.  For every $0 \leq i < m$, we define $P^i_{k+1} \defeq \lbrace
\pi_n \st \pi_n \in P_k, ~ v_0 \dots v'_k v^i \prefeq \pi_n \rbrace$.  Again,
as $V$ is finite and the set $P_k$ is infinite, there exists $i$ such that the
set $P^i_{k+1}$ is infinite as well.  Let $P_{k+1} \defeq P^i_{k+1}$.  Let $\pi \in
P_{k+1}$.  We fix $\pi'_{k+1} = \pi$.

The subsequence $(\pi'_k)_{k\in \mathbb{N}}$ is now well defined and has the
following property: for each $N \in \mathbb{N}$, each prefix $h_N$ of length
$N+1$ of $\pi'_N$, and $k \geq N$, we have $h_N \prefeq \pi'_k$.  Let $\pi$
be the outcome such that $h_N \prefeq \pi$ for all $N \in \mathbb{N}$ (this
outcome is well defined as $h_N \prefeq h_{N+1}$ for all $N \in \mathbb{N}$).
By construction of $\pi$, we have that $\lim_{k \to \infty} \pi'_k = \pi$.
As $(\pi'_k)_{k\in \mathbb{N}}$ is a subsequence of $(\pi_n)_{n\in
\mathbb{N}}$, the sequence $(\Val(\pi'_k))_{k\in \mathbb{N}}$ is a
subsequence of $(\Val(\pi_n))_{n\in \mathbb{N}}$, hence:
\[
    \lim_{k\to \infty} \Val(\pi'_k) = \lim_{n \to \infty} \Val(\pi_n) = d.
\]

Since the discounted-sum value function is continuous, we also have $\Val(\pi)=
\lim_{k \to \infty} \Val(\pi'_k)$.  Thus we get $\Val(\pi) = d$.  Let $\tau'$
be a strategy of \adam such that $\out{v_n}{\sbostrat}{\tau'} = \pi$ (which
exists since $\pi$ is a valid outcome in $\calG$).  We indeed obtain that
$\Val(\out{v_n}{\sbostrat}{\tau'}) = d$.

\medskip

We now conclude the proof by exhibiting $\tau$ such that
$\Val(\out{v_0}{\sbostrat}{\tau})= \cVal^h$: We already know that $h$ is
consistent with $\sbostrat$.  This means in particular that there exists a
strategy $\tau''$ of \adam such that $h \prefeq \out{v_0}{\sbostrat}{\tau''}$.
Let now $\tau$ be the strategy of \adam such that:
\[
    \tau(h') \defeq \begin{cases}
        \tau'(h') &\text{if } h \prefeq h',\\
        \tau''(h') & \text{otherwise.}
    \end{cases}
\]
It is easy to see that $\out{v_0}{\sbostrat}{\tau} = h\pi$.  Finally,
$\Val(h\pi)= \Val(h) + \lambda^{|h|} \cdot \Val(\pi) = \Val(h) + \lambda^{|h|}
\cdot d = \sup_\tau \Val(\out{v_n}{\sbostrat}{\tau}) =\cVal^h$.
\qed \end{proof}

\section{Proof of Lemma~\ref{lem:replace-cycle}}
\begin{proof}[of Lemma~\ref{lem:replace-cycle}]
    We will suppose that neither inequality holds and derive a
    contradiction.

    Let $k$ and $\ell$ be the lengths of $\alpha,\gamma$ and $\beta$,
    respectively.  On the one hand, we have that $\Val(\alpha^2 \cdot \beta) >
    \Val(P)$. This is equivalent to the following.
    \begin{align}
        & \Val(\alpha^2) + \lambda^{2k}\Val(\beta) > \Val(\alpha) + \lambda^k
        \Val(\beta) + \lambda^{k + \ell} \Val(\gamma) \nonumber \\
        \iff & \lambda^k \Val(\alpha) + \lambda^{2k} \Val(\beta) >
        \lambda^k \Val(\beta) + \lambda^{k + \ell} \Val(\gamma) \nonumber \\
        \iff & \Val(\alpha) + \lambda^k \Val(\beta) > \Val(\beta) + \lambda^\ell
        \Val(\gamma) \nonumber \\
        \iff & \Val(\alpha) > (1-\lambda^k)\Val(\beta) + \lambda^\ell
        \Val(\gamma). \label{ineq:first}
    \end{align}
    On the other hand, we have that $\Val(\beta \cdot \gamma^2) > \Val(P)$. The
    latter holds if and only if the following does.
    \begin{align*}
        & \Val(\beta) + \lambda^\ell \Val(\gamma)^2 > \Val(\alpha) + \lambda^k
        \Val(\beta) + \lambda^{k + \ell} \Val(\gamma)\\
        \iff & (1-\lambda^k) \Val(\beta) + \lambda^\ell \Val(\gamma^2) >
        \Val(\alpha) + \lambda^{k + \ell} \Val(\gamma)\\
        \iff & (1 - \lambda^k) \Val(\beta) + \lambda^\ell \Val(\gamma) +
        \lambda^{k + \ell} \Val(\gamma) > \Val(\alpha) + \lambda^{k + \ell}
        \Val(\gamma)\\
        \iff & (1 - \lambda^k) \Val(\beta) + \lambda^\ell \Val(\gamma) >
        \Val(\alpha).
    \end{align*}
    The last inequality is already in clear contradiction with
    Inequality~\ref{ineq:first}.\qed
\end{proof}

\section{Representing and comparing long-history values}\label{sec:posslp}
Presently, we provide a brief
discussion on succinctly encoded (rational) numbers. In this work we have
assumed that all weights labeling edges in our game are given as
binary-encoded numbers.  The discount factor, $\lambda$, we also assume is given
in binary.  That is, $\lambda$ is given as a pair of binary-encoded natural
numbers $p,q \in \mathbb{N}$ such that $q > 0$ and $p/q = \lambda$. In
Section~\ref{sec:algos} we deal with numbers that seemingly do not admit such
classical representations.

Besides encoding a number in binary, one can also consider polynomials (and a
binary-encoded valuation of its variables), or \emph{arithmetic circuits} as
representations for numbers (see, e.g.,~\cite{ab09}). A number $P =
a_n^{e_1} + \dots + a_n^{e_n}$ where $a_i \in \mathbb{Z}$ and $e_i \in
\mathbb{N}$ for all $1 \leq i \leq n$, for instance, may be such that $P(
\vec{a}, \vec{e}) \geq 2^{2^n}$ while being representable with a list of
binary-encoded numbers using at most $n^2$ bits. An arithmetic circuit is an
even more succinct representation. Formally, such a circuit is a rooted directed
acyclic graph whose internal nodes are labelled with operations from
$\{+,-,\times\}$ and whose leaves are labelled with binary-encoded integers.
Determining whether a number given as an arithmetic circuit is positive is known
as the \PosSLP~problem and has been shown to be decidable in the fourth level of
the counting hierarchy by Allender et al.~\cite{abkm09}.

In this work, because of the discount factor, when writing formulas for the
discounted-sum value of long histories, we may in fact need to use division.
Concretely, to determine whether the value of a history is positive one may write
down the following inequality
\[
    \sum^m_{i=1} \left(\frac{p}{q}\right)^{e_i} \left(\frac{a_i}{b_i}\right) > 0
\]
where $b_i,e_i \in \mathbb{N}$, $q_i > 0$, and $a_i \in \mathbb{Z}$ for all $1
\leq i \leq m$. However, we can remove this limited use of division by doing the
following. Let $E \defeq \max\{ e_i \st 1 \leq i \leq m\}$ and $B_i \defeq
\prod\{b_j \st 1 \leq j \leq m, j \neq i\}$. Then the above inequality
holds if and only if the following holds
\[
    \sum^m_{i=1} B_i b_i p^{e_i} q^{E - e_i} > 0.
\]

In the context of this work, 
the main application of the arithmetic-circuit-encoding discussed here is
to express the discounted-sum value of a long history $\alpha \cdot \beta^k
\cdot \gamma$ as follows
\[
    \Val(\alpha\cdot\beta^k\cdot\gamma)
    = \Val(\alpha) + \lambda^{|\alpha|} \frac{\Val(\beta)}{1 -
        \lambda^{|\beta|}}(1 -
    \lambda^{|\beta|k}) + \lambda^{|\alpha|+|\beta|k}\Val(\gamma)
\]
Note that while $\Val(\alpha)$,
$\Val(\beta)$, and $\Val(\gamma)$ can be represented using binary rationals,
this is not the case for $\lambda^{|\beta|k}$ in general.

\section{Upper-bounding \(t(v)\) for optipess strategies}\label{sec:boundopti}
We will now prove the following bound on the finite values of the threshold
function for optipess strategies:
\begin{lemma}\label{lem:bound-threshold-optipess}
    For all optipess strategies $\sigma$ of \eve with
    threshold function $t$ we have that $t(v)$ is at most exponential
    for all $v \in V_\exists$ with $t(v) \in \mathbb{N}$.
\end{lemma}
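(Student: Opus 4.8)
The plan is to turn the defining inequality of the threshold function into an explicit logarithmic bound and then estimate each ingredient. Fix $v \in V_\exists$ with $t(v) \in \mathbb{N}$. If $\cVal^v = \aVal^v$ then $t(v) = 0$, so assume $\cVal^v > \aVal^v$; then $\Regret{v_0} > 0$, since otherwise $\lambda^i(\cVal^v - \aVal^v)$ never drops to $0$ and $t(v)$ would be infinite. By minimality, the index $i = t(v)-1$ violates the defining inequality, i.e.\ $\lambda^{t(v)-1}(\cVal^v - \aVal^v) > \Regret{v_0}$, which rearranges (using $0<\lambda<1$, so $\ln(1/\lambda) > 0$) to
\[
  t(v) \;<\; 1 + \frac{\ln(\cVal^v - \aVal^v) + \ln\!\bigl(1/\Regret{v_0}\bigr)}{\ln(1/\lambda)}\enspace.
\]
It thus suffices to bound the three quantities on the right so that the whole expression is at most $2^{\mathrm{poly}(|\calG|)}$.

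Two of the bounds are routine. Writing $\lambda = p/q$ in lowest terms, every discounted sum lies in $[-\wmax/(1-\lambda),\wmax/(1-\lambda)]$, so $\cVal^v - \aVal^v \le 2\wmax q/(q-p) \le 2\wmax q$ and $\ln(\cVal^v - \aVal^v)$ is polynomial in the input bit-size. For the denominator, $\ln(1/\lambda) = \ln(q/p) \ge 1 - p/q = (q-p)/q \ge 1/q$, hence $1/\ln(1/\lambda) \le q \le 2^{\mathrm{poly}(|\calG|)}$. Consequently the whole bound is exponential provided I can establish the single remaining estimate: \emph{when $\Regret{v_0} > 0$, one has $\Regret{v_0} \ge 2^{-2^{\mathrm{poly}(|\calG|)}}$}, for then the numerator is at most $2^{\mathrm{poly}}$ and the product is exponential.

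This lower bound on the regret is the heart of the matter, and the plan is to produce a \emph{short} witness of positive regret. By Lemma~\ref{lem:reg-by-local}, since $\regret{v_0}{\sigma} = \Regret{v_0} > 0$ for an optipess $\sigma$, some consistent history $h$ has positive local regret $\cVal^{v_n}_{\lnot\sigma(h)} - \aVal^{v_n}(\sigma_h) > 0$; let $h_1$, ending in $u$ and of length $\ell_1$, be one of minimal length. I would argue two things. First, that $\ell_1$ is bounded \emph{independently of the threshold function}: in the unswitched case $\sigma$ agrees with the positional strategy $\sbostrat$ along $h_1$, so any cycle may be excised without changing $u$ or Eve's move, and shortening an unswitched prefix only postpones the switch to the worst-case phase and hence cannot raise $\aVal^u(\sigma_{h_1})$, so positivity of the local regret is preserved and minimality forces $\ell_1 \le |V|$; the switched case, where $\sigma_{h_1} = \sbwostrat$ is positional, requires a companion cycle-removal argument showing that a minimal-length switched witness reaches $u$ within $O(|V|)$ steps past the switch. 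Second, that the local-regret value itself is bounded below: by Lemmas~\ref{lem:cval} and~\ref{lem:aval} the quantities $\cVal^u_{\lnot\sigma(h_1)}$ and $\aVal^u(\sigma_{h_1})$ are realised by ultimately periodic plays, hence are rationals whose denominators are at most doubly exponential, so a positive difference is at least $2^{-2^{\mathrm{poly}}}$. Combining, $\Regret{v_0} \ge \lambda^{\ell_1}\bigl(\cVal^u_{\lnot\sigma(h_1)} - \aVal^u(\sigma_{h_1})\bigr) \ge 2^{-2^{\mathrm{poly}(|\calG|)}}$.

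The hard part, and the step I expect to resist a quick argument, is obtaining the length bound on the regret witness \emph{without} invoking Lemma~\ref{lem:sup2max}, whose bound $C = 2|V| + \max\{t(v)<\infty\}$ already mentions the thresholds and would make the whole estimate circular. One must pin $\ell_1$ in terms of $|V|$, $\wmax$ and $q$ alone, and control the residual antagonistic value $\aVal^u(\sigma_{h_1})$ (which is threshold-dependent in the unswitched case) so that the final regret bound stays threshold-free; the interplay between the switched and unswitched witnesses is the delicate point. Note that even a doubly-exponentially small regret and an exponentially long witness are harmless here: the factor $\lambda^{\ell_1}$ contributes only an exponent $\ell_1\ln(1/\lambda)$, so $\ell_1 \le 2^{\mathrm{poly}}$ and $\Regret{v_0} \ge 2^{-2^{\mathrm{poly}}}$ are exactly what the exponential bound on $t(v)$ tolerates.\qed
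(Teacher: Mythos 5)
Your outer calculation is exactly the paper's: rearrange the definition of $t(v)$ into $t(v) \le \bigl(\log_2(\cVal^v-\aVal^v) - \log_2\Regret{v_0}\bigr)/\log_2(\lambda^{-1})$, bound $\cVal^v-\aVal^v$ by $2\wmax/(1-\lambda)$, and bound $1/\log_2(q/p)$ by an exponential in $|\calG|$. The problem is that the one ingredient you yourself call ``the heart of the matter''---a lower bound on $\Regret{v_0}$ when it is positive---is not actually proved. What you offer is a plan (a minimal-length history with positive local regret, a cycle-excision argument to bound its length, and a rationality argument for the size of the local-regret difference), and you candidly flag that both halves are threatened by circularity: the length bound of Lemma~\ref{lem:sup2max} and the witness sizes of Lemma~\ref{lem:aval} are stated in terms of $\max\{t(v)<\infty\}$, which is precisely the quantity being bounded, and the residual value $\aVal^{u}(\sigma_{h_1})$ depends on the thresholds as well. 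Flagging a difficulty is not the same as resolving it, so as written the proof does not go through.

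The paper escapes this trap by not reasoning about any strategy at all. It invokes \cite[Corollary~12]{hpr16}, which states that if $\Regret{v_0}>0$ then $\Regret{v_0} \ge \min\{\lambda^{|V|}(\cVal^v-\aVal^v) \mid v\in V_\exists,\ \cVal^v>\aVal^v\}$. The right-hand side mentions only the game values $\cVal^v$ and $\aVal^v$, which are realized by simple lassos (positional optimal strategies, Lemma~\ref{lem:charac-so}) and are therefore rationals with polynomially many bits; hence $\Regret{v_0}\ge 2^{-P(|\calG|)}$ for a polynomial $P$, with no reference to thresholds, switched histories, or SCDs. Plugged into your displayed inequality this immediately yields the exponential bound (your observation that even a doubly-exponentially small regret would suffice is correct but unnecessary---the true bound is singly exponential). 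If you want to salvage your route, you should either prove the strategy-free lower bound of \cite[Corollary~12]{hpr16} directly, or find a positive-local-regret witness along a \emph{positional} regret-minimal-enough strategy so that the cycle-excision and denominator arguments never see the threshold function.
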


Let us fix a value for the \emph{size of a game} with discount factor $\lambda =
p/q$. Define
\[
    |\calG| \defeq |V| + |E| + \lceil \log_2 p \rceil + \lceil \log_2 q \rceil +
    \sum_{(u,v) \in E} \lceil \log_2 w(u,v) \rceil.
\]

\subsection{A lower bound on the regret of a game.}
In~\cite{hpr16} the following lower bound on the regret of games with non-zero
regret was given.
\begin{lemma}[From~{\cite[Corollary 12]{hpr16}}]
    For all $v_0 \in V$ we have that
    if $\Regret{v_0} > 0$ then 
    \[
        \Regret{v_0} \geq \min\left\{ \lambda^{|V|} \left( \cVal^v -
        \aVal^v \right) \:\middle|\: v \in V_\exists, \cVal^v > \aVal^v \right\}.
    \]
\end{lemma}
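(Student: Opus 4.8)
The plan is to combine the local characterization of regret (Lemma~\ref{lem:reg-by-local}) with the existence of a regret-minimal strategy (Lemma~\ref{lem:reg-min-strats}), and to argue that positive regret is always witnessed \emph{early}, within the first $|V|$ steps of play. First I would check that the minimum is taken over a nonempty set, so that the right-hand side is well-defined and positive. Suppose $\cVal^v = \aVal^v$ for every $v \in V_\exists$ (recall $\cVal^v \ge \aVal^v$ always holds), and fix any positional SWO strategy $\sigma$ (Lemma~\ref{lem:charac-so}). For every $h = v_0\cdots v_n \in \histe{v_0}(\sigma)$ we have $\aVal^{v_n}(\sigma_h) = \aVal^{v_n}$ and $\cVal^{v_n}_{\lnot\sigma(h)} \le \cVal^{v_n} = \aVal^{v_n}$, so every term in Lemma~\ref{lem:reg-by-local} is $\le 0$; hence $\regret{v_0}{\sigma} \le 0$ and $\Regret{v_0} = 0$ (this is Lemma~\ref{lem:local-upper} specialized to $\cVal^{v_0}=\aVal^{v_0}$). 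Thus $\Regret{v_0} > 0$ forces some $v \in V_\exists$ with $\cVal^v > \aVal^v$, and $B \defeq \min\{\lambda^{|V|}(\cVal^v - \aVal^v) \mid v \in V_\exists,\ \cVal^v > \aVal^v\}$ is a well-defined positive number.

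Next I would take $\sigma$ to be an optipess strategy, which is regret-minimal by Theorem~\ref{thm:regret-optipess} and bipositional by definition, so $\regret{v_0}{\sigma} = \Regret{v_0} > 0$. By Lemma~\ref{lem:reg-by-local} there is a consistent history $h = v_0\cdots v_n$ whose local regret $\lambda^n(\cVal^{v_n}_{\lnot\sigma(h)} - \aVal^{v_n}(\sigma_h))$ is positive. The clean core is the case where $\sigma(h)$ is \emph{not} a collaboratively optimal move at $v_n$: then some other successor realizes $\cVal^{v_n}$, so $\cVal^{v_n}_{\lnot\sigma(h)} = \cVal^{v_n}$, while $\aVal^{v_n}(\sigma_h) \le \aVal^{v_n}$ by maximality of $\aVal^{v_n}$. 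Hence the bracket is at least $\cVal^{v_n} - \aVal^{v_n} > 0$.

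Moreover I may assume $n \le |V|$. If $v_n$ repeats a vertex inside $h$, then—exactly as in the proof of Lemma~\ref{lem:sup2max}—excising an intervening cycle that plays no role in the switching yields a shorter consistent history ending in the same vertex with the same residual behaviour $\sigma_h$, so the bracket is unchanged while the coefficient $\lambda^n$ only increases; iterating, the witness is taken along a simple path with $n \le |V|-1$. Since $0 < \lambda < 1$ gives $\lambda^n \ge \lambda^{|V|}$, we conclude $\Regret{v_0} = \regret{v_0}{\sigma} \ge \lambda^{|V|}(\cVal^{v_n} - \aVal^{v_n}) \ge B$, as required.

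The hard part is the complementary case, where the witnessing $\sigma$ plays collaboratively optimally at $v_n$ (a regret-minimal strategy typically gambles optimistically early, like an optipess strategy following its SBO component); then $\cVal^{v_n}_{\lnot\sigma(h)}$ may be strictly below $\cVal^{v_n}$ and the estimate above breaks. My plan to handle it is to track the collaboratively optimal play issued by $\sigma$ from $v_0$: as long as $\sigma$'s move stays collaboratively optimal the gap $\cVal - \aVal$ can be shown to persist, and since a positional collaborative play becomes periodic within $|V|$ steps it must reach, within $|V|$ steps, either a vertex where $\sigma$ deviates from collaborative optimality—reducing to the clean case—or a configuration from which the antagonistic side forces a loss of at least $\cVal^{v} - \aVal^{v}$, again exhibiting a length-$(\le\!|V|)$ history with bracket $\ge \cVal^{v}-\aVal^{v}$. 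Making this dichotomy airtight—in particular guaranteeing that the divergence vertex is reached within $|V|$ steps and that the bracket there dominates $\cVal^{v}-\aVal^{v}$—is the main obstacle, and is precisely where the factor $\lambda^{|V|}$ (rather than a larger power) is spent.
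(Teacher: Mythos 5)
The paper itself contains no proof of this statement: it is imported directly as ``Corollary 12'' of~\cite{hpr16}, so there is no in-paper argument to measure yours against. Your first two steps are essentially sound. Nonemptiness of the set being minimized over does follow from Lemma~\ref{lem:reg-by-local} applied to a positional SWO strategy, and in your ``clean'' case the bracket is indeed at least $\cVal^{v_n}-\aVal^{v_n}$; note, though, that you still owe an argument that $\cVal^{v_n}>\aVal^{v_n}$, i.e.\ that $v_n$ actually belongs to the set over which the minimum is taken (for an optipess $\sigma$ this can be salvaged: $\cVal^{v_n}=\aVal^{v_n}$ forces $t(v_n)=0$, so $h$ is switched, $\sigma_h$ is SWO, and the bracket would be $\le 0$).

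The ``hard part'' you flag is, however, not merely hard: it is exactly where the inequality as transcribed fails, so no completion of your dichotomy can exist. Take the game of Example~\ref{exp:adm_not_reg_min} (Figure~\ref{fig:adm_not_reg_min}) with $\lambda$ close to $1$. Here $|V|=6$ and the only $v\in V_\exists$ with $\cVal^{v}>\aVal^{v}$ is $v_0$, with $\cVal^{v_0}-\aVal^{v_0}=4\lambda^{2}/(1-\lambda)$, so the right-hand side of the lemma equals $4\lambda^{8}/(1-\lambda)$; yet the paper's own appendix computes $\Regret{v_0}=\lambda^{2}/(1-\lambda)$, which is strictly smaller whenever $4\lambda^{6}>1$ (e.g.\ $\lambda=9/10$ gives $8.1$ versus $17.2$). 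The unique regret-minimal move at $v_0$ is the collaboratively optimal one (to $v_1$), and the local regret it incurs is $\cVal^{v_0}_{\lnot v_1}-\aVal^{v_0}(\sigma)=\lambda^{2}/(1-\lambda)$, which is governed by the gap between the \emph{second-best} collaborative value and the guaranteed value, not by $\cVal^{v_0}-\aVal^{v_0}$ --- precisely the failure mode you anticipated. Worse, since $\lambda^{N}\cdot 4\lambda^{2}/(1-\lambda)\le\lambda^{2}/(1-\lambda)$ forces $N\ge\log_{1/\lambda}4\to\infty$ as $\lambda\to 1$, no exponent depending only on the graph can repair a bound phrased in terms of $\cVal^{v}-\aVal^{v}$. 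The statement must be reformulated (e.g.\ with $\cVal^{v}_{\lnot u}-\aVal^{v}$ ranging over successors $u$, or some other granularity measure on the achievable values) before any proof can go through; the downstream use in Proposition~\ref{pro:lower-bound} only needs \emph{some} exponential lower bound on positive regret and would survive such a reformulation.
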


Using the existence of positional optimal strategies in discounted-sum games
(see Lemma~\ref{lem:charac-so}) it is straightforwards to show the antagonistic
and collaborative values are always realized by a \emph{simple lasso}. That is, a
play $\alpha \cdot \gamma^\omega$ where $\alpha$ is a simple path and $\gamma$
is a simple cycle. It follows that both values are representable using
binary-number pairs that use polynomially-many bits.
\begin{lemma}
    There exists a polynomial $P$ such that for all $v \in V$ we have that
    \begin{itemize}
        \item $\aVal^v = a/b$, $\cVal^v = c/d$, and
        \item $\lceil \log_2 |a| \rceil, \lceil \log_2 |c| \rceil,
               \lceil \log_2 b \rceil, \lceil \log_2 d \rceil \in
               \mathcal{O}(P(|\calG|))$
    \end{itemize}    
    for some $a,c \in \mathbb{Z}$ and $b,d \in \mathbb{N}_{>0}$.
\end{lemma}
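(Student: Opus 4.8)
The plan is to leverage the simple-lasso observation stated just above the lemma: by Lemma~\ref{lem:charac-so} both players have positional strategies realizing \(\aVal^v\) and \(\cVal^v\), so each of these values is the value of a play of the form \(\alpha\cdot\gamma^\omega\) where \(\alpha\) is a simple path and \(\gamma\) a simple cycle; in particular \(|\alpha| < |V|\) and \(|\gamma| \le |V|\). It then suffices to show that the value of any such lasso is a rational whose numerator and denominator have bit-length polynomial in \(|\calG|\), and to take \(P\) large enough to cover both \(\aVal^v\) and \(\cVal^v\) uniformly over all \(v\).

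First I would write the discounted value of a lasso in closed form. Writing \(\lambda = p/q\), \(k = |\alpha|\), \(\ell = |\gamma|\), and summing the geometric series for the repeated cycle,
\[
\Val(\alpha\cdot\gamma^\omega) = \Val(\alpha) + \lambda^{k}\,\frac{\Val(\gamma)}{1 - \lambda^{\ell}}\enspace.
\]
Each finite value \(\Val(\alpha)\) and \(\Val(\gamma)\) is a sum of at most \(|V|\) terms of the form \(\lambda^i w\), so placing it over the common denominator \(q^{k-1}\) (resp.\ \(q^{\ell-1}\)) yields an integer numerator of absolute value at most \(|V|\cdot\max_{e\in E}|w(e)|\cdot\max(p,q)^{|V|}\); its bit-length is thus \(O(|V|\log\max(p,q) + \log|V| + \log\max_{e\in E}|w(e)|)\), which is polynomial in \(|\calG|\).

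Next I would clear the remaining denominators. Substituting \(\lambda^{k} = p^{k}/q^{k}\) and \(1-\lambda^{\ell} = (q^{\ell}-p^{\ell})/q^{\ell}\) and multiplying through by \(q^{k}(q^{\ell}-p^{\ell})\) turns the whole expression into a single fraction whose numerator is a \(\mathbb{Z}\)-linear combination of the integer numerators above with coefficients among \(q,\ p^{k},\ q^{\ell}-p^{\ell}\), and whose denominator is \(q^{k}(q^{\ell}-p^{\ell})\). Since \(k,\ell \le |V|\), every factor \(p^{k}, q^{k}, p^{\ell}, q^{\ell}\) has bit-length \(O(|V|\log\max(p,q))\), so both numerator and denominator have bit-length polynomial in \(|\calG|\), which gives the claim.

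The only point requiring care---really the sole potential obstacle---is well-definedness of the denominator: I must ensure \(1 - \lambda^{\ell} \neq 0\), equivalently \(q^{\ell} - p^{\ell} > 0\). This is immediate from \(0 < \lambda < 1\), which forces \(0 < p < q\) and hence \(p^{\ell} < q^{\ell}\). With positivity of the denominator secured, the size bounds above are entirely routine, so no genuine difficulty remains.
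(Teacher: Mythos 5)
Your proposal is correct and follows exactly the route the paper intends: it states this lemma without proof, justifying it only by the preceding remark that positional optimal strategies make both values realizable as simple lassos $\alpha\cdot\gamma^\omega$, and your argument simply fills in the routine details (geometric series, clearing the denominators $q^{k}(q^{\ell}-p^{\ell})$, and noting $p<q$ so the denominator is positive). Nothing is missing.
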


As an immediate consequence of the above lemmas we get the following lower bound
on non-zero regret values.
\begin{proposition}\label{pro:lower-bound}
    There exists a polynomial $P$ such that for all $v_0 \in V$ we have that
    if $\Regret{v_0} > 0$ then $\Regret{v_0} \geq 2^{-P(|G|)}$.
\end{proposition}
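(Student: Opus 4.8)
The plan is to combine the two lemmas stated immediately above the proposition: the lower bound on non-zero regret (from~\cite{hpr16}) reduces the problem to bounding a single term from below, and the lemma controlling the bit-size of $\aVal^v$ and $\cVal^v$ supplies the quantitative estimate needed to do so. No new idea is required beyond these two ingredients; the work is essentially bit-length bookkeeping, carried out so that every quantity stays polynomial in $|\calG|$.

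First I would apply the regret lower bound. Since $\Regret{v_0} > 0$,
\[
    \Regret{v_0} \geq \min\left\{ \lambda^{|V|}\left(\cVal^v - \aVal^v\right) \:\middle|\: v \in V_\exists,\ \cVal^v > \aVal^v \right\}\enspace.
\]
Fix a vertex $v$ attaining this minimum. Because there are at most $|V|$ candidate vertices, it suffices to produce, for this fixed $v$, a bound of the form $\lambda^{|V|}(\cVal^v - \aVal^v) \geq 2^{-P(|\calG|)}$; the minimum over vertices is then absorbed into the same polynomial.

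Next I would bound the gap. Writing $\cVal^v = c/d$ and $\aVal^v = a/b$ as in the bit-size lemma, we have $\cVal^v - \aVal^v = (cb - ad)/(bd)$. The only substantive step of the proof lives here: since $\cVal^v > \aVal^v$, the integer $cb - ad$ is strictly positive and hence at least $1$, so $\cVal^v - \aVal^v \geq 1/(bd)$. As $\lceil \log_2 b\rceil$ and $\lceil\log_2 d\rceil$ are polynomial in $|\calG|$, so is $\log_2(bd)$, yielding $\cVal^v - \aVal^v \geq 2^{-Q(|\calG|)}$ for a polynomial $Q$.

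Finally I would bound the discount term. Writing $\lambda = p/q$ in binary with $p \geq 1$ (as $\lambda > 0$), we get $\lambda^{|V|} \geq q^{-|V|} \geq 2^{-|V|\lceil \log_2 q\rceil}$, which is $2^{-R(|\calG|)}$ since $|V|$ and $\lceil\log_2 q\rceil$ are each at most $|\calG|$. Multiplying the two estimates gives $\Regret{v_0} \geq 2^{-(Q(|\calG|)+R(|\calG|))}$, and taking $P \defeq Q + R$ completes the argument. I expect no real obstacle beyond the bookkeeping; the single load-bearing observation is that a strictly positive difference of rationals with polynomially bounded denominators cannot be smaller than the reciprocal of the product of those denominators.
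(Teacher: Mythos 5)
Your proposal is correct and matches the paper's intent exactly: the paper derives the proposition as an ``immediate consequence'' of the same two lemmas (the $\lambda^{|V|}(\cVal^v-\aVal^v)$ lower bound from~\cite{hpr16} and the polynomial bit-size bound on $\aVal^v$ and $\cVal^v$), and your bookkeeping --- in particular the observation that a strictly positive difference of rationals with polynomially bounded denominators is at least $1/(bd)$ --- is precisely the argument being left implicit.
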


\subsection{An upper bound on the finite thresholds of an optipess strategy}
We first note that, for all $v \in V_\exists$, if $\cVal^v = \aVal^v$ then $t(v)
= 0$ and if $\Regret{v_0} = 0$ then $t(v) = \infty$. Hence, it suffices to bound
the threshold function for all $v \in V_\exists$ such that $\cVal^v > \aVal^v$
when $\Regret{v_0} > 0$. In the sequel we focus on an arbitrary vertex $v \in
V_\exists$ and make the assumption that those two inequalities hold.

Observe that for all $v \in V$ and all $i,r \in \mathbb{Q}_{\geq 0}$ we have
that 
\[
    \lambda^i \left( \cVal^v - \aVal^v \right) = r
\]
if and only if $i \log_2
\lambda = \log_2 r - \log_2 \left( \cVal^v - \aVal^v \right)$ if and only if
\begin{align*}
    i = {} & \frac{\log_2 r - \log_2 \left( \cVal^v - \aVal^v \right)}{\log_2
        \lambda} \\
    {} = {} & \frac{\log_2 \left( \cVal^v - \aVal^v \right) - \log_2 r}{\log_2
        \left(\lambda^{-1}\right)}.
\end{align*}

Let $\wmax \defeq \max_{(u,u')} |w(u,u')|$. It is easy to see that
\[
    \frac{-\wmax}{1-\lambda} \leq \aVal^v \leq \cVal^v \leq
    \frac{\wmax}{1-\lambda}
\]
for all $v \in V$. From the above arguments we therefore get that for all $v_0
\in V$ the following hold
\begin{align*}
    t(v) = {} & \inf \left\{ n \in \mathbb{N} \:\middle|\: \lambda^n \left(
            \cVal^v - \aVal^v \right) \leq \Regret{v_0} \right\}\\
    {} = {} & \min \left\{ n \in \mathbb{N} \:\middle|\: \lambda^n \left(
            \cVal^v - \aVal^v \right) \leq \Regret{v_0} \right\}\\
    {} \leq {} & \frac{\log_2(\cVal^v-\aVal^v) -
        \log_2 \Regret{v_0}}{\log_2\left(\lambda^{-1}\right)}\\
        {} \leq {} & \frac{\log_2\left(\frac{2\wmax}{1-\lambda}\right) +
        P(|\calG|)}{\log_2\left(\lambda^{-1}\right)}\\
    {} \leq {} & \frac{\log_2(2) + \log_2 \wmax - \log_2(1-\lambda) +
        P(|\calG|)}{\log_2\left(\lambda^{-1}\right)}\\
    {} \leq {} & \frac{1 + \log_2 \wmax + \log_2\left(\frac{q}{q-p}\right) +
        P(|\calG|)}{\log_2(q/p)}\\
    {} \leq {} & \frac{1 + \log_2 \wmax + \log_2(q) +
        P(|\calG|)}{\log_2(q/p)}\\
    {} \leq {} & \frac{P(|\calG|) + 2|\calG| + 1}{\log_2(q/p)}
\end{align*}
where $\lambda = p/q$ and $P$ is the polynomial from
Proposition~\ref{pro:lower-bound}. 

To complete the proof of the claim, it suffices to argue that $1/\log_2
\left( q/p \right)$ grows at most exponentially in the size of $\calG$. It
should be clear that $1/\log_2 \left(q/p \right)$ is maximized when $q/p$
approaches one and that therefore an exponential bound for $1/\log_2 \left(1 +
2^{-|\calG|}\right)$ implies the desired result. Finally, it is easy to verify
that
\[
    \lim_{x \to \infty} \frac{\frac{1}{\log_2 \left(1 + 2^{-x}
        \right)}}{2^{x^2}} = 0,
\]
which implies
\[
    \frac{1}{\log_2 \left(1 + 2^{-|\calG|}\right)}
    \in \mathcal{O}\left(2^{|\calG|^2}\right) \implies
    t(v) \in \mathcal{O}\left(2^{|\calG|^2} (P(|\calG|) + 2|\calG| + 1) \right)
\]
thus completing the proof. \qed

\clearpage
\setcounter{tocdepth}{5}
\tableofcontents

\end{document}